\documentclass[11pt,letterpaper]{article}

\usepackage{amsmath,amsthm,nicefrac}
\usepackage{amsfonts,amstext}
\usepackage{bm}
\usepackage{subfigure}
\usepackage{hhline}
\usepackage[table,dvipsnames]{xcolor}
\usepackage{tikz}
\usepackage{nicefrac}
\usepackage[margin=1in]{geometry}

\usepackage{amssymb}
\usepackage{mathrsfs}
\usepackage{xspace}
\usepackage{soul}
\usepackage{latexsym}
\usepackage{bbm}
\usepackage{enumitem}
\usepackage{aliascnt}

\usepackage[pdftex, plainpages = false, pdfpagelabels, 
bookmarks=false,
bookmarksopen = true,
bookmarksnumbered = true,
breaklinks = true,
linktocpage,
pagebackref,
colorlinks = true,  %
allcolors=blue!50!black,
hyperindex = true,
hyperfigures
]{hyperref} 
\usepackage{cleveref}
\usepackage{mathtools}

\crefname{table}{table}{tables}
\Crefname{table}{Table}{Tables}
\Crefname{figure}{Figure}{Figures}
\Crefname{algocf}{Algorithm}{Algorithms}
\crefname{algocfline}{line}{lines}
\crefname{theorem}{theorem}{theorems}
\Crefname{theorem}{Theorem}{Theorems}
\crefname{lemma}{lemma}{lemmas}
\Crefname{lemma}{Lemma}{Lemmas}
\crefname{claim}{claim}{claims}
\Crefname{invariant}{Invariant}{Invariants}
\Crefname{claim}{Claim}{Claims}
\crefname{fact}{fact}{facts}
\Crefname{fact}{Fact}{Facts}
\crefname{corollary}{corollary}{corollaries}
\Crefname{corollary}{Corollary}{Corollaries}
\crefname{assumption}{assumption}{assumptions}
\Crefname{assumption}{Assumption}{Assumptions}
\crefname{subclaim}{subclaim}{subclaims}
\Crefname{subclaim}{Subclaim}{Subclaims}
\crefname{defn}{definition}{definitions}
\Crefname{defn}{Definition}{Definitions}
\crefname{remark}{remark}{remarks}
\Crefname{remark}{Remark}{Remarks}
\crefname{appendix}{appendix}{appendices}
\Crefname{appendix}{Appendix}{Appendices}

\usepackage{setspace}

\allowdisplaybreaks

\newtheorem{theorem}{Theorem}[section]
\newaliascnt{lemma}{theorem}
\newtheorem{lemma}[lemma]{Lemma}
\aliascntresetthe{lemma}
\newaliascnt{claim}{theorem}

\aliascntresetthe{claim}
\newaliascnt{fact}{theorem}

\aliascntresetthe{fact}
\newaliascnt{corollary}{theorem}
\newtheorem{corollary}[corollary]{Corollary}
\aliascntresetthe{corollary}
\newaliascnt{assumption}{theorem}

\aliascntresetthe{assumption}

\newaliascnt{subclaim}{theorem}

\aliascntresetthe{subclaim}

\theoremstyle{definition}
\newaliascnt{defn}{theorem}
\newtheorem{defn}[defn]{Definition}
\aliascntresetthe{defn}

\theoremstyle{remark}
\newaliascnt{remark}{theorem}

\aliascntresetthe{remark}

\usepackage[font={small,it}]{caption}

\usepackage[ruled,vlined,noend,linesnumbered]{algorithm2e}

\SetCommentSty{mycommfont}

\newcommand{\cA}{\mathcal{A}}
\newcommand{\cB}{\mathcal{B}}
\newcommand{\cC}{\mathcal{C}}

\newcommand{\cM}{\mathcal{M}}

\newcommand{\cT}{\mathcal{T}}

\newcommand{\ceil}[1]{\left\lceil #1 \right\rceil}

\newcommand{\sse}{\subseteq}
\newcommand{\eps}{\varepsilon}

\newcommand{\nf}{\nicefrac}

\newcommand{\ignore}[1]{}

\newcommand{\initOneLiners}{%
    \setlength{\itemsep}{0pt}
    \setlength{\parsep }{0pt}
    \setlength{\topsep }{0pt}
}

\DeclareMathOperator{\poly}{poly}

\DeclareMathOperator{\cost}{cost}
\DeclareMathOperator{\Ball}{Ball}

\DeclareMathOperator{\OPT}{\mathsf{OPT}}
\DeclareMathOperator{\opt}{\mathsf{opt}}

\newcommand{\apx}{\ensuremath{\mathsf{apx}}\xspace}
\newcommand{\APX}{\ensuremath{\mathsf{APX}}\xspace}

\newcommand{\radius}{\operatorname{rad}}
\newcommand{\diam}{\operatorname{diam}}

\renewcommand{\emptyset}{\varnothing}

\newcommand{\MSD}{\textsf{MSD}\xspace}
\newcommand{\MSR}{\textsf{MSR}\xspace}

\newcommand{\recMSD}{\ensuremath{\mathtt{recMSD}}\xspace}
\newcommand{\recMSR}{\ensuremath{\mathtt{recMSR}}\xspace}
\newcommand{\detRecMSD}{\ensuremath{\mathtt{detRecMSD}}\xspace}
\newcommand{\DefensiveClustering}{\ensuremath{\mathtt{DefensiveClustering}}\xspace}

\title{FPT Approximation Schemes for Min-Sum Radii\\ and Min-Sum Diameters Clustering}

\author{
  Fabrizio Grandoni\thanks{IDSIA, USI-SUPSI. Email: \texttt{fabrizio.grandoni@supsi.ch}. Work supported in part by the SNSF grant 200021-236706.}
  \and
  Anupam Gupta\thanks{Computer Science Department, New York University. Email: \texttt{anupam.g@nyu.edu}. Work supported in part by NSF awards CCF-2422926 and CCF-2608359.}
  \and
  Jatin Yadav\thanks{IIT Delhi. Email: \texttt{jatin.yadav@cse.iitd.ac.in}. Work supported in part by a Google PhD Fellowship.}
}
\date{}
\begin{document}

\maketitle

\begin{abstract}
  \noindent In the classical Min-Sum Radii problem (\MSR) we are given
  a set $X$ of $n$ points in a metric space and a positive integer
  $k\in [n]$. Our goal is to partition $X$ into $k$ subsets (the
  \emph{clusters}) so as to minimize the sum of the radii of these
  clusters. The Min-Sum Diameters problem (\MSD) is defined
  analogously, where instead of the radii of the clusters we consider
  their diameters.

  For both problems we present FPT approximation schemes for the
  natural parameter $k$. Specifically, given $\eps>0$, we show how to
  compute $(1+\eps)$-approximations for both \MSD and \MSR in time
  $(1/\eps)^kn^{O(1)}$ and $(1/\eps)^{O(k/\eps \log 1/\eps)}n^{\poly(1/\eps)}$
  respectively. The previous best FPT
  approximation algorithms for these problems have approximation
  factors $4+\eps$ and $2+\eps$, respectively,
  and finding an FPT approximation scheme for both these problems had been outstanding open problems.
\end{abstract}

\thispagestyle{empty}

\setcounter{page}{1}

\section{Introduction}
\label{sec:introduction}

We consider two classical clustering objectives closely related to
each other: \emph{min-sum radii} and \emph{min-sum diameters}, which
seek to partition the points of a given metric space into $k$
clusters, such that the sum of their radii or diameters is
minimized. Both these problems have long been known to be NP-hard, but
their approximability has been a source of much interest, as we detail
below. In this work, we study these questions from the perspective of
\emph{parameterized complexity}, and ask: \emph{what can we do in time
  $f(k) \poly(n)$, where $k$ is the number of clusters and $n$ the
  size of the metric space?}

Formally, both the Min-Sum Radii and Min-Sum Diameters problems are
defined on a (symmetric) metric space $\cM = (X,d)$ with $X$ being a
collection of $n$ \emph{points}. Given an integer $k\in [n]$ denoting
the target number of clusters, the goal in both problems is to
partition the point set $X$ into $k$ sets (also called
\emph{clusters}) $S_1, S_2, \ldots, S_k$. The difference between the
two problems is in their objective functions:
\begin{itemize}
\item The diameter of a cluster $S$ is the maximum distance between
  any pair of points in $S$---i.e.,
  \[ \diam(S) = \max_{p,q\in S} d(p,q).\] In the Min-Sum Diameters (\MSD)
  problem, the goal is to minimize $\sum_{i \in [k]} \diam(S_i)$.
\item The \emph{center} (or the $1$-center) of a cluster $S$ is a
  point $p \in X$ minimizing $\max_{q \in S} d(p,q)$; the latter
  quantity is called the \emph{radius} of the
  cluster.%
    In other words,
  \[ \radius(S) := \min_{p \in X} \max_{q \in S} d(p,q). \] In the
  Min-Sum Radii (\MSR) problem, the goal is to minimize
  $\sum_{i \in [k]} \radius(S_i)$.
\end{itemize}

Both these problems are 
NP-hard~\cite{10.1007/3-540-44985-X_22,10.1007/978-3-540-69903-3_26},
and hence have been studied using the lens of approximation
algorithms. Over a sequence of works giving polynomial-time approximation algorithms~\cite{10.1007/3-540-44985-X_22,CHARIKAR2004417,friggstad_et_al:LIPIcs.ESA.2022.56,doi:10.1137/1.9781611977912.69}, 
the approximation ratio for \MSR has been improved  to $3 + \eps$;
since the two objectives are separated by
a factor of at most $2$, this implies a $(6+\eps)$-approximation for
\MSD as well, which is currently the best approximation factor
known. The \MSD problem does not admit a polynomial time algorithm
with an approximation factor of better than $2$, unless
$P=NP$~\cite{10.1007/3-540-44985-X_22}; however, \MSR admits a
quasi-polynomial time approximation scheme
(QPTAS)~\cite{10.1007/978-3-540-69903-3_26}, and hence a polynomial
time approximation scheme (PTAS) is likely to exist.

\MSR and \MSD are also well-studied in
terms of parameterized approximation algorithms, where we allow the
running time to depend super-polynomially in some parameter. For the
natural parameter $k$, the best known results are a $2+\eps$ (resp.,
$4+\eps$) approximation for \MSR (resp., \MSD) in FPT time
$f(k,\eps)\cdot n^{g(\eps)}$ for some computable functions $f, g$
\cite{bandyapadhyay_et_al:LIPIcs.SoCG.2023.12,10.1609/aaai.v38i18.30053}.
In this work, our focus is on \emph{FPT approximation schemes}---i.e.,
$(1+\eps)$-approximation algorithms running in time
$f(k, \eps) \cdot n^{\poly(1/\eps)}$ for some computable function
$f$. %
The following natural question
remains unresolved: \emph{do these two problems admit FPT
  Approximation Schemes?}

\subsection{Our Results}
\label{sec:our-results}

The main results of our work answer the above question in the
affirmative. For the \MSR problem, which is known to be $W[1]$-hard~\cite{10.1609/aaai.v39i15.33699}, we substantially improve upon the mentioned $(2 + \eps)$-approximation in FPT time
\cite{bandyapadhyay_et_al:LIPIcs.SoCG.2023.12,10.1609/aaai.v38i18.30053}.

\begin{theorem}[Min-Sum Radii]
  \label{thr:mainMSR}
  For any $\eps>0$, one can compute a $(1+\eps)$-approximation for
  \MSR in deterministic time
  $(1/\eps)^{O(\nicefrac{k}{\eps} \ln \nicefrac{1}{\eps})}n^{O(\nicefrac{1}{\eps^3}\ln \nicefrac{1}{\eps})}$.
\end{theorem}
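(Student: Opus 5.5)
The plan is a guess-and-refine scheme, with the expensive step being a recursion whose depth is charged against $1/\eps$ "progress units" per cluster. Fix an optimal solution with clusters $C_1,\dots,C_k$, centers $c_1,\dots,c_k$ and radii $r_1\ge r_2\ge\dots\ge r_k$, and let $\OPT=\sum_i r_i$. Three standard preprocessing reductions come first.
\begin{itemize}
\initOneLiners
\item Guess $r_1$ up to a factor $(1+\eps)$ from the $O(n^2)$ pairwise distances.
\item Round every radius up to a multiple of $\eps r_1/k$. This costs at most $\eps\OPT$ in total, so all relevant radii lie in a set of $O(k/\eps)$ values.
\item Clusters with $r_i\le \eps r_1/k$ are handled by the rounding slack together with a greedy covering argument.
\end{itemize}
The main difficulty is that centers cannot simply be guessed: $n^k$ is not FPT, and a greedy ball-growing procedure loses a factor $2$. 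That factor is exactly what the $(2+\eps)$ algorithms of \cite{bandyapadhyay_et_al:LIPIcs.SoCG.2023.12,10.1609/aaai.v38i18.30053} pay.

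The core step is a recursive procedure, call it \recMSR. It maintains a partial solution of at most $k$ balls $B(p_j,\rho_j)$ together with the set $U$ of uncovered points. At each step, pick an uncovered point $x$; it lies in some optimal cluster $C_i$. Branch over which of the $k$ ``slots'' $x$ belongs to, and over a guess of $r_i$ among the $O(1/\eps)$ relevant values, taken relative to the current scale. In the slot's current ball, either move the center or enlarge the radius by an additive $\eps r_i$ so that $x$ becomes covered.

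The key structural claim to prove is that $O(\frac1\eps\ln\frac1\eps)$ such refinements per slot suffice. After that many refinements, the ball assigned to slot $i$ has radius at most $(1+O(\eps))r_i$ and contains $C_i$. The claimed per-slot count comes from a potential argument: each time a chosen uncovered point $x\in C_i$ lies outside the current ball of radius $\rho\ge r_i$ around $p$, we replace $p$ by a point on an approximate ``midpoint'' between $p$ and $x$. We then show that the distance from the center to $c_i$ decreases geometrically by a factor $(1-\Omega(\eps))$. Starting from distance $O(r_i)$, reaching distance $\eps r_i$ therefore needs $O(\frac1\eps\ln\frac1\eps)$ steps.

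In a general metric no midpoint need exist, so midpoints are replaced by guessing a point from $X$. To keep this polynomial, the candidate points are restricted to $n^{O(1/\eps^2)}$ choices per step, namely centers defined by small subsets of already-seen points of $C_i$. This is where the $n^{\poly(1/\eps)}$ factor comes from. Summing over slots gives total depth $O(\frac k\eps\ln\frac1\eps)$ and branching $O(k/\eps)$ on the slot and radius guesses, hence a running time of $(1/\eps)^{O(\frac k\eps\ln\frac1\eps)}\,n^{O(\frac1{\eps^3}\ln\frac1\eps)}$. The algorithm is deterministic because it enumerates all branches and returns the cheapest feasible solution found.

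The main obstacle is the potential argument in general metrics: we need an ``approximate center update'' from a bounded-size sample of points that provably shrinks distance to the true center $c_i$, or otherwise certifies that the radius budget can absorb the point.

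My first attempt would be a charging scheme. If $x$ is far from $p$, either $d(x,c_i)\le r_i$ forces $d(p,c_i)$ to be large, in which case switching the center to a previously seen point of $C_i$ helps, or we increase $\rho$ by $\eps r_i$, which happens at most $1/\eps$ times before $\rho$ exceeds $(1+\eps)\cdot 2r_i$. To avoid that factor-$2$ fallback, which is the real difficulty, I would use the sum structure. Charge the enlargement of small clusters to the $\eps\OPT$ slack across clusters, and argue that only clusters with $r_i\ge\eps^2 r_1/k$ require the exact treatment.
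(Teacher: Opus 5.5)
Your proposal has a genuine gap, and it is at exactly the step you flag as the main obstacle. The contraction claim is that moving the slot-$i$ center toward an ``approximate midpoint'' shrinks $d(p,c_i)$ by a factor $1-\Omega(\eps)$. That is the Euclidean minimum-enclosing-ball argument, and it has no analogue in a general metric. Take $C_i$ to be a star: $c_i$ is at distance $r_i$ from every other point of $C_i$, and those points are pairwise at distance $2r_i$. Then every point of $C_i\setminus\{c_i\}$ needs radius $2r_i$ to cover $C_i$. Moreover, distances from points of $C_i$ to $c_i$ take only the values $0$ and $r_i$, so there is no gradual progress to make: you must land on $c_i$ itself. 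Now add ``impostor'' points at distance $r_i$ from all but one point of $C_i$. This shows that no bounded set of already-seen points of $C_i$ certifies the center, so no per-slot bound of $O(\frac1\eps\ln\frac1\eps)$ refinements, or any bound depending only on $\eps$, can hold.

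Your fallback of enumerating $n^{O(1/\eps^2)}$ candidate centers per step does not rescue the running time either. Those enumerations are stacked along a single branch of depth $O(\frac{k}{\eps}\ln\frac1\eps)$, so they multiply to $n^{O(\frac{k}{\eps^3}\ln\frac1\eps)}$, which is XP rather than FPT. (Likewise, the $O(k/\eps)$ slot/radius branching gives $(k/\eps)^{O(\frac k\eps\ln\frac1\eps)}$, not $(1/\eps)^{O(\cdot)}$.) Restricting exact treatment to clusters with $r_i\ge\eps^2 r_1/k$ does not help, since all $k$ radii may be equal.

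What is missing is a mechanism that makes the $n$-dependent guessing \emph{additive} across clusters and keeps the recursion depth independent of $k$. The paper never tries to locate all centers; its argument has five pieces.
\begin{enumerate}
\item It uses defensive balls. Each fully contains the optimal balls it defends and has radius at most $r^\star_{\kappa(j)}+\sum_{i\in I(j)}r^\star_i$. They are recovered with $(1/\eps)^{O(k)}$ guesses, because only radii and counts are guessed; the centers are forced by the order $\prec$.
\item A defensive ball with an insignificant core is taken as is.
\item Otherwise, a ball $E$ is grown by cheap and large expansions until the boundary defensive balls have total core radius at most $4\eps^2 r(E)$. These boundary balls, once expanded to satisfy the central property, split the instance into subproblems that no optimal ball straddles, so the subproblems are solved independently.
\item Inside each subproblem only $M=\lceil 1/\eps^2\rceil$ optimal balls are guessed exactly, at a cost of $n^{2M}$. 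The large-expansion accounting shows they carry an $\frac{\eps}{6}$ fraction of $r(E)\ge\opt(S_{in})$.
\item This drives the invariant $\rho(\ell)=1+25\eps+(1-\frac\eps6)^{L-\ell}$ over $L=O(\frac1\eps\ln\frac1\eps)$ levels, bottoming out at the factor-$2$ defensive clustering. In the faster variant every recursive call goes to level $\ell+1$, giving depth $L$ and branching $(1/\eps)^{O(k)}n^{2M}$, which is exactly the stated bound.
\end{enumerate}
Nothing in your scheme plays the role of the separating ball $E$. Without it, exact center-finding must be done for essentially all $k$ clusters within one branch, and the factor-$2$ loss cannot be confined to an $O(\eps)$ fraction of the cost.
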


We then turn our attention to the \MSD problem. It is remarkable that
essentially all previous approximation algorithms for this problem
were obtained by losing a factor of $2$ over the results for \MSR. In
other words, the best approximation factors were $6+\eps$ and $4+\eps$
in polynomial and FPT time respectively. We manage to break past this
issue, and give a $(1+\eps)$-approximation in FPT time:

\begin{theorem}[Min-Sum Diameters]
  \label{thr:mainMSD}
  For any $\eps>0$, one can compute a $1+\eps$ approximation for \MSD
  in randomized time $(O(1/\eps))^k n^{O(1)}$ and in deterministic
  time $(O(k/\eps))^k n^{O(1)}$.
\end{theorem}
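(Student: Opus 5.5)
Fix an optimal \MSD solution with clusters $S_1^*,\dots,S_k^*$ and diameters $D_1\ge D_2\ge\dots\ge D_k$; let $\OPT=\sum_i D_i$. The approach is a guess-and-grow scheme in which every guess is either a coarse value taken from a geometric grid or a point supplied by a witness argument, so that the branching factor per cluster is $O(1/\eps)$ in the randomized version and $O(k/\eps)$ in the deterministic one.

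\emph{Guessing the diameters.} First guess $D_1$ up to a factor of $1+\eps$. It equals some pairwise distance, so this costs only $O(n^2)$ choices. Any diameter $D_i<\eps D_1/k$ can be rounded up to $\eps D_1/k$, which adds at most $\eps\OPT$ in total. Next, express each $D_i$ as an integer multiple of $\eps\OPT/k$ (rounding up, again losing at most $\eps\OPT$). The multiples sum to $O(k/\eps)$, so the whole profile $(D_1,\dots,D_k)$ comes from a set of compositions of size $\binom{O(k/\eps)}{k}=(O(1/\eps))^k$. This step fixes the target diameters $\tilde D_i\ge D_i$ with $\sum_i\tilde D_i\le(1+2\eps)\OPT$.

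\emph{Building the clusters.} Process the clusters in order. Maintain the uncovered set $U$, and pick any $p\in U$; it belongs to some unknown optimal cluster $S^*_j$. In the randomized variant, we guess $j$ through the profile, meaning we guess only which \emph{diameter class} it has. Then we cover with $\Ball(p,\tilde D_j)$. This ball has diameter up to $2\tilde D_j$, which is exactly the usual factor-$2$ loss. To avoid that loss, the key idea is to use a second witness point $q\in S^*_j$ that is farthest from $p$ within the cluster. We then take the lens $\{x: d(x,p)\le\tilde D_j,\ d(x,q)\le\tilde D_j\}$ and refine it further. Refining by a constant number of witnesses is not enough in general metrics, so the real plan is different: charge the overshoot to clusters that become fully covered. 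Concretely, we cover with $\Ball(p,\tilde D_j)$, which has diameter at most $2\tilde D_j$. This ball swallows every optimal cluster $S^*_i$ that meets $\Ball(p,\tilde D_j-\tilde D_i)$. An amortization then shows the extra $\tilde D_j$ is paid for by swallowed clusters, or else the cluster we open can be shrunk. Randomization enters by sampling $p$ from $U$ with probabilities proportional to a suitable potential, so that with probability $\Omega(\eps)$ per step the sample is a "central" point of its optimal cluster. Here central means every other point of the cluster lies within $(1/2+\eps)\tilde D_j$ in the relevant sense. The deterministic version replaces the sampling by trying each of the $O(k/\eps)$ candidate representatives, which gives $(O(k/\eps))^k$.

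\emph{Main obstacle.} The hard step is the core geometric claim: in a general metric, with constant probability $\eps$ per cluster, one can locate a center-like point whose ball of radius about $(1/2+\eps)D_j$ covers $S^*_j$. This fails in general, since radius can equal diameter. So I expect the real argument to be different. The ball around $p$ should be allowed to absorb other optimal clusters, and the budget should be a global accounting: every opened cluster of diameter $2\tilde D_j$ must consume two units of guessed budget, so at most $k$ clusters are opened. Proving that this accounting closes while the number of output clusters stays at most $k$ is where I would concentrate the effort. My first attempt would be an exchange argument showing that merging the optimal clusters hit by a ball costs no more than the sum of their diameters plus $\eps$ slack.
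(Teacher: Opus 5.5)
Your proposal does not yet contain a proof. The one step that must beat the factor $2$ is left open, and the mechanism you sketch for it cannot work. A set $\Ball(p,\tilde D_j)\cap U$ containing $S^*_j$ can have diameter $2\tilde D_j$ because of a single foreign point. This can happen without swallowing any further optimal cluster, and without any possibility of shrinking. For example, let $S^*_j$ consist of points at mutual distance $D=\tilde D_j$. Let $b$ be a still-uncovered point of another optimal cluster whose other points are far away, joined to $p$ by an edge of length $D$ in a shortest-path metric, so that $d(b,q)=2D$ for every $q\in S^*_j\setminus\{p\}$. Then $\Ball(p,D)$ has diameter $2D$ and fully contains no optimal cluster other than $S^*_j$. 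Every smaller radius around $p$ loses all of $S^*_j\setminus\{p\}$, and, as you note, $S^*_j$ has no ``central'' point. So the amortization ``the overshoot is paid by swallowed clusters, or the ball can be shrunk'' has nothing to charge to, and your accounting of two budget units per opened cluster leaks a factor $2$. This is the same loss that earlier approaches going through \MSR suffer. Separately, the deterministic claim of ``$O(k/\eps)$ candidate representatives'' is unsupported: representatives are points, of which there are $n$. After rounding, only distance \emph{values} range over a set of size $O(k/\eps)$.

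The missing idea is to stop building clusters as balls and instead \emph{split} the point set along a diameter. Take a diameter witness pair $x,y$ of the current set $S$ and map each point $p$ to $d(x,p)\in[0,\diam(S)]$. By the triangle inequality, each cluster $X_h$ of an optimal $r$-clustering of $S$ projects into an interval of length at most $\diam(X_h)$, so these intervals have total length at most $\opt_r(S)$. There are then two cases.
\begin{itemize}
\item If $\opt_r(S)\ge(1-\eps)\diam(S)$, the single cluster $\{S\}$ is already a $\frac{1}{1-\eps}$-approximation. Its cost is its true diameter, not twice a radius.
\item Otherwise, a uniformly random threshold $R\in[0,\diam(S)]$ avoids every interval with probability at least $\eps$. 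Then $S\cap\Ball(x,R)$ and $S\setminus\Ball(x,R)$ are unions of optimal clusters and can be solved independently: recurse with budget $t-1$ on each side and combine by a $(\min,+)$ convolution over cluster counts.
\end{itemize}
Induction gives success probability $\eps^{r-1}$. The recursion tree has $O(n)$ nodes because $x$ and $y$ always land on different sides, so $(1/\eps)^{k}\log n$ repetitions suffice. For the deterministic bound, use the preprocessing lemma to make distances integers bounded by $O(k/\eps)$. If $\opt_r(S)<\diam(S)$, some unit gap $(R,R+1)$ misses every interval, so trying all $O(k/\eps)$ integer thresholds at each of the $k$ recursion levels gives $(O(k/\eps))^k n^{O(1)}$.
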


Note that such a result is not possible in polynomial time: an
approximation of better than a factor of $2$ would imply that
$P=NP$~\cite{10.1007/3-540-44985-X_22}.
Our approach also gives an exact algorithm
with runtime $n^{k+O(1)}$ (see \Cref{sec:exact_msd}), thereby
improving the previous best runtime of $n^{5k + O(1)}$ given 
by~\cite{10.1609/aaai.v39i15.33699}. 

The same algorithm can also be utilized to give an FPT approximation scheme for the $\MSD$ problem (and thus a FPT $(2 + \eps)$-approximation algorithm for the \MSR problem) with \emph{mergeable constraints}. Here, we are given an upper bound $k$ on the number of clusters and each cluster is required to satisfy some constraint. The constraint is such that, if two disjoint subsets of points satisfy the constraint, then so does their union. 
Carta et al.~\cite{carta_et_al:LIPIcs.ISAAC.2024.16} show that fairness constraints like \emph{exact fairness} and \emph{ratio balance}, as well as non-fairness constraints such as lower bounds on cluster sizes, are mergeable.
The previous best known result for mergeable constraints is an FPT $(4+\eps)$-approximation for \MSR (and an implied FPT $(8+\eps)$-approximation for \MSD)~\cite{bandyapadhyay_et_al:LIPIcs.APPROX/RANDOM.2025.23}.

\begin{theorem}[Mergeable clustering]  \label{thr:mergeable_clustering}
  For any $\eps>0$, one can compute a $(1+\eps)$-approximation for \MSD (and thus a $(2 + \eps)$-approximation for \MSR) under any mergeable constraint (that is verifiable in polynomial time)
  in randomized time $(O(1/\eps))^k n^{O(1)}$ and in deterministic
  time $(O(k/\eps))^k n^{O(1)}$.
\end{theorem}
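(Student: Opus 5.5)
We plan to reuse the \MSD algorithm behind \Cref{thr:mainMSD} essentially verbatim and check that every step respects mergeability. The \MSD algorithm will be organized as a recursive guessing procedure (the one we intend to call \recMSD). It maintains a partial clustering given by at most $k$ ``seed'' points, each with a guessed diameter that is a power of $(1+\eps)$ within a polynomial range. Each point $x$ is attached to some seed $s_i$ with $d(x,s_i)\le D_i$. When a point is far from all current seeds, it is either added as a new seed or forces a guessed diameter to be enlarged. There are $O(1/\eps)$ branches per cluster, which gives the $(O(1/\eps))^k$ bound; the randomized version samples a point from a heavy cluster, and the deterministic version instead branches over $O(k/\eps)$ choices. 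We first fix an optimal solution $S^*_1,\dots,S^*_k$ of the \emph{constrained} instance, where every $S^*_i$ satisfies the mergeable constraint $\mathcal{P}$. Since the constraint does not change the metric, the analysis of the \MSD algorithm goes through with respect to $S^*$: some branch produces guesses that are consistent with $S^*$. Along that branch we obtain a partition of $X$ into ``groups'' $G_1,\dots,G_{k'}$ with $k'\le k$, where each group is a union of optimal clusters (those linked through overlapping balls), and
\[ \sum_j \diam(G_j)\le (1+\eps)\sum_i \diam(S^*_i).\]

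The key structural step is to show that the output clustering of the \MSD algorithm can always be taken to be a \emph{coarsening} of the optimal partition, i.e., every output cluster is a union of optimal clusters. Mergeability then implies that each output cluster satisfies $\mathcal{P}$. We would argue this by requiring that, in the branch consistent with $S^*$, any two optimal clusters whose guessed balls intersect get merged into one group. The diameter bound for merged groups already follows from the triangle inequality chain in the unconstrained proof, and this is exactly where $(1+\eps)$ and not $2$ is lost. Points not covered by a seed ball cannot exist in the consistent branch. So the algorithm outputs the connected components of the ball-intersection graph, and these are unions of $S^*_i$'s.

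Because $\mathcal{P}$ is verifiable in polynomial time, the algorithm can test every candidate partition produced on every branch and discard those violating $\mathcal{P}$. It then returns the cheapest surviving partition with at most $k$ clusters. The consistent branch survives, so the output has cost at most $(1+\eps)\OPT$. The running times are exactly those of \Cref{thr:mainMSD}, with an extra polynomial factor for the checks. For \MSR, we use $\radius(S)\le\diam(S)\le 2\radius(S)$: running the algorithm and comparing with the constrained \MSR optimum $R^*$ gives cost at most $(1+\eps)\cdot 2R^*$, i.e., a $(2+\eps)$-approximation after rescaling $\eps$.

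The main obstacle is the coarsening claim. If the \MSD algorithm ever assigns a point to a seed ball other than the one of its optimal cluster (for instance by greedy nearest-seed assignment), an optimal cluster could be split, and then mergeability gives nothing. We would handle this by always assigning via connected components of intersecting guessed balls rather than nearest seeds. We then need to verify that the diameter accounting of the unconstrained proof still holds under this assignment rule.
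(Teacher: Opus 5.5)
\textbf{There is a genuine gap, and it lies in the algorithm you build on.} Your high-level plan matches the paper's. On a good branch every output cluster should be a union of optimal \emph{feasible} clusters, and mergeability then gives feasibility. Infeasible candidates are discarded, which is the paper's $\cost'$, equal to $\infty$ on infeasible clusterings. The \MSR bound follows from $\radius(S)\le\diam(S)\le 2\radius(S)$.

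The problem is that the algorithm behind \Cref{thr:mainMSD} is not a seed/ball-guessing procedure. The property you need from your procedure is a branch whose groups $G_j$ are unions of optimal clusters with $\sum_j\diam(G_j)\le(1+\eps)\sum_i\diam(S^*_i)$. You assert this rather than prove it, and it is false for the mechanism you describe. Merging clusters whose balls $\Ball(s_i,D_i)$ intersect only gives $\diam(G_j)\le\sum_{i\in G_j}2D_i$ by the triangle inequality. That is the same factor-$2$ type loss that limits ball-growing approaches, not $1+\eps$.

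Here is a concrete failure. Take the points $0,1,\dots,2m-1$ on a line, $k=m$, and the mergeable constraint ``even cluster size''. The unique optimum is $S^*_i=\{2i,2i+1\}$, with cost $m$. For every middle pair, any seed $s_i\in S^*_i$ and any $D_i\ge 1$ give a ball that meets a neighbouring pair. So every group containing a middle pair consists of $c\ge 2$ consecutive pairs and has diameter $2c-1$. The total is at least $3m/2-1$, whatever seeds are chosen. The step you postpone, ``verify that the diameter accounting still holds'', is therefore not a routine check but the whole content of the theorem, and it fails as stated. Separately, powers of $1+\eps$ over a polynomial range give $\Theta(\eps^{-1}\log n)$ guesses per cluster, not $O(1/\eps)$.

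\textbf{The missing idea is the paper's threshold-splitting recursion (\Cref{alg:PTAS_MSD_rnd}), under which coarsening is automatic.}
\begin{itemize}
\item For a set $S$ that satisfies the constraint, take a diameter witness pair $x,y$ and map $p\mapsto d(x,p)\in[0,\diam(S)]$.
\item Each cluster of the optimal feasible $r$-clustering maps into an interval of length at most its diameter, so these intervals have total length at most $\opt_r(S)$.
\item If $\opt_r(S)\ge(1-\eps)\diam(S)$, the single cluster $\{S\}$ is feasible and suffices.
\item Otherwise a uniform $R$ avoids all intervals with probability at least $\eps$. Then $S_1=S\cap\Ball(x,R)$ and $S_2=S\setminus S_1$ are unions of optimal feasible clusters, hence feasible by mergeability.
\item Induction on $t$, using $\cost'$ in the $(\min,+)$ convolution, gives $\Pr[\cost'(\cC_r)\le\opt_r(S)/(1-\eps)]\ge\eps^{r-1}$ exactly as in \Cref{lem:msd_prob}.
\end{itemize}
Every set reached on the good branch is itself a union of optimal feasible clusters, so no separate coarsening argument is needed. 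Repeating $O(1/\eps)^k\log n$ times gives the randomized bound. Enumerating all integral thresholds after the preprocessing of \Cref{sec:derandomization} gives the deterministic one.
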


\subsection{Our Techniques}
\label{sec:our-techniques}

\subsubsection{The Algorithm for Min-Sum Diameters}

Our FPT approximation scheme for \MSD is surprisingly clean, once
viewed from the right perspective. To explain the guiding
  intuition, let us present a slight variant of our algorithm.  We
claim that the following algorithm has probability at least
$(\eps/k)^{O(k)}$ of returning a $1+O(\eps)$ approximate solution;
therefore, an FPT randomized approximation scheme can be obtained by
repeating it an FPT number of times. Our
recursive algorithm gets in input a subset of points $S\subseteq X$
and a parameter $k'\in [k]$, and returns a clustering of $S$ with at
most $k'$ clusters. With probability $\nf12$, the algorithm returns
$\{S\}$ itself as a clustering (this happens also when
$k'=1$). Otherwise, let $x,y\in S$ be a diameter witness pair, i.e.,
$d(x,y)=\diam(S)$. The algorithm chooses $R\in [0,\diam(S)]$ uniformly
at random and $k_1\in [k'-1]$ uniformly at random. Then it returns the
union of the two solutions obtained recursively for the subproblems
$(S_1,k_1)$ and $(S\setminus S_1,k'-k_1)$, where
$S_1=S\cap \Ball(x,R)$ and $\Ball(x,R)$ is a ball of radius $R$
centered at $x$ (containing all the points at distance at most $R$
from $x$). The desired clustering is obtained from the root call
$(X,k)$.

To see the main idea of the analysis, consider the special case of the
root call. Let $X^\star_1,\ldots,X^\star_k$ be the optimal clustering, so that
the optimal cost is $\opt=\sum_{j=1}^{k}\diam(X^\star_j)$. Consider the
diameter witness pair $x,y\in X$ chosen for $X$. Let us project (see
\Cref{fig:msd_image}) the points of $X$ along the interval
$I=[0,\diam(X)]$ such that each $p\in X$ is placed in position
$d(x,p)$ along $I$. In particular, $x$ and $y$ are projected to the
endpoints of $I$. Each optimal cluster $X^\star_j$ induces an interval
$I^\star_j$ of width at most $\diam(X^\star_j)$, where these intervals might
overlap. Crucially, the portion of $I$ spanned by such intervals has
length at most $\opt$. Then, if $\diam(X)\leq \frac{1}{1-\eps}\opt$, with probability $\nf12$ the
above algorithm returns $\{X\}$, which is a feasible $1+O(\eps)$
approximation.

Otherwise, the region of $I$ \emph{not} spanned by the intervals
$I^\star_j$ has length at least $\eps \cdot \diam(X)$. In this case, with
probability $\nf12$ the algorithm chooses to pick a random pair
$(R,k_1)$. Now, with probability at least $\eps$, $R$ falls in the
region of $I$ not spanned by the intervals $I^\star_j$. Assuming that this
happens, each optimal cluster $X^\star_j$ is either entirely at distance
at most $R$ from $x$ (hence it is fully contained in $S_1$ for this
recursive call), or it is entirely at distance more than $R$ from $x$
(hence it is entirely contained in $X\setminus S_1$). Furthermore,
with probability $\frac{1}{k-1}$ the chosen value $k_1$ for this
recursive call is exactly the number of optimal clusters fully
contained in $S_1$. Therefore, the two subproblems $(S_1,k_1)$ and
$(X\setminus S_1,k-k_1)$ partition the problem in two subproblems that
can be solved independently.

The actual algorithm in \S\ref{sec:algorithm-min-sum} is an
  optimized and slightly crisper version of the above description. Moreover, we can derandomize it at a small additional cost: the
details are given in \Cref{sec:derandomization}.

\begin{figure}[t]
  \centering
  \scalebox{0.9}{
\begin{tikzpicture}[
    point/.style={circle, fill, minimum size=4pt, inner sep=0pt},
    arc style/.style={dashed, ->, >=stealth, thick, shorten >= 6pt},
    interval/.style={ultra thick, |-|},
    cross/.pic={
        \draw[thick, pic actions] (-3.5pt,-3.5pt) -- (3.5pt,3.5pt) (-3.5pt,3.5pt) -- (3.5pt,-3.5pt);
    }
]

\draw[thick] (0, 0) -- (10.5, 0);

\node[above] at (0,0.1) {\Large $x$};
\pic[blue] at (0,0) {cross};

\pic[green!40!black] at (10.5,0) {cross};

\foreach \a/\r in {60/1.5, 45/2.8, 30/4.0} {
    \node[point, blue] (P) at (\a:\r) {};
    \draw[blue, arc style] (P) arc (\a:0:\r);
    \pic[blue] at (\r,0) {cross};
}
\draw[blue, interval] (0, -0.58) -- (4.0, -0.58);

\foreach \a/\r in {28/6.0, 20/6.6, 25/7.3, 20/8.0} {
    \node[point, red] (P) at (\a:\r) {};
    \draw[red, arc style] (P) arc (\a:0:\r);
    \pic[red] at (\r,0) {cross};
}
\draw[red, interval] (6.0, -0.58) -- (8.0, -0.58);

\foreach \a/\r in {-12/8.6, -7/9.1, -12/7.58} {
    \node[point, green!40!black] (P) at (\a:\r) {};
    \draw[green!40!black, arc style] (P) arc (\a:0:\r);
    \pic[green!40!black] at (\r,0) {cross};
}
\draw[green!40!black, interval] (7.58, 0.58) -- (10.5, 0.58);

\node[point, blue] at (0,0) {};
\node[point, green!40!black] at (10.5,0) {};
\node[below] at (10.5,-0.1) {\Large $y$};

\draw[blue, fill=blue, fill opacity=0.15, dotted] plot [smooth cycle, tension=0.7] coordinates {
    (-0.4, -0.4) (-0.4, 0.8) (0.4, 1.8) (1.8, 2.5) (3.8, 2.5) (4.0, 1.5) (2.0, 1.4) (0.5, -0.2)
};

\draw[red, fill=red, fill opacity=0.15, dotted] plot [smooth cycle, tension=0.7] coordinates {
    (4.8, 2.6) (5.5, 3.5) (7.0, 3.8) (8.2, 3.0) (8.2, 2.0) (6.5, 1.6) (5.0, 2.0)
};

\draw[green!40!black, fill=green!40!black, fill opacity=0.15, dotted] plot [smooth cycle, tension=0.7] coordinates {
    (10.8, 0.2) (10.2, 0.2) (9.5, -0.4) (8.0, -1.1) (7.1, -1.2) (7.0, -2.0) (8.5, -2.3) (10.2, -1.2)
};

\draw[ultra thick, brown] (4.05, 0) -- (5.95, 0);
\node[below, brown] at (5.0, -0.1) {\Large gap};

\end{tikzpicture}
}
  \caption{Projecting each point on the diameter $(x, y)$. There are three clusters, blue, red and green. Each interval represents the range of distances from $x$ for the points in the corresponding cluster. The algorithm chooses a random threshold $R$ in $[0, \diam(S)]$ and splits the problem into left and right of the threshold. The created split is good if the threshold lies in the gap, that is, it does not lie on any of the three intervals.}
  \label{fig:msd_image}
\end{figure}

\subsubsection{The Algorithm for Min-Sum Radii} Our FPT
approximation scheme for \MSR is substantially more involved. It is
instructive to first see how to achieve a $2(1+\eps)$ approximation in
FPT time (as done by
\cite{bandyapadhyay_et_al:LIPIcs.SoCG.2023.12,10.1609/aaai.v38i18.30053}). Consider an optimal solution, which can be equivalently described as a set of balls $B_1^\star, B_2^\star, \ldots, B_k^\star$. We compute a set of balls that together cover all the points. Until there exists a yet-uncovered point $c_j$, we construct a new ball
$B_j:= \Ball(c_j,2r^\star_{\kappa(j)})$, where $r^\star_{\kappa(j)}$ is the
radius of the optimal cluster $B^\star_{\kappa(j)}$ containing
$c_j$. Notice that $B_j$ must entirely contain
$B^\star_{\kappa(j)}$. Repeating this process ensures that we create at
most $k$ balls covering all the points, whose total radius is at most
$2$ times the optimal cost $\opt$. The running time is affected by the
guessing of the values $r^\star_{\kappa(j)}$: this can be made FPT by a
simple preprocessing step (losing another factor $1+\eps$ in the
approximation).

Our improved FPT algorithm is based on the following  three main ideas:
\begin{enumerate}[label=(\alph*)]\itemsep0pt
\item \label{idea:1} If we can define a ball $D := \Ball(c,r)$ such that the total
  radius of the optimal clusters completely contained within $D$ is at
  least $r/(1+\eps)$, we can output the ball $D$ as part of the final solution, and charge the
  radius $r$ to the optimal clusters completely handled by it.

\item \label{idea:2} If there are a constant number $M$ of optimal balls whose total
  radius is an $\Omega(\eps)$ fraction of the optimal cost, we could guess
  the centers and radii of these $M$ balls in polynomial time,
  and then recurse on the remaining points. Crucially, the optimal
  cost of this residual instance is now smaller by a constant
  factor. This can only happen for a constant number of recursion levels before the
  residual cost is only an $\eps$ factor of $\opt$---whereupon we
  can just use any $O(1)$-approximation algorithm.

\item \label{idea:3} Suppose that we find a ball
  $E := \Ball(c,r)$ and a collection ${\cal B}$ of \emph{boundary} balls around $E$ (that intersect $E$ without being contained in it) such that the total radius of $\cal B$ is tiny, say $O(\eps^2 r)$. Then we can use a $O(1)$-approximation to cover the points inside ${\cal B}$, as
  long as inside $E$ we are able to compute balls of total radius $\Omega(\eps r)$ that are \emph{close} to optimal.
\end{enumerate}

Let us describe in more detail how to combine the above ideas. Our
starting step is the construction of an alternative set of
\emph{defensive} balls
$D_1, D_2, \ldots, D_{\ell}$, with each ball of the form
$D_j = \Ball(c_j, r_j)$, which together cover all the points $X$ (see
\Cref{fig:defensive_ball_image}). The construction is as follows. Initially all the
points are uncovered. Like in the $2+\eps$ approximation that we
described above and with a similar notation, we pick an arbitrary
uncovered point $c_j$ and initially define the next ball as
$D_j:=\Ball(c_j,2r^\star_{ \kappa(j)})$ (later we call $B^\star_{\kappa(j)}$
the \emph{core} of $D_j$). If $D_j$ includes the center of some other
optimal cluster $B^\star_h$ still containing uncovered points, we increase
the radius of $D_j$ by the radius $r^\star_h$ of the largest such
cluster. Notice that after this expansion, $B^\star_j$ is fully covered by
$D_j$. We continue similarly until no further $B^\star_h$ of the above
type is defined, therefore fixing the final radius $r_j$ of
$D_j$. Notice that the expansions of $D_j$ are very \emph{cheap} in
the sense that the corresponding \emph{cost} can be charged to the
radius of some (distinct) optimal clusters that are covered by $D_j$.
\begin{figure}[t]
  \centering
  \scalebox{0.7}{
  \begin{tikzpicture}
    \coordinate (O) at (0,0);

    \draw[solid, thick] (O) circle (4.5cm); %
    \draw[dashed, thin] (O) circle (2.8cm); %
    \draw[dashed, thin] (O) circle (1.5cm); %

    \node[text=black] at (0,-1.5) [below] {$\Ball(c_j, 2r_1^\star)$};
    \node[text=black] at (-0.3,-2.8) [below] {$\Ball(c_j,2r_1^\star+r_2^\star)$};
    \node[text=black, font=\large] at (0,-4.5) [below] {$D_j = \Ball(c_j, 2r_1^\star + r_2^\star + r_3^\star)$};

    \draw[->, thick] (-20:1.5cm) -- (-20:2.8cm) node[pos=0.5, above] {$r_2^\star$};

    \draw[->, thick] (-65:2.8cm) -- (-65:4.5cm) node[pos=0.5, right] {$r_3^\star$};

    \fill[black] (O) circle (1.5pt);
    \node at (O) [below] {$c_j$};
    \coordinate (B1_coord) at (-0.3, -0.5);
    \draw[orange, thick, fill=gray, fill opacity=0.15] (B1_coord) circle (0.78cm);
    \node[text=orange, font=\large] at (-0.8,-1) [above right, xshift=1mm] {$B_1^\star$};

    \coordinate (B2_coord) at (0.6, 1.1);
    \draw[orange, thick] (B2_coord) circle (1cm);
    \node at (B2_coord) [below] {$c_2^\star$};
    \fill[black] (B2_coord) circle (1.5pt);
    \node[text=orange, font=\large] at (1,1) [above, yshift=1mm] {$B_2^\star$};

    \coordinate (B3_coord) at (-2.3, -0.6);
    \draw[orange, thick] (B3_coord) circle (1.3cm); 
    \fill[black] (B3_coord) circle (1.5pt);
    \node at (B3_coord) [below] {$c_3^\star$};
    \node[text=orange, font=\large] at (-1.8,-0.5) [above left] {$B_3^\star$};

    \coordinate (B4_coord) at (1.9, -1.8);
    \draw[orange, thick] (B4_coord) circle (0.65cm); 
    \node[text=orange, font=\large] at (2.45,-1.9) [right] {$B_4^\star$};
    \fill[black] (B4_coord) circle (1.5 pt);
    \node at (B4_coord) [above] {$c_4^\star$};

    \coordinate (B5_coord) at (-2.4, 2.8);
    \draw[orange, thick] (B5_coord) circle (0.6cm); 
    \node[text=orange, font=\large] at (-1.8,3) [right] {$B_5^\star$};
    \fill[black] (B5_coord) circle (1.5 pt);
    \node at (B5_coord) [above] {$c_5^\star$};

    \draw[orange, thick] (-4.5, 4.0) circle (0.8cm);
    \fill[black] (-4.5, 4.0) circle (1.5pt);
    \draw[orange, thick] (4.5, -4.0) circle (1.2cm);
    \fill[black] (4.5, -4.0) circle (1.5pt);

    \draw[orange, thick] (-2.7, 3.8) circle (0.25cm);
    \fill[black] (-2.7, 3.8) circle (1.5pt);
    \draw[orange, thick] (4.5, 1.5) circle (1.0cm);
    \fill[black] (4.5, 1.5) circle (1.5pt);

  \end{tikzpicture}
}

\caption{Creation of a defensive ball $D_j$. The originally
  \emph{undefended} optimal balls are given in orange. We start with
  an uncovered point $c_j$. Since it is contained in the optimal ball
  $B_1^\star$, we start with $D_j \gets \Ball(c_j, 2r_1^\star)$, and
  mark $B_1^\star$ as defended. Since $B_2^\star$'s center $c_2^\star$ is inside
  $D_j$, we expand $D_j$ by $r_2^\star$ to get
  $D_j = \Ball(c_j, 2r_1^\star + r_2^\star)$ and mark $B_2^\star$ as
  defended by $D_j$. Now $B_3^\star, B_4^\star$ have their centers
  inside $D_j$, so we further expand $D_j$ by
  $\max(r_3^\star, r_4^\star)=r^*_3$ to make
  $D_j = \Ball(c_j, 2r_1^\star + r_2^\star + r_3^\star)$. Optimal
  balls $B_3^\star, B_4^\star, B_5^\star$ are all marked as defended
  by $D_j$. Since $D_j$ does not contain the center of any other
  undefended optimal ball, we stop. The core for $D_j$ is $B_1^\star$
  (the shaded optimal ball), and $D_j$ defends the optimal balls
  $B_1^\star, B_2^\star, B_3^\star, B_4^\star, B_5^\star$ and so
  $I(j) = \{1, 2, 3, 4, 5\}$.}
  \label{fig:defensive_ball_image}
\end{figure}
The (\emph{defensive}) clustering induced by the defensive balls has
already some useful properties:
\begin{enumerate}[nosep]
\item For each optimal cluster $B^\star_i$, there exists at least one
  defensive ball $D_j$ fully containing it. We say that $B^\star_i$ is
  assigned to the first $D_j$ that fully covers $B^\star_i$, and let
  $I(j)$ be the indexes of the optimal clusters assigned to $D_j$
  (including necessarily $\kappa(j)$).%
\item Observe that $r_j\leq r^\star_{\kappa(j)}+\sum_{i\in I(j)}r^\star_{i}$.  If
  we are lucky and $r_j\leq (1+O(\eps))\sum_{i\in I(j)}r^\star_{i}$, then
  we can use Idea~\ref{idea:1} to directly include $D_j$ in our
  solution; we say that such a $D_j$ is \emph{good}.
\end{enumerate}
However, the defensive balls may not satisfy the condition
$r_j\leq (1+O(\eps))\sum_{i\in I(j)}r^\star_{i}$. In that case, we call
such a defensive ball \emph{bad}. As an extreme case, it might happen
that $r_j=2r^\star_{\kappa(j)}$ for all the defensive balls $D_j$---we
need further ideas for this case, if we are to get a better-than-$2$
approximation.

The second part of our algorithm is designed to address the problem
due to the presence of bad defensive balls. In this second part, we
pick some (bad) defensive ball $D$, and \emph{expand} it to
incorporate other defensive balls, according to the following
construction (that has some similarities with the construction of $D$
itself, but with some critical differences). Let $E=\Ball(c,r')$ be
the current expanded ball (initially $E=D$), and consider the
\emph{boundary balls} ${\cal B}$ of $E$, i.e., the defensive balls
that intersect with $E$ without being fully covered by $E$. We
iteratively increase the radius of $E$ as long as one of the following
conditions holds:
\begin{itemize}
\item (\emph{cheap expansion}) We increase the radius of $E$ by the minimal amount $S$ such that 
  $E$ fully contain ${\cal B}$, under the assumption that the total radius of the latter balls is at least $\rho S$ for a carefully chosen constant $\rho\geq 2$. Notice that this expansion is \emph{cheap} since by paying an extra $S$ we incorporate defensive balls of total radius at least $2S$, and therefore optimal clusters of total radius at least $S$. 
  \item (\emph{large expansion}) If there exists a boundary ball $B\in {\cal B}$ of sufficiently large radius $r$ w.r.t. to the radius $r'$ of $E$, namely $r\geq \Omega(\eps^2) r'$, we increase the radius of $E$ by $2r$ for the largest such $r$. Notice that after this expansion $E$ fully includes $B$.  
\end{itemize}
We show that, for a constant $\rho$ large enough, a constant fraction of the final radius $r'$  of $E$ is due to large expansions (where we consider the initial $D$ as a large expansion). Even more, a constant fraction of $r'$ is only due to the radius of a \emph{constant} number $M$ of defensive balls that \emph{caused} such large expansions. Since the latter $M$ defensive balls are bad, this means that their respective cores have total radius $\Omega(\eps) r'$. This way we can apply Idea \ref{idea:2}: we can guess the mentioned cores and capture a constant fraction of the profit of all the optimal clusters fully contained in $E$.  

Furthermore, we show that the total radius of the final boundary balls ${\cal B}$ is at most $O(\eps^2 )r'$. Hence we can apply Idea \ref{idea:3}: we include such balls in the solution under construction and charge their cost to the cores that we guessed before. After removing the points covered by ${\cal B}$, the algorithm then just continues recursively on the remaining points $X'$ inside $E$ and in their complement $X''$ (modulo guessing the number of optimal clusters that still need to be fully covered on the two sides).

We remark that, by the first property of the defensive balls, after removing the boundary balls, the remaining part of each optimal cluster is fully contained in $X'$ or in $X''$. In particular, it does not happen that some optimal cluster has to be covered in part in the subproblem associated with $X'$ and in part in the one associated with $X''$ (which would cause at least a factor $2$ in the approximation ratio). This would not be true if we used, instead of the defensive balls, the balls as in the $2+\eps$ approximation that we described before.  

The final algorithm is actually slightly different for subtle
technical reasons, in particular, due to certain defensive balls covering the centers of some optimal balls that they do not defend; however, the above description captures the essential
ideas.

\subsection{Related Work}
\label{sec:related-work}

\textbf{Polynomial-time Approximate Algorithms.}  Both exact and
approximation algorithms for \MSR and \MSD problems have been
extensively studied. (Any $\alpha$-approximation for \MSR immediately
implies a $2\alpha$-approximation for \MSD, so we do not explicitly
mention all the implied results for \MSD henceforth.) For
approximation algorithms running in polynomial time, a sequence of
improvements~\cite{10.1007/3-540-44985-X_22,CHARIKAR2004417,friggstad_et_al:LIPIcs.ESA.2022.56}
culminated in a $(3+\eps)$-approximation for \MSR by Buchem et
al.~\cite{doi:10.1137/1.9781611977912.69}; they also give a
$(3+\eps)$-approximation for \MSR \emph{with outliers}, and a
$(3.5 + \eps)$-approximation for \MSR \emph{with outliers and
  generalized lower bounds}, where there is a lower bound for each
point specifying the number of points that must be assigned to it, if
it is made a cluster center.

\medskip\noindent\textbf{Exact Algorithms.} A remarkable result of
Gibson et al.~\cite{10.1007/978-3-540-69903-3_26} showed that the \MSR
problem can be solved exactly in $n^{O(\log n \log \Delta)}$ time,
where $\Delta$ is the aspect ratio of the metric space; they used this
algorithm to also give a QPTAS. In another
work~\cite{doi:10.1137/100798144}, they showed that \MSR can be solved
in polynomial time for Euclidean spaces of fixed dimension. The
runtime of the exact algorithm for \MSR in general metric spaces was
improved to $n^{O(\log n \log\log n + \log \Delta)}$ by Banerjee et
al.~\cite{10756173}. While there is a trivial polynomial time exact
algorithm for \MSR when $k$ is a constant (one can choose the centers
and the radii of all the clusters in $n^{O(k)}$ time), an exact
polynomial time algorithm for \MSD for a constant $k$ was first given
by Behsaz and Salavatipour~\cite{behsaz-savalatipour}, with a runtime
of $n^{O(k^2)}$. This was recently improved to $n^{5k +
  O(1)}$%
by~\cite{10.1609/aaai.v39i15.33699}, which we further improve to 
$n^{k+O(1)}$ in~\Cref{sec:exact_msd}.

\medskip\noindent\textbf{Parameterized (Approximate) Algorithms.}
In the setting of parameterized algorithms, the \MSR problem parametrized by the number of clusters admits a
$(2+\eps)$-approximation~\cite{bandyapadhyay_et_al:LIPIcs.SoCG.2023.12,10.1609/aaai.v38i18.30053};
to the best of our knowledge, this result (and the implied
$4+\eps$-approximation for \MSD) are the best previous results for
general metric spaces. For metric spaces with bounded doubling
dimension, a $(1+\eps)$-approximation in FPT time is
known~\cite{10.1609/aaai.v39i15.33699}. 
For capacitated \MSR, the works
of~\cite{inamdar_et_al:LIPIcs.ESA.2020.62,bandyapadhyay_et_al:LIPIcs.SoCG.2023.12,jaiswal_et_al:LIPIcs.ITCS.2024.65,filtser2025fptapproximationscapacitatedsum}
culminated in a factor of $(3+\eps)$ for uniform
capacities~\cite{jaiswal_et_al:LIPIcs.ITCS.2024.65} and $5.83$ for
non-uniform
capacities~\cite{filtser2025fptapproximationscapacitatedsum}. For
capacitated \MSD, FPT approximations of $(4+\eps)$ and $(7+\eps)$ are
achievable for uniform and non-uniform capacities,
respectively~\cite{filtser2025fptapproximationscapacitatedsum}. FPT
approximation algorithms have also been designed for variants
incorporating \emph{fairness
  constraints}~\cite{10.1609/aaai.v38i18.30053,carta_et_al:LIPIcs.ISAAC.2024.16}. More
generally, recent works have explored FPT approximations for \MSR
under \emph{mergeable constraints}, where the clusters are required to
satisfy constraints having the property that if two disjoint clusters
satisfy the constraint, then so does their union. Carta et
al.~\cite{carta_et_al:LIPIcs.ISAAC.2024.16} gave a
$(6+\eps)$-approximation algorithm for this problem, which was
improved to a $(4+\eps)$-approximation by Bandyapadhyay and
Chen~\cite{bandyapadhyay_et_al:LIPIcs.APPROX/RANDOM.2025.23}.

\medskip\noindent\textbf{Other clustering objectives.} FPT approximation algorithms have also been extensively studied for other popular clustering objectives such as $k$-center, $k$-median and $k$-means. Here, we only discuss the results in the unconstrained setting. Via the standard reduction from dominating set, a better-than-$2$ factor approximation for the $k$-center problem is known to be $W[2]$-hard, and a $2$-approximation in polynomial time is folklore. For the $k$-median and $k$-means problems, Cohen-Addad et al.~\cite{cohenaddad_et_al:LIPIcs.ICALP.2019.42} gave FPT algorithms with approximation ratios of $1 + 2/e + \eps$ and $1 + 8/e + \eps$ respectively, and (almost) matching hardness results %
assuming gap-ETH.

\section{Preliminaries and Notation}
\label{sec:notation}

For a metric space $\cM = (X,d)$, a \emph{$k$-clustering} (or a
\emph{$k$-partition}) is a collection of subsets
$S_1, \ldots, S_{k'}$, where $k' \leq k$, with
$X = \cup_{i \in [k']} S_i$.  Observe that we allow a $k$-clustering
to have fewer than $k$ clusters. For the objective functions that we
consider, this is without loss of generality: one can transform any
clustering with $k' < k$ clusters into one having exactly $k$ clusters
and no larger cost, by simply removing $k-k'$ points from some of the
clusters and forming singleton clusters using these points.

Given $c\in X$ and a value $r\geq 0$, we let $\Ball(c,r)$ denote the
\emph{ball} of radius $r$ around the center $c$: i.e.,
$\Ball(c,r) := \{x \in X \mid d(c,x) \leq r\}$. Given a ball $B$, we
use $c(B)$ and $r(B)$ to denote its \emph{center} and \emph{radius}
respectively. Recall that the \emph{diameter} of set $S \sse X$ is defined as
$\diam(S) := \max_{p,q \in S} d(p,q)$;
any vertex pair $p,q\in S$ with
$d(p,q) = \diam(S)$ is called a \emph{diameter witness} pair.

When considering the \MSR problem, we define the cost for a set $\cB$
of balls to be $\cost(\cB):= \sum_{B \in \cB} r(B)$.
Analogously, when considering the \MSD problem, the cost of a set
$\cC$ of clusters is $\cost(\cC):= \sum_{C \in \cC} \diam(C)$.

Slightly abusing notation, we use the term \emph{cluster} and
\emph{ball} interchangeably for \MSR.  In our algorithms for \MSR, we
will consider subproblems identified by subsets $X'\subseteq X$ of
points. Having fixed some optimal solution $\OPT$ for the \MSR
problem, given $X'\subset X$, we define
$\OPT(X') :=\{\OPT_i\in \OPT \mid \OPT_i\cap X'\neq \emptyset\}$ to be
the set of optimal balls that contain at least one point in $X'$. We
denote the corresponding cost by
$\opt(X'):=\sum_{\OPT_i\in \OPT(X')}r(\OPT_i)$.

\subsection{Preprocessing for \MSR and \MSD}
\label{sec:preprocessing-msr}

Our deterministic algorithms for the \MSD and \MSR problems start by preprocessing the
instance in polynomial time. In particular, this guarantees that the pairwise distances can only take $O(k/\eps)$ values, which allows us to guess some relevant quantities in FPT time.
(The proof is deferred to~\Cref{sec:missing}.)

\begin{lemma}[Preprocessing]
  \label{lem:reduction:radii}
  Let $\eps\in (0,1]$. Given a $\rho(\eps)$-approximation algorithm
  with running time $T(n,k,\eps)$ for instances of \MSR (respectively,
  \MSD) where all the distances are positive integers,%
  and the optimum
  solution has cost at most $\frac{8k}{\eps}$ (resp.,
  $\frac{16k}{\eps}$), there exists a
  $(1+\eps)\,\rho(\eps)$-approximation algorithm for general instances
  of \MSR (resp., \MSD) with running time $T(n,k,\eps)+ n^{O(1)}$.
\end{lemma}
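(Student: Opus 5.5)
We plan the standard guess-and-round reduction. First compute in polynomial time a crude estimate of the optimum. For \MSR, the $(3+\eps)$-approximation of Buchem et al.\ gives a value within a constant factor. A self-contained alternative is to guess the largest optimal radius: with $r_{\max}$ the largest radius in $\OPT$, we have $r_{\max}\le \opt\le k\,r_{\max}$, and $r_{\max}$ is one of the $O(n^2)$ pairwise distances. Every instance we build below is solved by the given algorithm, and we return the cheapest resulting solution evaluated in the original metric; so it suffices that one guess works. I would go with guessing: for each candidate $\Delta\in\{d(p,q)\}$, assume $\Delta= r_{\max}$ (resp.\ the largest optimal diameter for \MSD). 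Then $\Delta\le\opt\le k\Delta$.

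Given $\Delta$, set the unit $u:=\eps\Delta/(4k)$ for \MSR (resp.\ $\eps\Delta/(8k)$ for \MSD), and define
\[
d'(p,q):=\min\Bigl\{\ceil{d(p,q)/u},\,T\Bigr\}
\]
for $p\neq q$, with a cap $T$ chosen just above the scale of $\Delta$ (e.g.\ $T=\ceil{8k/\eps}+1$). Distances are rounded up, so they are positive integers, and the cap ensures the optimum of the new instance is bounded. We must check three things.

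\textbf{(i) $d'$ is a metric.} The ceiling of a metric is a metric, since $\ceil{a+b}\le\ceil a+\ceil b$. Truncating by a constant, $\min\{d,T\}$, also preserves the triangle inequality.

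\textbf{(ii) The optimum of $d'$ is at most $8k/\eps$ (resp.\ $16k/\eps$).} Every optimal ball has radius at most $\Delta$, so its new radius is at most $\Delta/u+1$ with the same center. Summing over the at most $k$ balls gives a cost of at most $\opt/u+k$. This is too large, since $\opt/u$ can be as large as $k\Delta/u=4k^2/\eps$. So the normalization must instead be relative to $\opt$ itself.

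I would therefore redo the guess. Guess $\Gamma$ with $\opt\le\Gamma\le 2\opt$, which is possible by enumerating $\Gamma\in\{2^i\Delta: 0\le i\le\log k\}$, or directly from a constant approximation $\APX$. Take $u:=\eps\Gamma/(4k)$. Then the new optimum is at most $\opt/u+k\le 4k/\eps+k\le 8k/\eps$. For \MSD the additive loss per cluster is $1$ in diameter as well, and with the constant for $2$-approximate guesses the bound becomes $16k/\eps$.

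\textbf{(iii) Mapping back.} A solution of $d'$-cost $C$ has true cost at most $uC$ provided no capped distance is used. Since $d\le u\,d'$ whenever there is no truncation, the true cost is at most $u\cdot\rho\cdot\opt'\le\rho(\opt+ku)=\rho\,\opt+\rho\,\eps\Gamma/4\le(1+\eps)\rho\,\opt$.

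The main obstacle is the truncation. A radius or diameter realized by a capped pair would be underestimated. However, any cluster using a capped distance has $d'$-cost at least $T>8k/\eps\ge\rho\cdot\opt'$ when $\rho$ is bounded. If $\rho$ could be large, I would avoid the cap altogether. Points at distance more than $\Gamma\ge\opt$ can never share an optimal cluster, so we can instead split the input into the connected components of the graph with edges $d\le\Gamma$. Each component then has diameter at most $n\Gamma$ and is handled separately (guessing how the budget $k$ splits among components), or alternatively we set $T$ to the maximum distance within a component. Given the paper's claim of only $n^{O(1)}$ overhead, I expect the intended route is capping at $T$ combined with the observation that clusters never need to cross distances above $\opt$. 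The remaining work is verifying this with a clean constant.
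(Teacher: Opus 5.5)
\textbf{The gap is in step (iii), and it comes entirely from the truncation at $T$.} Your argument that the returned solution never relies on a capped pair rests on $T>8k/\eps\ge\rho\cdot\opt'$. But $\opt'\le 8k/\eps$ only yields $\rho\,\opt'\le 8\rho k/\eps$, and this can exceed $T=\lceil 8k/\eps\rceil+1$ as soon as $\rho$ is not close to $1$ (e.g.\ $\rho=2$ with $\opt'$ near your own bound $4k/\eps+k$).

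So the black box may legitimately output a cluster whose $d'$-radius (or $d'$-diameter) equals $T$, because every candidate center has a capped distance to it. For such a cluster $d\le u\,d'$ fails, and its true cost is not controlled at all.

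Your fallback does not repair this. For \MSR, two points of the same optimal ball can be $2r_i$ apart, so ``pairs at distance more than $\Gamma\ge\opt$ never share an optimal cluster'' is false; the threshold would have to be $2\Gamma$. Solving components separately with a split of the budget $k$ also multiplies the number of black-box calls. And you leave the final verification open.

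\textbf{The repair is simply to drop the cap.} The lemma only asks for positive integer distances and an optimum of at most $8k/\eps$ (resp.\ $16k/\eps$); it does not ask for bounded distances. Your bound $\opt'\le\opt/u+k$ in (ii) never used $T$. With $d'=\lceil d/u\rceil$ you have $d\le u\,d'$ everywhere, so (iii) is immediate: the true cost is at most $u\rho\,\opt'\le\rho(\opt+ku)\le(1+\eps)\rho\,\opt$.

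\textbf{You should also commit to a single estimate rather than enumeration.} Enumerating $\Delta$ and $2^i\Delta$ makes $O(n^2\log k)$ black-box calls. That is a multiplicative overhead, not the stated additive $n^{O(1)}$, and whenever the guess is wrong it feeds the algorithm instances outside its promise. Instead, as in the paper:
\begin{itemize}
\item compute $\apx\in[\opt,4\opt]$ with Buchem et al.; for \MSD, evaluate that \MSR solution by diameters to get $\apx\in[\opt,8\opt]$;
\item set $u=\eps\apx/(4k)$ (resp.\ $\eps\apx/(8k)$).
\end{itemize}
This gives $\opt'\le 4k/\eps+k\le 8k/\eps$ (resp.\ $\opt'\le 8k/\eps+k\le 16k/\eps$), with additive loss $ku\le\eps\,\opt$.

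With these two changes your argument coincides with the paper's proof. Your remark that $\lceil\cdot\rceil$ already preserves the triangle inequality also shows that the paper's metric-closure step is redundant.
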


\section{Algorithm for Min-Sum Diameters}
\label{sec:algorithm-min-sum}

In this section, we present our algorithm \recMSD for \MSD, which is
formally presented as \Cref{alg:PTAS_MSD_rnd}. This algorithm takes as
input a subset $S$ of the point set $X$ and an integer parameter $t$,
and it returns a vector $(\cC_1, \cC_2, \ldots, \cC_t)$, where $\cC_i$
is an $i$-clustering of $S$. The algorithm first initializes each
$\cC_i$ as the na\"{\i}ve clustering that puts all of $S$ in a single
cluster; clearly having all of $S$ in one cluster is a feasible
$i$-clustering for any $i \geq 1$. Now suppose $x, y \in S$ form a
diameter witness pair for $S$, i.e., $d(x,y) = \diam(S)$. We choose a
threshold $R \in [0, \diam(S)]$ uniformly at random, and use it to
split $S$ into two sets $S_1$ and $S_2$, where
$S_1 :=S\cap \Ball(x,R)$ is the set of points in $S$ at distance at
most $R$ from $x$ and $S_2 := S \setminus S_1$ is the set of the
remaining points. We then recursively compute
$\cA \gets \recMSD(S_1, t-1)$ and $\cB \gets \recMSD(S_2, t-1)$, where
the second parameter indicates that we should return at most $t-1$
clusters. Finally, we compute their $(\min, +)$ convolution---i.e.,
for each value $i$, we find the best clustering of $S$ formed by
taking some $j$-clustering from $\cA$, and some $(i-j)$-clustering
from $\cB$.

\begin{algorithm}
  \caption{The Procedure \recMSD}
  \label{alg:PTAS_MSD_rnd}
  \DontPrintSemicolon
  \KwIn{$S\subseteq X$ with $S\neq \emptyset$, $t\in [k]$}
  \KwOut{A vector $\cC = (\cC_1, \cC_2, \ldots, \cC_t)$, where $\cC_i$ is an $i$-clustering of $S$.}
  $\cC_i \gets \{S\}$ for all $i \in \{1, 2, \ldots, t\}$\;
  \lIf{$t=1$ or $|S| = 1$}{\textbf{return} $(\cC_1, \cC_2, \ldots
    \cC_t)$}
  \tcp{Randomly partition $S$ and recurse}
  Let $x, y \in S$ be such that $d(x, y) = \diam(S)$, and choose $R\in [0,\diam(S)]$ u.a.r.\;
  Let $S_1\leftarrow S\cap \Ball(x,R)$ and $S_2\leftarrow S\setminus \Ball(x,R)$\;
  Let $\cA \gets \recMSD(S_1, t - 1)$ and $\cB \gets \recMSD(S_2, t-1)$\;
  \tcp{Perform a $(\min, +)$ convolution of $\cA$ and $\cB$}
  \For{i, j = $1, 2, \ldots t - 1$}{
    \lIf{$i+j \leq t$ and $\cost(\cA_i \cup \cB_j) <$ $\cost(\cC_{i+j})$}{
      $\cC_{i+j} \gets \cA_i \cup \cB_j$
    }
  }
  \Return $(\cC_1, \cC_2, \ldots, \cC_t)$
\end{algorithm}

The main technical lemma, which we prove next, shows that the output
clusterings have near-optimal cost with sufficiently high probability:

\begin{lemma}
    \label{lem:msd_prob}
    Let $\opt_i(S)$ denote the cost of the optimal $i$-clustering of
    $S$ and $(\cC_1, \cC_2, \ldots \cC_t)$ be the output of
    $\recMSD(S, t)$. Then, for any $\eps \in [0, 1)$ and any
    $r \in \{1, 2, \ldots t\}$,
    $$\Pr\left[\cost(\cC_r) \leq \frac{\opt_r(S)}{1 - \eps}\right] \geq \eps^{r-1}.$$
\end{lemma}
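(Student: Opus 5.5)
We argue by strong induction on $r$ (and, within it, on $|S|$), for a fixed call $\recMSD(S,t)$ with $r\le t$.

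\emph{Base cases.} If $r=1$, then $\cC_1=\{S\}$ always, and $\cost(\cC_1)=\diam(S)=\opt_1(S)$, so the event holds with probability $1=\eps^0$. The same argument covers $|S|=1$: every clustering then has cost $0$. Costs never increase during the convolution, so $\cost(\cC_r)\le\diam(S)$ in every case.

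\emph{Inductive step.} Fix $r\ge 2$ and an optimal $r$-clustering $X^\star_1,\dots,X^\star_{r'}$ of $S$ with $r'\le r$ nonempty clusters.

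\emph{Case $\diam(S)\le \opt_r(S)/(1-\eps)$.} The event holds deterministically, since $\cost(\cC_r)\le\diam(S)$.

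\emph{Case $\diam(S)>\opt_r(S)/(1-\eps)$.} Take the diameter pair $x,y$ and map each $p\in S$ to $d(x,p)\in[0,\diam(S)]$. Each $X^\star_j$ maps into an interval $I^\star_j=[\min_{p\in X^\star_j}d(x,p),\max_{p\in X^\star_j}d(x,p)]$. By the triangle inequality, $I^\star_j$ has length at most $\diam(X^\star_j)$. Hence the union of these intervals has measure at most $\opt_r(S)<(1-\eps)\diam(S)$.

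Call $R$ \emph{good} if it lies in no $I^\star_j$. Good $R$ has probability greater than $\eps$; it is exactly $\ge\eps$ once we note that the union of the intervals is a closed set.

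For good $R$, every optimal cluster lies entirely in $S_1$ or entirely in $S_2$. Moreover, $x\in S_1$ always. Because $R<\diam(S)$ almost surely, or because $R$ avoids the interval of $y$'s cluster, we also have $y\in S_2$. So both sides receive at least one optimal cluster. Let $a\ge 1$ and $b\ge 1$ be their counts, so $a+b=r'\le r$. Then $\opt_a(S_1)+\opt_b(S_2)\le\opt_r(S)$, because the restricted clusters are feasible. Also $a,b\le r-1\le t-1$, so the recursive calls with parameter $t-1$ output entries $\cA_a$ and $\cB_b$.

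The two recursive calls use independent randomness. Since $a,b<r$, the induction hypothesis gives
\[
\cost(\cA_a)\le\frac{\opt_a(S_1)}{1-\eps}\ \text{w.p.}\ \ge\eps^{a-1},\qquad
\cost(\cB_b)\le\frac{\opt_b(S_2)}{1-\eps}\ \text{w.p.}\ \ge\eps^{b-1},
\]
and the sets $S_1,S_2$ are strictly smaller than $S$.

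The convolution then yields $\cost(\cC_{a+b})\le\opt_r(S)/(1-\eps)$. We have $\cC_r$ rather than $\cC_{a+b}$, but an $(a+b)$-clustering is also an $r$-clustering. To make this formal, either strengthen the statement to ``$\cC_r$ is at least as good as $\cC_i$ for $i\le r$,'' or simply choose $a+b=r$. The latter is possible: if $r'<r$, the optimum uses fewer clusters, and we can split off singletons without increasing cost, as in the preliminaries, so WLOG $r'=r$ with all clusters nonempty. This works as long as $|S|\ge r$; otherwise $\opt_r(S)=0$ and the argument is the same.

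Combining the pieces, the success probability is at least
\[
\eps\cdot\eps^{a-1}\eps^{b-1}=\eps^{r-1}.
\]

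The main subtle point is ensuring that both sides of a good split are nonempty with $a,b\ge1$, so that the exponents add up correctly. This is where we use that $x$ and $y$ lie on opposite sides, together with the singleton padding that gives $a+b=r$ exactly.
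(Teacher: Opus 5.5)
Your proof is correct in substance and follows the paper's route. You project onto $d(x,\cdot)$ for a diameter pair and observe that the optimal clusters' intervals have total length at most $\opt_r(S)<(1-\eps)\diam(S)$. A clean cut therefore occurs with probability at least $\eps$, and you multiply the independent success probabilities of the two recursive calls, whose exponents add to $r-2$. Inducting on $r$ rather than on $t$, as the paper does, is fine, since both decrease in the recursive calls.

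The only loose end is the case where the optimum uses $r'<r$ nonempty clusters. Your singleton-padding fix does not cover $|S|<r$. There you cannot reach $r$ nonempty clusters, so $a+b=|S|<r$, and the claim that ``the argument is the same'' is not accurate. Your alternative, the monotonicity statement for the algorithm's output, is true but left unproved.

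The paper avoids both issues with a simpler device. Let $a\ge 1$ be the number of optimal clusters in $S_1$, and apply the induction hypothesis to $\recMSD(S_2,t-1)$ with budget $r-a$ instead of $r'-a$.
\begin{itemize}
\item This is legal: $y\in S_2$ gives $a\le r-1$, and $a\ge 1$ gives $r-a\le t-1$.
\item It costs nothing, because $\opt_{r-a}(S_2)\le\opt_{r'-a}(S_2)$, as every $(r'-a)$-clustering is also an $(r-a)$-clustering.
\end{itemize}
The convolution then directly bounds $\cost(\cC_r)\le\cost(\cA_a)+\cost(\cB_{r-a})$, with exponents summing to $r-2$, uniformly in all cases and with no padding or monotonicity needed.

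Your remark about closed sets is also unnecessary. The complement of the union of intervals has measure greater than $\eps\diam(S)$ directly.
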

\begin{proof}
  We prove this via induction on $t$. In the base case, namely $t = 1$, the statement is trivially true since there is only one
  $1$-clustering. Otherwise, if $\opt_r(S) \geq \diam(S) (1 - \eps)$,
  then again the statement is true, since
  $\text{cost}(\cC_r) \leq \diam(S)$. Otherwise, consider the clusters
  $X_1, X_2, \ldots X_{r'}$ in the optimal $r$-clustering of $S$,
  where $r' \leq r$. For each cluster $X_h$, define $L_h$ to be the
  smallest distance from $x$ to some point in $X_h$ and $R_h$ to be
  the largest distance from $x$ to some point in $X_h$, and define
  $I_h$ to be the interval $[L_h, R_h]$. Observe that the length of
  $I_h$ is at most $\diam(X_h)$, since for any two points
  $a, b \in X_h$, $d(x, b) - d(x, a) \leq d(a, b) \leq \diam(X_h)$ by
  the triangle inequality. Hence, the sum of the lengths of the $r'$
  intervals corresponding to the optimal clusters is at most
  $\opt_r(S)$. Now, since $\opt_r(S) < (1-\eps) \diam(S)$, the
  threshold $R$ does not lie inside any of these intervals with
  probability at least $\eps$. In this event, we get a split $S_1$,
  $S_2$ where $S_1$ is precisely a union of some $r_1$ clusters of the
  optimal $r$-clustering of $S$ and $S_2$ is the union of the
  remaining $r'_2 = r' - r_1$ clusters. Also, let $r_2 = r - r_1 \geq r_2'$. The algorithm computes $\cA$ as
  the output of $\recMSD(S_1, t - 1)$ and $\cB$ as the output
  of $\recMSD(S_2, t-1)$. Since $1 \leq r_1, r_2 \le t - 1$, by the induction hypothesis, we
  get:\
    \begin{gather*}
      \Pr\left[\cost(\cA_{r_1}) \leq \frac{\opt_{r_1}(S_1)}{1 - \eps}\right] \geq \eps^{r_1-1} \quad\text{and}\quad
      \Pr\left[\cost(\cB_{r_2}) \leq \frac{\opt_{r_2}(S_2)}{1 - \eps}
      \right] \geq \eps^{r_2-1}. 
    \end{gather*}
    If both these events happen, then after the $(\min,+)$ convolution, we must have:
    \begin{gather*}
      \text{cost}(\cC_r) \leq \text{cost}(\cA_{r_1}) +
      \text{cost}(\cB_{r_2}) \leq \frac{\opt_{r_1}(S_1) +
        \opt_{r_2}(S_2)}{1 - \eps} \leq \frac{\opt_{r_1}(S_1) + \opt_{r'_2}(S_2)}{1-\eps}= \frac{\opt_r(S)}{1 - \eps}.
    \end{gather*}
    where the second inequality holds since $r_2' \le r_2$ and every $r_2'$-clustering is also a $r_2$-clustering. Now, two events are independent since $\recMSD(S_1,t-1)$
    and $\recMSD(S_2, t-1)$ are independent, and consequently
    the probability that both these events happen is at least
    $\eps^{r_1 + r_2-2} \geq \eps^{r-2}$. Therefore,
    \begin{equation*}
      \Pr\left[\cost(\cC_r) \leq \frac{\opt_r(S)}{1 - \eps}\right]
      \geq \eps \cdot \eps^{r-2} = \eps^{r-1}. \qedhere
    \end{equation*}
\end{proof}
 Finally, it remains to bound the running
time of the $\recMSD$ procedure.

\begin{lemma}
  \label{lem:MSD_time}
  The procedure $\recMSD(X, k)$ runs in time $n^{O(1)}$.
\end{lemma}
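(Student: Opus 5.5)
The plan is to bound the size of the recursion tree of $\recMSD(X,k)$ by $O(n)$ nodes, and then bound the work done at each node by a polynomial in $n$. The structural fact I will establish first is that every internal node splits its set into two nonempty parts. Consider a call $\recMSD(S,t)$ with $t\geq 2$ and $|S|\geq 2$. Then $\diam(S)>0$.

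\begin{itemize}
\item $S_1=S\cap\Ball(x,R)$ always contains $x$, since $R\geq 0$.
\item If $R<\diam(S)$, then $y\notin\Ball(x,R)$, so $y\in S_2$.
\end{itemize}

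Since $R$ is drawn uniformly from $[0,\diam(S)]$, the event $R=\diam(S)$ has probability zero. To make the bound deterministic, I would simply resample in that case, or equivalently draw $R$ from $[0,\diam(S))$; this does not affect the analysis of \Cref{lem:msd_prob}. Hence, almost surely, both $S_1$ and $S_2$ are nonempty proper subsets of $S$, and they partition $S$. In particular, the precondition $S\neq\emptyset$ holds in every recursive call.

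From this, the recursion tree is a full binary tree: every internal node has exactly two children. By induction from the root, the sets at the leaves form a partition of $X$ into nonempty sets, so there are at most $n$ leaves and hence at most $2n-1$ nodes in total. (Independently, the depth is at most $k-1$, since $t$ drops by one per level, but the leaf-counting argument already suffices.)

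Next I bound the per-node cost. Finding a diameter witness pair takes $O(n^2)$ time, and computing $S_1,S_2$ takes $O(n)$. The $(\min,+)$ convolution ranges over $O(t^2)\leq O(k^2)\leq O(n^2)$ pairs $(i,j)$. For each pair, forming $\cA_i\cup\cB_j$ and evaluating its cost by summing cluster diameters costs $O(n^2)$; alternatively, one can store each clustering's cost alongside it and add the two costs in $O(1)$. Copying the vectors of clusterings costs at most $O(k\cdot n)$ per node. So each node costs $n^{O(1)}$, and the total is $(2n-1)\cdot n^{O(1)}=n^{O(1)}$.

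The only genuinely delicate step is the nonemptiness/properness of the split. Without it, a call could recurse on $S$ itself (when $S_2=\emptyset$), and the tree could then have up to $2^{k}$ nodes. So I would make the handling of the measure-zero case $R=\diam(S)$ explicit; everything else is routine accounting.
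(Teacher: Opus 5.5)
Your proposal is correct and matches the paper's argument. The paper also uses $x\in S_1$ and $y\in S_2$ to show that every internal call splits $S$ into two nonempty parts. It then proves by induction that the subtree of a call on $S$ has at most $2|S|-1$ nodes, which is the same bound your leaf-partition count gives, and it charges $n^{O(1)}$ work per node. Your explicit handling of the probability-zero event $R=\diam(S)$, where $y\in S_2$ fails and an empty subproblem would arise, is a detail the paper passes over silently.
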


\begin{proof}
  Consider the recursion tree of $\recMSD$. We claim that any node corresponding to a call $\recMSD(S, t)$ has at most $2|S| - 1$ nodes in its sub-tree. This is trivially true for each leaf node. Also, any non-leaf node corresponding to a call $(S, t)$ has exactly two children $(S_1, t-1)$ and $(S_2, t-1)$ where $|S_1| + |S_2| = |S|$ with $1 \le |S_1|, |S_2| \le |S|-1$ since $x \in S_1, y \in S_2$. Thus, by the induction hypothesis, the total number of nodes in the subtree of this node is at most $1 + 2|S_1|-1+2|S_2|-1=2|S|-1$. Thus, there are $O(n)$ nodes in the recursion tree and we spend $n^{O(1)}$ time per node and the statement of the lemma follows.
\end{proof}

We can thus return the best $k$-clustering out of $O(1/\eps)^k \log n$
runs of $\recMSD(X, k)$ as our final output. The approximation factor
of our output is $\frac{1}{1 - \eps} = 1 + O(\eps)$ for small enough
$\eps$, with high probability, getting the following theorem:

\begin{theorem}
  For any $0 < \eps \leq 1/2$, there exists a randomized algorithm
  that runs in time $(1/\eps)^k \cdot n^{O(1)}$, and
  outputs a $(1 + 2 \eps)$-approximate solution for \MSD with high probability.
\end{theorem}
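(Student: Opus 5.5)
The theorem follows by combining \Cref{lem:msd_prob} with \Cref{lem:MSD_time} and standard probability amplification; no preprocessing via \Cref{lem:reduction:radii} is needed for the randomized version. The algorithm runs $\recMSD(X,k)$ independently $N := \ceil{\eps^{-(k-1)} \ln n}$ times. From each run it takes the $k$-clustering $\cC_k$ in the output vector, and it returns the cheapest one found. Every output $\cC_k$ is a feasible $k$-clustering of $X$ by construction: $\cC_i$ is initialized to $\{X\}$ and only ever replaced by unions $\cA_i\cup\cB_j$ of clusterings of the parts $S_1,S_2$, which partition $S$. Hence the returned solution is always feasible, and only its cost needs analysis.

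For the approximation guarantee, apply \Cref{lem:msd_prob} with $S=X$, $t=k$, $r=k$. A single run yields $\cost(\cC_k)\le \opt_k(X)/(1-\eps)$ with probability at least $\eps^{k-1}$. Since the runs use independent randomness, the probability that all $N$ runs fail is at most
\[
(1-\eps^{k-1})^N \le \exp(-\eps^{k-1}N) \le \tfrac1n ,
\]
which is the high-probability guarantee. The cost bound then converts to the stated factor: for $0<\eps\le 1/2$ we have $\frac{1}{1-\eps} = 1+\frac{\eps}{1-\eps}\le 1+2\eps$, because $1-\eps\ge 1/2$. Therefore the best of the $N$ solutions costs at most $(1+2\eps)\opt_k(X)$ with high probability.

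For the running time, each run takes $n^{O(1)}$ time by \Cref{lem:MSD_time}. Comparing the costs of $N$ candidate clusterings is polynomial per candidate. The total is $N\cdot n^{O(1)} = (1/\eps)^{k-1}\ln n\cdot n^{O(1)} = (1/\eps)^k n^{O(1)}$.

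The amplification itself is routine. The only point requiring care is that the per-node work inside \Cref{lem:MSD_time} is genuinely polynomial. This covers computing a diameter witness pair, sampling $R$ (a real threshold, but only the induced split $S\cap\Ball(x,R)$ matters, so it suffices to sample among the $O(n)$ distinct distances from $x$, weighted by interval lengths), and the $O(t^2)$ convolution with cost evaluations. All of these are polynomial, so there is no real obstacle.
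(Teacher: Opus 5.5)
Your proposal is correct and follows essentially the same route as the paper: it amplifies the per-run success probability $\eps^{k-1}$ from \Cref{lem:msd_prob} by returning the best of about $(1/\eps)^{k-1}\ln n$ independent runs of \recMSD (each polynomial by \Cref{lem:MSD_time}), and then uses $\frac{1}{1-\eps}\le 1+2\eps$ for $\eps\le 1/2$. Your explicit repetition count, failure bound, and remark on discretizing the choice of $R$ just make precise what the paper states in one line.
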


Substituting $\eps=1$ in the algorithm
from~\cite{doi:10.1137/1.9781611977912.69} gives an $8$-approximation
for \MSD in time $n^{O(1)}$; we can take the best of our output and
this $8$-approximation to get an $(1+\eps)$-approximation in
expectation while incurring an additional additive factor of
$n^{O(1)}$ in the runtime.  %
Finally, we can derandomize the above algorithm in time
$(k/\eps)^k \cdot n^{O(1)}$ by first preprocessing the input to have
integral distances in $[1, O(k/\eps)]$, and then trying every integral
possibility of the threshold $R$ in $[0, \diam(S))$ instead of
choosing a random one, thus proving~\Cref{thr:mainMSD}; we defer the
details to \Cref{sec:derandomization}.

Moreover, the ideas above can be used %
to obtain a simple deterministic $n^{k + O(1)}$ time exact algorithm for \MSD, improving and simplifying the previous
$n^{5k + O(1)}$ algorithm from \cite{10.1609/aaai.v39i15.33699}. We
defer the details to~\Cref{sec:exact_msd}. The same algorithm can also
be utilized to obtain a $(1+\eps)$ approximation for the \MSD problem
(and thus, a $(2+\eps)$-approximation for the \MSR problem) under any
mergeable constraints in FPT time. The details for the same are
deferred to~\Cref{sec:mergeable_clustering}.

\section{An FPT Approximation Scheme for \MSR}
\label{sec:fpt_apx_scheme_msr}

In this section we describe our FPT approximation scheme for
\MSR. Using the preprocessing lemma (\Cref{lem:reduction:radii}), we
can assume that the distances are integers and the optimum is bounded
by $\frac{8k}{\eps}$. We fix a total ordering $\prec$ over $X$, and
assume that $\prec$ is globally available throughout the run of our
algorithm. Let $\OPT$ denote the optimal solution (interpreted as a set of balls). We will assume that
no optimal ball in $\OPT$ contains the center of another optimal ball
(since otherwise, we could expand the first optimal ball by the radius
of the second optimal ball to include the second optimal ball inside
it, without adding to the cost).

\subsection{The General Approach}
\label{sec:general_approach_msr}
Our algorithm is recursive in nature. Suppose that we need to cluster
a set $S\subseteq X$ of points using at most $t\leq k$
clusters. Recall from~\Cref{sec:notation} that $\OPT(S)$ denotes the
balls of the optimal solution $\OPT$ (for the original problem on
$(X,d)$ with up to $k$ clusters) which contain at least one point from
$S$. Suppose that $\OPT(S)$ consists of balls
$B_1^\star = \Ball(c_1^\star, r_1^\star), \ldots, B_h^\star =
\Ball(c_h^\star, r_h^\star)$, where $c_i^\star$ and
$r_i^\star$ denote the center and radius of the optimal ball
$B_i^\star$. We will assume that $h \le t$ and $c_i^\star \in S$ for all $i \in [h]$. We call pairs $(S, t)$ that satisfy this condition as \emph{valid pairs}.

\begin{defn}[Valid Pair]
  For a subset $S \subseteq X$ and a parameter $t \le k$, the pair
  $(S, t)$ is said to be \emph{valid}, if $|\OPT(S)| \le t$ and the center of
  each ball in $\OPT(S)$ belongs to $S$.
\end{defn}

Although not all subproblems we encounter will form a valid pair, our analysis will recursively maintain the invariant that the input to each recursive call is a valid pair. In particular, observe that the original input $(X, k)$ forms a valid pair. Our algorithm consists of three phases:
\begin{enumerate}
\item \emph{Computing a ``defensive clustering'':} In this phase, we cluster
  the subset $S$ of points into ``defensive balls'', such that each
  optimal ball in $\OPT(S)$ is contained in at least one defensive
  ball. We ensure that these defensive balls themselves do not have
  high cost, and satisfy some other good properties. (See
  \Cref{sec:defensive-clustering} for the definition of the defensive
  clustering, and \Cref{sec:constr-defensive-clust} for the algorithm
  to construct them.)

\item \emph{Defining the subproblems:} In this phase, we extend a ball
  $E$ (initialized as an arbitrary defensive ball), until we can
  divide the problem into three parts: the points in the defensive
  balls on the boundary of $E$, and then the non-boundary points
  decompose into those falling inside $E$ and those outside $E$. At
  the end of this extension, we want that the defensive balls on the
  boundary have a negligible cost compared to the radius of $E$, and
  that a constant number of optimal balls inside $E$ account for a
  constant fraction of the radius of $E$. This is done in \Cref{sec:expansion-process}.

\item \emph{Recursively Solving the Subproblems:} Since the defensive
  balls on the boundary in our solution are inexpensive, we can just
  include them in our solution (for technical reasons, we might need to expand these balls before including them, while charging the cost of expansion to the additional covered optimal balls). We now cover the points not covered by
  these balls (the ``non-boundary'' points) by recursively solving the
  problems inside and outside $E$ independently. However, for the
  inside problem, we need to first guess a constant number $M$ of
  optimal balls before the recursive call, thereby reducing the
  optimal value on the sub-instance, and ensuring rapid progress.
\end{enumerate}

\subsection{Phase 0: Defining the ``Ideal'' Defensive Clustering}
\label{sec:defensive-clustering}

To state our algorithm, we define the ``defensive clustering''---this
is the clustering which we find in the first phase. Hence the
algorithm presented in this section is really a thought experiment,
which takes the optimal solution in addition to a subset $S \sse X$,
and outputs a collection $\{D_1, D_2, \ldots, D_{t'}\}$ of
\emph{defensive balls}. This collection of balls is said to be a
\emph{defensive clustering} of $\OPT(S)$ if:
\begin{enumerate}[label=(\roman*)]
\item For each \emph{defensive ball} $D_j$, there exists a non-empty
  subset $I(j) \subseteq [h]$, such that for all
  $i \in I(j), B_i^\star \subseteq D_j$. For each $i \in I(j)$, we say
  that the ball $D_j$ \emph{defends} $B_i^\star$ and that $D_j$ is the
  \emph{defender} of $B_i^\star$.
\item The sets $I(1), I(2), \ldots, I(t')$ form a partition of $[h]$. Crucially, note that an optimal ball might be inside multiple defensive balls. However, it will still have exactly one designated defender.
\item For each defensive ball $D_j$, there exists an index
  $\kappa(j) \in I(j)$, such that the radius $r_j$ of $D_j$ satisfies
  $r_j \le r_{\kappa(j)}^\star + \sum_{i \in I(j)} r_i^\star$. The
  ball $B_{\kappa(j)}^\star$ is designated as the \emph{core} of
  $D_j$.
\end{enumerate}

In this section, we show that such a clustering indeed exists; we then
algorithmically find such a clustering in
\Cref{sec:constr-defensive-clust}.  Note that any defensive clustering
is already a $2$-approximate solution, since
$$\sum_{j=1}^{t'} r_j = \sum_{j=1}^{t'} r_{\kappa(j)}^\star +
\sum_{j=1}^{t'} \sum_{i \in I(j)} r_i^\star \leq 2 \sum_{i=1}^{h}
r_i^\star.$$

In the thought experiment formalized as {\tt
  DefensiveClustering} in \Cref{alg:compute_defensive_cluster}, we
imagine being given $\OPT(S)$ as input. Initially, every optimal ball
is marked undefended, and we have no defensive balls. Suppose at some
step, we have found $j-1$ defensive balls, but they do not yet cover
$S$. We open a new defensive ball $D_j$ as follows: we choose the
smallest uncovered point (according to the ordering $\prec$) to be the
center $c_j$ of $D_j$, and initialize its radius $r_j$ to be twice the
radius of the least-indexed optimal ball (say $B_i^\star$) containing
$c_j$---this ensures that $B_i^\star \subseteq D_j = \Ball(c_j,
r_j)$. We mark every undefended optimal ball---and, in particular, the
ball $B_i^\star$---which lies inside $D_j$ as being defended by $D_j$,
and let $I(j)$ be the set of balls defended by $D_j$. We also
designate the optimal ball $B_i^\star$ as being the \emph{core} of
defensive ball $D_j$ by setting $\kappa(j) = i$.

\bigskip
\begin{algorithm}[H]
  \caption{{\tt DefensiveClustering}}
  \label{alg:compute_defensive_cluster}
  \DontPrintSemicolon
  \textbf{Global:} The metric space $(X, d)$, the parameter $k$, and a total order $\prec$ on $X$.\;
  \KwIn{An optimal $t$-clustering $\OPT(S) = \{B_1^\star, \ldots, B_h^\star\}$ of a subset $S$ of $X$ such that $B_i^\star = \Ball(c_i^\star, r_i^\star)$ for all $i\in[h]$.}
  \KwOut{A defensive clustering of $\OPT(S)$.}
  $j \gets 1$\;
  $U \gets S$\tcp*{uncovered points}
  $Z \gets [h]$ \tcp*{indices of undefended optimal balls}
  \While{$U \neq \emptyset$}{
    $c_j \gets $ smallest point of $U$ according to $\prec$\tcp*{center of the new defensive ball}
    $i \gets \min \{u \in [h]: c_j \in B_u^\star\}$\tcp*{least index among optimal balls containing $c_j$}
    $r_j \gets 2 r_i^\star$ \tcp*{radius of the new defensive ball}
    $\kappa(j) \gets i$\tcp*{index of the core of the new defensive ball}
    $I(j) \gets \{z \in Z: B_z^\star \subseteq \Ball(c_j, r_j)\}$\tcp*{balls marked as defended by $\Ball(c_j, r_j)$}
    $Z \gets Z \setminus I(j)$\;
    \While{$\exists$ an $i' \in Z$, such that $c_{i'}^\star \in \Ball(c_j, r_j)$}{
      Let $i''$ be the index of the largest such optimal ball\;
      $r_j \gets r_j + r_{i''}^\star$\;
      $I(j) \gets I(j) \cup \{z \in Z: B_z^\star \subseteq \Ball(c_j, r_j)\}$\;
      $Z \gets Z \setminus I(j)$\;
    }
    $D_j \gets \Ball(c_j, r_j)$, $U \gets U \setminus D_j$, $k_j \gets |I(j)|$\;
    $j \gets j + 1$\;
  }
  \Return $\{D_1, D_2, \ldots, D_{t'}\}, (k_1, k_2, \ldots, k_{t'})$ for
    $t' \gets j - 1$\;
\end{algorithm}
\bigskip

Now, while the defensive ball $D_j$ contains the center of some
undefended optimal ball, we increase the radius $r_j$ by the radius of
the highest-radius such optimal ball (say $B_{i''}^\star$). We say that
each undefended optimal ball which gets included inside $D_j$ after
this expansion as being defended by $D_j$. (In particular, the
largest-radius ball $B_{i''}^\star$ is marked as being defended by
$D_j$, and $i'' \in I(j)$.) \Cref{fig:defensive_ball_image} shows the
creation process for a defensive ball. The process stops when the ball
$D_j$ does not contain the center of any undefended optimal ball.

We now verify that the output of
{\tt DefensiveClustering}
satisfies all the requirements of a defensive clustering. Indeed, for
any optimal ball $B_q^\star$, consider the first moment when its
center $c_q^\star$ is contained in some defensive ball. This could be
a result of one of the following two events: (a)~the opening of a new
defensive ball $D_j$, or (b)~the expansion of the latest defensive
ball $D_j$. In either case, just before this event, $c_q^\star$ was
uncovered, and hence $B_q^\star$ was undefended---equivalently,
$q \in Z$ in the pseudo-code. After this event, either
$B^\star_q \subseteq D_j$, in which case it will get marked as
defended by $D_j$ immediately, or we will expand $D_j$ to fully
include $B_q^\star$ inside $D_j$ in the next step, and then mark
$B_q^\star$ as defended by $D_j$. Hence, at termination, every optimal
ball is defended by exactly one defensive ball, and
$I(1), \ldots, I(t')$ form a partition of $[h]$.

Also, when a defensive ball $D_j$ is first opened and $B_i^\star$ is
an optimal ball covering $c_j$, the radius $r_j$ is initialized to be
$2r^\star_{i}$, the index $i$ is included into $I(j)$ and the index of
$D_j$'s core optimal ball $\kappa(j)$ is set as $i$. Later, whenever we expand
the ball $D_j$ by the radius $r_{i''}^\star$ of some optimal ball
$B_{i''}^\star$, we include $i''$ in $I(j)$. Hence,
$r_j \le 2r_{\kappa(j)}^\star + \sum_{i \in I(j)\setminus
  \{\kappa(j)\}} r_i^\star = r_{\kappa(j)}^\star + \sum_{i \in I(j)}
r_i^\star$.

\subsection{Phase 1: Algorithmically Finding the Defensive Clustering}
\label{sec:constr-defensive-clust}

Having defined defensive clusterings via the thought experiment given
in \texttt{DefensiveClustering}, we now show how to find it in FPT
time. Note that, for any fixed choice of
$(t', r_1, r_2, \ldots, r_{t'})$, the center $c_1$ must be the smallest
point in $S$ (according to $\prec$), and in general, the center $c_i$
must be the smallest point in
$S \setminus (\cup_{j < i} \Ball(c_j, r_j))$.  Thus, the sequence of
the centers is uniquely determined, and hence so are the defensive
balls. Since the defensive clustering is a $2$-approximate solution
and the optimum is bounded by $\frac{8k}{\eps}$, we must have:
$$\sum_{j=1}^{t'} r_j \le \frac{16k}{\eps}.$$

Apart from the set of the defensive balls, our algorithm will need the
values $k_j = |I(j)|$ for all $j \in [t']$.
Thus, going ahead, we will slightly abuse notation to assume that
``defensive clustering'' denotes the set of defensive balls, along
with the sequence $k_1, k_2, \ldots, k_{t'}$.

We now present an algorithm \texttt{CandidateClusterings},
whose pseudocode appears as
\Cref{alg:defensive_cluster_candidates}. This algorithm iterates over
all valid choices of
$(t', (r_1, \ldots, r_{t'}), (k_1, \ldots, k_{t'}))$, and returns all
the valid clusterings corresponding to these choices. The following
\Cref{lem:small_set} states that the number of output clusterings is
small and at least one of them must be a defensive clustering of
$\OPT(S)$.

\begin{algorithm}[H]
  \caption{{\tt CandidateClusterings}}
  \label{alg:defensive_cluster_candidates}
  \DontPrintSemicolon
  \textbf{Global:} The metric space $(X, d)$, the parameter $k$, and a total order $\prec$ on $X$.\;
  \KwIn{subset $S\subseteq X$ with $S\neq \emptyset$, and parameter $1 \le t \le k$.}
  \KwOut{A set of defensive clusterings of $S$}
  $\cC \gets \emptyset$\;
  \For{$t' \in \{1, 2, \ldots, t\}$}{
    \For{each tuple $(r_1, r_2, \ldots, r_{t'})$ of non-negative integers with sum $\le \frac{16k}{\eps}$}{
      error $\gets \text{false}$\;
      \For{$i = 1, 2, \ldots, t'$}{
        $c_i \gets$ the smallest element of $S \setminus \cup_{j=1}^{i-1} \Ball(c_j, r_j)$ according to $\prec$\;
        \If{no such element exists}{
          error $\gets \text{true}$\;
          \textbf{break}\;
        }
        $D_i \gets \Ball(c_i, r_i)$
      }
      \If{error = false and $\cup_{j=1}^{t'} \Ball(c_j, r_j) = S$}{
        \For{each tuple $(k_1, k_2, \ldots, k_{t'})$ of positive integers with sum $\le t$}{
          $C \gets (\{D_1, D_2, \ldots, D_{t'}\}, (k_1, k_2, \ldots, k_{t'}))$\label{l:computed_defensive_clustering}\;
          $\cC \gets \cC \cup \{C\}$\;
        }
      }
    }
    
  }
  \Return $\cC$\;
\end{algorithm}

\begin{lemma}
  \label{lem:small_set}
  {\tt CandidateClusterings} %
  returns a set $\cC$ of
  clusterings with size at most $(\frac{1}{\eps})^{O(k)}$, such that
  $\DefensiveClustering(\OPT(S)) \in \cC$.
\end{lemma}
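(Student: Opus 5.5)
The proof has two independent parts: a counting argument for the size of $\cC$, and a correctness argument showing that the defensive clustering produced by the thought experiment $\DefensiveClustering(\OPT(S))$ is among the enumerated candidates.

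\textbf{Size bound.} Every element of $\cC$ is determined by a choice of $t'\le t\le k$, a tuple $(r_1,\dots,r_{t'})$ of non-negative integers with sum at most $N:=\frac{16k}{\eps}$, and a tuple $(k_1,\dots,k_{t'})$ of positive integers with sum at most $t$. Once $t'$ and the radii are fixed, the centers are determined deterministically by the loop, so each such choice yields at most one clustering. By stars and bars, the number of radius tuples is $\binom{N+t'}{t'}\le \binom{N+k}{k}\le (e(N+k)/k)^k = (O(1/\eps))^k$. The number of $k$-tuples is at most $\binom{t}{t'}\le 2^k$. Summing over the at most $k$ values of $t'$ gives $|\cC|\le k\cdot 2^k\cdot (O(1/\eps))^k = (1/\eps)^{O(k)}$, using $\eps\le 1$; the factor $k$ is absorbed because $k\le 2^k$.

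\textbf{Membership.} Let $\{D_1,\dots,D_{t'}\}$ with $(k_1,\dots,k_{t'})$ be the output of \Cref{alg:compute_defensive_cluster}, where $D_j=\Ball(c_j,r_j)$. I would check that this exact tuple passes all filters of \Cref{alg:defensive_cluster_candidates}.
\begin{enumerate}
\item \emph{Number of balls.} The sets $I(j)$ are non-empty and partition $[h]$, so $t'\le h\le t$, where $h\le t$ comes from validity of $(S,t)$.
\item \emph{Integrality of radii.} Each $r_j$ is a sum of optimal radii, which are integers after the preprocessing of \Cref{lem:reduction:radii}. (Optimal centers may be assumed to be points and distances are integers, so optimal radii can be taken integral.)
\item \emph{Sum of radii.} Property (iii) gives $\sum_j r_j\le 2\opt(S)\le 2\cdot\frac{8k}{\eps}$.
\item \emph{Sum of the $k_j$.} Each $k_j=|I(j)|\ge 1$, and $\sum_j k_j=h\le t$.
\end{enumerate}
So the algorithm reaches the iteration for this $t'$ and these radii.

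\textbf{Same centers.} By induction on $i$, the center chosen by \Cref{alg:defensive_cluster_candidates} equals $c_i$. In \Cref{alg:compute_defensive_cluster}, $c_i$ is the $\prec$-smallest point of $U=S\setminus\bigcup_{j<i}D_j$; the candidate algorithm uses the same rule with the same earlier balls. Since the thought experiment runs while $U\neq\emptyset$ and stops when $U$ becomes empty, no error occurs and the union of the $D_j$ is exactly $S$. Hence the tuple with the true $(k_1,\dots,k_{t'})$ is added to $\cC$ at line~\ref{l:computed_defensive_clustering}.

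\textbf{Main obstacle.} The only delicate point is the integrality and the sum bound on the radii. The sum bound needs $\opt(S)\le\opt\le 8k/\eps$, which holds because $\OPT(S)\subseteq\OPT$. Integrality relies on the preprocessing yielding integral optimal radii. Everything else is routine counting.
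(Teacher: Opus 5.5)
Your proof is correct and follows the same approach as the paper. The size bound uses the same stars-and-bars count: you bound each $t'$ separately and multiply by $k$ instead of using the hockey-stick identity, and you use $\binom{t}{t'}$ instead of $2^t$ for the $(k_1,\dots,k_{t'})$-tuples. Membership is shown, as in the paper, by plugging the parameters of $\DefensiveClustering(\OPT(S))$ into the enumeration. You verify the filters (integral radii, $\sum_j r_j \le 16k/\eps$, $t' \le t$, $\sum_j k_j \le t$, and matching centers) more explicitly than the paper, which relies on the remarks just before the lemma for these.
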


\begin{proof}
  The clustering $C$ computed in
  line~\ref{l:computed_defensive_clustering} with the choice of
  $(t', r_1, \ldots, r_{t'}, k_1, \ldots, k_{t'})$ borrowed from
  $\DefensiveClustering(\OPT(S))$, is exactly
  $\DefensiveClustering(\OPT(S))$. Now the number of tuples of $t'$
  non negative integers with sum not exceeding $s$ equals
  $\binom{s+t'}{s}$. Thus, by the hockey stick identity, the number of
  tuples of at most $t$ non-negative integers with sum not exceeding
  $s$ equals $\binom{s+t+1}{s+1}$. Substituting
  $s \le \frac{16k}{\eps}$ and $t \le k$ and using
  $\binom{n}{k} \leq (\frac{en}{k})^k$, we get that the number of
  possibilities for $(t', r_1, \ldots, r_{t'})$ is at most:
  \begin{gather}
    \binom{\lfloor 16k/\eps \rfloor + k + 1}{k + 1} \leq \left(\frac{e
        (16k/\eps + k + 1)}{k + 1} \right)^{k+1} \leq
    \left(\frac{1}{\eps}\right)^{O(k)}.
  \end{gather}
  The number of tuples $(k_1, k_2, \dots, k_{t'})$ of positive integers
  having sum at most $t$ is upper bounded by $2^t \le 2^k$, since the
  number of tuples of positive integers with sum $m$ is given by
  $2^{m-1}$. Hence, we get $|\cC| \le (\frac{1}{\eps})^{O(k)}$.
\end{proof}

\subsection{The Recursive Algorithm}
\label{sec:recursive-algorithm}

Our algorithm uses a recursive procedure \recMSR (presented in
\Cref{alg:rec_msr}), which uses constants $L = \ceil{\frac{6}{\eps} \ln \frac{1}{\eps}}$ and $M = \ceil{\frac{1}{\eps^2}}$. The
procedure takes as input a set $S\subseteq X$ of uncovered points, an
upper bound $t$ on the number of clusters and a depth parameter
$0 \le \ell \le L$ which we call the \emph{level} of the
subproblem. It returns a $t$-clustering of $S$. The output for the original input problem shall be computed by
$\texttt{recMSR}(X, k, 0)$.
The procedure
  is not allowed to open centers outside $S$. Note that, in general
  it might be possible that for a cluster $C \subseteq S$ of points,
  the optimal center is outside the subset $S$. However, we will
  design our algorithm in such a way that it suffices for centers to
  be inside the input set of points for the recursive subproblems. We enforce this requirement since the correctness of our thought experiment {\tt DefensiveClustering}(\Cref{alg:compute_defensive_cluster}) relies on the assumption that the centers of all the optimal balls that intersect $S$ are in $S$ itself. The correctness of~\Cref{alg:compute_defensive_cluster} is in-turn crucial to get an FPT number of candidate defensive clusterings via~\Cref{alg:defensive_cluster_candidates}.

\begin{algorithm}[h]
  \caption{\recMSR procedure}
  \label{alg:rec_msr}
  \DontPrintSemicolon
  \textbf{Global:} The metric space $(X, d)$, the parameters $k, M, L$\;
  \KwIn{A subset $S\subseteq X$, $t \in [k]$, $\ell \in \{0,\ldots,L\}$}
  \KwOut{A clustering of $S$ with at most $t$ balls.}
  \If{$S = \emptyset$}{\Return $\emptyset$\tcp*{no clusters required, $\cost(\emptyset) = 0$}}
  \If{$t = 0$}{
    \Return null \tcp*{no way to cluster, $\cost(null) = \infty$}
  }
  $\cC \gets {\tt CandidateClusterings}(S, t)$\;

  $\cT \gets \cC$ \tcp*{candidate solutions}
  \lIf{$\ell = L$}{
    \Return the best solution from $\cT$
  }
  \For{\label{l:clustering_loop}each clustering $(\{D_1, D_2, \ldots, D_{t'}\}, (k_1, k_2, \ldots, k_{t'})) \in \cC$}{
    \For{$j=1,2,\ldots, t'$}{
      \For{$r'_j = 0, 1, \ldots, \lfloor \frac{8k}{\eps} \rfloor$\label{l:expansion_1}}{
        $D'_j \gets \Ball(c(D_{j}),r(D_j) + r'_j)$ \label{l:good_ball_expansion}\tcp*{expanded good defensive ball}
        $\cA \gets \recMSR(S \setminus D'_j, t - 1, \ell)$\label{l:insignificant_core_recursion}\;
        \If{$\cA$ is not null}{
        $\cT \gets \cT \cup \{\{D'_j\} \cup \cA\}$\tcp*{a candidate solution}
        }
      }
    }
    $E, J_b, k_{in}, k_{out} \gets \texttt{expandedBall}(S, (\{D_1, \ldots, D_{t'}\}, (k_1, \ldots,k_{t'})))$\;
    Let $J_b$ be $\{j_1, j_2, \ldots, j_s\}$\;
    \For{tuples $(r'_1, r'_2, \ldots, r'_{s})$ of non negative integers with sum $\leq \lfloor \frac{8k}{\eps}\rfloor$ \label{l:expansion_2}}{
      $D'_q \gets \Ball(c(D_{j_q}),r(D_{j_q}) + r'_{q})$ for all $q \in [s]$\label{l:expanded_boundary_ball}\tcp*{expanded boundary defensive balls}
      $S_{out} \gets S \setminus (E \cup (\cup_{q \in [s]} D'_q))$ \label{l:s_out}\tcp*{outer uncovered points}
      $S_{in} \gets S \cap (E \setminus (\cup_{q \in [s]} D'_q))$\label{l:s_in}\tcp*{inner uncovered points}
      $\cB^\star_{out} \gets \recMSR(S_{out}, k_{out}, \ell)$ \label{l:outside_recursion} \tcp*{direct recursive call for outer points}
      $\cB^\star_{in} \gets {\tt guessAndSolveMSR}(S_{in}, k_{in}, \ell + 1)$ \tcp*{guess and solve for inner points}
      \If{$\cB^\star_{in}$ and $\cB^\star_{out}$ are not null}{
      $\cT \gets \cT \cup \{\{D'_1, \ldots, D'_s\} \cup \cB_{in}^\star \cup \cB^\star_{out}\}$ \tcp*{a candidate solution}
      }
    }
  }
  \Return the best solution from $\cT$
\end{algorithm}

The high level idea of the procedure is to choose a set $\cB$ of balls
to include in our solution, and cover the points not covered by $\cB$
recursively by dividing the remaining problem into subproblems. The
challenge is to ensure that the subproblems do not interact
\emph{badly} with one another, while also making progress at every
level of the recursion tree. Let us first describe a key property that we would like the set of our chosen
balls to have:
\begin{defn}[Central property]
    A set $\cB$ of balls is said to satisfy the \emph{central property} if for each optimal ball $B^\star_i$ whose center $c_i^\star$ is covered by $\cB$, $\cB$ must cover all of $B_i^\star$, that is$$c_i^\star \in \cup_{B \in \cB} B \implies B_i^\star \subseteq \cup_{B \in \cB} B.$$
\end{defn}
Observe that, if a set $\cB$ of balls satisfying the central property is removed from $S$ (where $(S, t)$ forms a valid pair), then the remaining set $S'$ of uncovered points satisfies one of the two requirements of a valid pair: for each optimal ball $B_i^\star$ that intersects $S'$, its center $c_i^\star$ must be in $S'$. To understand this requirement, suppose this property was not satisfied. Then, there might be an optimal ball, say
$B_i^\star$ whose center $c_i^\star$ is covered by $\cB$, but (in the extreme case) none of the other points in $B_i^\star$ are covered. Hence, in the recursive subproblems obtained on removing the set of points covered by $\cB$, $c_i^\star$ will not be present. Suppose there exists a subproblem with the set of points being $B_i^\star \setminus \{c_i^\star\}$ and the number of allowed clusters being $1$. In this case, the optimal cost of covering $B_i^\star \setminus \{c_i^\star\}$ by a ball centered at some point in $B_i^\star \setminus \{c_i^\star\}$ might be twice as much as $r_i^\star$.
The following lemma states that for any set $\cB$ of defensive balls, we can modify them to satisfy the central property by expanding them one by one and charging the expansion to the additional optimal balls covered.
\begin{lemma}
\label{lem:expansion_for_central_property}
    Consider any set $\cB = \{D_{j_1}, D_{j_2}, \ldots, D_{j_s}\}$ of defensive balls corresponding to indices $J = \{j_1, j_2, \ldots, j_s\}$. In time $(\frac{1}{\eps})^{O(k)} \cdot n^{O(1)}$, one can compute a collection of $(\frac{1}{\eps})^{O(k)}$ sets of balls such that at least one set $\cB'$ in the collection must satisfy the central property along with:$$\cost(\cB') \leq \sum_{j \in J} r_{\kappa(j)}^\star + \sum_{i \in T}r_i^\star$$ where $T$ denotes the indices of the set of optimal balls fully covered by $\cB'$. 
\end{lemma}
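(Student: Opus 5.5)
The plan is to expand the balls of $\cB$ one at a time, in a thought experiment that mirrors the inner loop of {\tt DefensiveClustering}, and then to guess the total expansion amounts by enumeration, exactly as in \Cref{alg:rec_msr} (lines~\ref{l:expansion_1} and~\ref{l:expansion_2}).

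\textbf{Thought experiment.} Initially set $D'_q := D_{j_q}$ for $q\in[s]$. Call an optimal ball $B_i^\star$ \emph{violating} if its center lies in $\cup_q D'_q$ but $B_i^\star \not\subseteq \cup_q D'_q$. While a violating ball exists:
\begin{itemize}
\item pick a violating ball $B_i^\star$ and some $D'_q$ with $c_i^\star\in D'_q$;
\item increase the radius of $D'_q$ by $r_i^\star$. By the triangle inequality $B_i^\star\subseteq D'_q$ afterwards, so $B_i^\star$ is now fully covered.
\end{itemize}
Charge each such increase to the ball $B_i^\star$. Since the covered region only grows, a fully covered ball stays covered, so each optimal ball is charged at most once. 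The process terminates after at most $k$ steps, and the final $\cB'$ satisfies the central property by construction.

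\textbf{Cost bound.} By property (iii) of a defensive clustering, the initial cost is
\[
\sum_{j\in J} r_j \le \sum_{j\in J} r^\star_{\kappa(j)} + \sum_{j\in J}\sum_{i\in I(j)} r_i^\star .
\]
Each $B_i^\star$ with $i\in I(j)$, $j\in J$, is contained in $D_j$, hence fully covered by $\cB'$. Each charged ball is also fully covered at the end. These two families are disjoint: a ball in $\cup_{j\in J}I(j)$ is covered from the start, so it is never violating and never charged. Hence both sums are over distinct indices of $T$, which gives the claimed bound.

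\textbf{Enumeration.} The final balls are $\Ball(c(D_{j_q}), r(D_{j_q})+r'_q)$, where $r'_q$ is the total extra radius added to ball $q$. All distances are integers, so every $r'_q$ is a non-negative integer, and $\sum_q r'_q \le \opt \le 8k/\eps$. We output one candidate set for every tuple of non-negative integers $(r'_1,\ldots,r'_s)$ with sum at most $\lfloor 8k/\eps\rfloor$. By the same hockey-stick counting as in \Cref{lem:small_set}, with $s\le k$, there are $(1/\eps)^{O(k)}$ tuples, and each candidate is built in $n^{O(1)}$ time. The tuple produced by the thought experiment is among them, which proves the lemma.

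\textbf{Main obstacle.} The delicate step is the charging argument: we must ensure that no optimal ball is charged twice, and none is both charged and already counted in some $I(j)$. This is handled by the monotonicity of coverage noted above. We also need the guessed tuple to reproduce exactly the balls of the thought experiment; this holds because the centers of the balls in $\cB$ are fixed and only the radii change.
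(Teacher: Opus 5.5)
Your proposal is correct and follows the paper's proof essentially step for step. It uses the same thought experiment: expand a ball containing a violating center by $r_i^\star$, charge that increase to the newly covered optimal ball (which lies outside $\bigcup_{j\in J} I(j)$), combine this with property (iii), and enumerate integer radius-increment tuples with sum at most $\lfloor 8k/\eps\rfloor$; your explicit remarks on termination and integrality of the increments spell out details the paper leaves implicit.
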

\begin{proof}
Consider the following thought experiment. Let us initialize $\cB'$ as $\cB$ and update $\cB'$ as follows in order to satisfy the central property: while there exists an optimal ball $B_i^\star$ not fully covered by the union of $\cB'$ whose center $c_i^\star$ is inside some ball $Y \in \cB'$, we expand $Y$ by $r_i^\star$. The additional cost ($r_i^\star$) paid for this step can be charged to the optimal ball $B_i^\star$. In other words, the cost of the expansion is upper bounded by the sum of the radii of the optimal balls that are covered at the end by $\cB'$ but were not covered at the beginning by $\cB$. The balls in $\cup_{j \in J} I(j)$ do not contribute to the cost of expansion since they were already covered by $\cB$. Thus, we get:%
\begin{align*}
    \cost(\cB') &\le \cost(\cB) + \sum_{i \in T \setminus \cup_{j \in J} I(j)} r_i^\star\\
    &= \sum_{j \in J} r_j + \sum_{i \in T} r_i^\star - \sum_{j \in J} \sum_{i \in I(j)} r_i^\star\\
    &\leq \sum_{j \in J} r_{\kappa(j)}^\star + \sum_{i \in T} r_i^\star.
\end{align*}
where the second inequality uses $r_j \leq r_{\kappa(j)}^\star +
\sum_{i \in I(j)} r_i^\star$. Now, since the optimal solution is
unknown, we do not know the outcome $\cB'$ of this thought
experiment. However, the sum of optimal radii is at most
$\frac{8k}{\eps}$ and we can simply iterate over all
$O(\frac{1}{\eps})^k$ possible choices of the final radii of the balls
in $\cB'$ (see lines~\ref{l:expansion_1} and~\ref{l:expansion_2} of
\recMSR).
\end{proof}

We begin with a set $\cC$ of defensive clusterings computed
using ${\tt CandidateClusterings}(S, t)$, and iterate over all
choices in $\cC$. By~\Cref{lem:small_set}, there exists
$(\{D_1, D_2, \ldots, D_{t'}\}, (k_1, k_2, \ldots, k_{t'})) \in \cC$
that is a defensive clustering of $\OPT(S)$. Recall that we use
$\kappa(j)$ to denote the core of the $j^{\text{th}}$ defensive ball
$D_j$, $I(j)$ to denote the set of indices of the optimal balls
defended by $D_j$, and $k_j$ equals $|I(j)|$ for all $j \in
[t']$. Now, there are two cases:

\begin{enumerate}
\item There exists a defensive ball $D_j$ with an insignificant core,
  that is $r_{\kappa(j)}^\star \leq \eps \sum_{i \in
    I(j)}r_i^\star$. In this case, note that $D_j$ can be included in
  the solution to cover all the optimal balls with indices in $I(j)$,
  at a cost of $(1 + \eps)$ times the sum of their radii since:
  $$r_j \le r_{\kappa(j)}^\star + \sum_{i \in I(j)} r_i^\star \le (1 + \eps) \sum_{i \in I(j)}r_i^\star.$$
  Thus, we simply include $D_j$ in our solution (after expanding it if
  required to satisfy the central property) and recurse on the
  remaining points at the same level $\ell$ (line
  \ref{l:insignificant_core_recursion} of \recMSR).

\item Each defensive ball has a significant core:
  $r_{\kappa(j)}^\star > \eps \sum_{i \in I(j)} r_i^\star$ for all
  $j \in [t']$. In this case, we initialize a ball $E$ as an arbitrary
  defensive ball of non zero radius (assumed to be $D_1$ without loss
  of generality). Define the boundary $\cB$ of $E$ as the set of
  defensive balls that intersect $E$, but are not contained within
  $E$. Each defensive ball that belongs to $\cB$ is referred to as a
  boundary ball. Our plan is to expand $E$ till we can identify the
  recursive subproblems. Thus, we call $E$ the ``expanded
  ball''. %
\end{enumerate}

We now describe the expansion process that we use for obtaining this
expanded ball.

\subsubsection{Phase 2: The Expansion Process}
\label{sec:expansion-process}

In this section, we give the details of the expansion process (which
is given in \texttt{expandedBall}, see \Cref{alg:expanded_ball}), and
note some crucial properties of the expanded ball $E$ obtained at the
end of this process. Let $J_b$ denote the set of indices of all the
boundary balls, $r_{\max}$ denote the maximum radii among them, and
$r_{\text{sum}}$ denote the sum of their radii. Observe that, each boundary ball must have a non zero radius. Indeed, any ball with a zero radius must either be fully inside or fully outside $E$. Suppose the set of boundary balls is non-empty. An expansion of $E$ by $2 r_{\max}$ we will get all the boundary balls included inside $E$. We will perform this expansion by $2r_{\max}$ (line \ref{l:expand_E}), if
one of the following two cases hold:
\begin{enumerate}
\item $r_{\text{sum}}$ is at least $8 r_{\max}$: In this case,
  the expansion is clearly favorable since the sum of radii of the
  optimal balls defended by the boundary balls is at least
  $r_{\text{sum}}/2$, and we pay $r_{\max} \le r_{\text{sum}}/2$ for
  this expansion. We call such an expansion a \emph{cheap expansion}.
\item At least one boundary ball is ``large'':
  $r_{\max} \ge \eps^2
  r(E)$. %
  We call such an expansion a \emph{large expansion}. Additionally, we
  also think of the initial formation of $E$ (as $D_1$) a large
  expansion (from radius $0$ to $r_1$).
\end{enumerate}

\begin{algorithm}[h]
  \caption{{\tt expandedBall}}
  \label{alg:expanded_ball}
  \DontPrintSemicolon
  \textbf{Global:} The metric space $(X, d)$, the parameters $k, M, L$.\;
  \KwIn{non empty subset $S\subseteq X$, and a defensive clustering $(\{D_1, D_2, \ldots, D_{t'}\}, (k_1, k_2, \ldots, k_{t'}))$.}
  \KwOut{An expanded ball that cannot be expanded further.}
  $E \gets D_1$, $c\gets c(D_1)$, $r \gets r(D_1)$ \tcp*{the current expanded ball}
  $J_{in} \gets \{j \in [t']: D_j \subseteq E\}$ \tcp*{defensive balls fully inside $E$}
  $J_b\gets \{j \in [t']\setminus J_{in}: D_j \cap E \neq \emptyset\}$\ \tcp*{indices of boundary defensive balls}
  \While{$J_{b} \neq \emptyset$}{
    $r_{\text{sum}} \gets \sum_{j \in J_b} r_j$\;
    $r_{\max} \gets \max_{j \in J_b} r_j$\;
    \If{$r_{\text{sum}} \ge 8 r_{\max}$ or $r_{\max} \ge \eps^2 \cdot r(E)$ \label{l:cheap_or_large_expantion}}{
      $r\gets r+2r_{\max}$, $E\gets \Ball(c,r)$ \label{l:expand_E}\tcp*{cheap expansion or large expansion}
      $J_{in} \gets \{j \in [t']: D_j \subseteq E\}$\;
      $J_b\gets \{j \in [t']\setminus J_{in}: D_j \cap E \neq \emptyset\}$\;
    } 
    \Else{
      \textbf{break}\;
    }
  }
  $J_{out} \gets [t'] \setminus (J_{in} \cup J_b)$\tcp*{defensive balls fully outside $E$}
  $k_{in} \gets \sum_{j \in J_{in}} k_j\label{l:k_in}$\tcp*{(lower bound on) \# of optimal balls fully inside $E$}
  $k_{out} \gets \sum_{j \in J_{out}} k_j\label{l:k_out}$\tcp*{(lower bound on) \# of
    optimal balls fully outside $E$}
  \Return $E,J_b, k_{in}, k_{out}$
\end{algorithm}

The expansion process terminates when either the set $J_b$ of boundary balls is empty, or when neither cheap nor large expansion is possible. At termination, we use $J_{in}$ to denote the set of (indices of) defensive balls completely inside $E$ and $J_{out}$ to be the set of (indices of) defensive balls completely outside $E$. $k_{in} = \sum_{j \in J_{in}} k_j$ (line \ref{l:k_in}) denotes the number of optimal balls defended by the defensive balls with indices in $J_{in}$. Similarly, $k_{out} = \sum_{j \in J_{out}} k_j$ (line \ref{l:k_out}) denotes the number of optimal balls defended by the defensive balls with indices in $J_{out}$. Before we describe how the subproblems are formed and solved, we note some crucial properties of the expanded ball. We begin with bounding the sum of the radii of the core optimal balls of the boundary balls:
\begin{lemma}
\label{lem:small_boundary_core}
$\sum_{j \in J_b} r_{\kappa(j)}^\star \le 4\eps^2 r(E)$.
\end{lemma}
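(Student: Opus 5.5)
At termination of \texttt{expandedBall}, either $J_b=\emptyset$, in which case the sum is empty and the bound is trivial, or the \textbf{break} branch was taken. In the latter case both expansion conditions fail for the final boundary set:
\[
r_{\text{sum}} < 8 r_{\max} \quad\text{and}\quad r_{\max} < \eps^2 r(E).
\]
Combining them gives $\sum_{j\in J_b} r_j = r_{\text{sum}} < 8\eps^2 r(E)$. It then remains to relate the core radii to the defensive radii.

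The relation I plan to use is $r_{\kappa(j)}^\star \le r_j/2$ for every defensive ball $D_j$. This follows from the construction in \texttt{DefensiveClustering}. The radius of $D_j$ is initialized to $2r_{\kappa(j)}^\star$, and afterwards it only increases. Property (iii) gives the same conclusion directly, since $\kappa(j)\in I(j)$ implies $r_j \le r_{\kappa(j)}^\star + \sum_{i\in I(j)}r_i^\star$.

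A caveat: the lemma concerns the defensive clustering that corresponds to $\OPT(S)$. For the candidate that actually equals $\DefensiveClustering(\OPT(S))$ (\Cref{lem:small_set}), the radii $r_j$ are exactly those of the thought experiment. So the initialization argument applies to that candidate, and the lemma is only needed for it.

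Summing the per-ball bound over $J_b$ gives
\[
\sum_{j\in J_b} r_{\kappa(j)}^\star \le \tfrac12 \sum_{j\in J_b} r_j < 4\eps^2 r(E).
\]
I expect no real obstacle. The only point needing care is the factor $2$ linking $r_j$ to its core, together with making sure that the stopping condition is evaluated at the final $E$, $J_b$, which holds because of how the loop exits.
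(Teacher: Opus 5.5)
Your argument is correct and is essentially the paper's proof. If the final boundary is nonempty, both expansion tests fail, so $\sum_{j \in J_b} r_j < 8\eps^2 r(E)$; halving via $r_j \ge 2r_{\kappa(j)}^\star$ from the initialization in \texttt{DefensiveClustering} gives the bound. One small correction: property (iii) is an \emph{upper} bound on $r_j$ and cannot yield $r_j \ge 2r_{\kappa(j)}^\star$, so drop that alternative justification — the initialization argument (the radius starts at $2r_{\kappa(j)}^\star$ and only grows) is the one that works.
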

\begin{proof}
    If $J_b = \emptyset$, this is trivially true. If $J_b \neq \emptyset$, since no further (cheap or large) expansion of $E$ was possible, it must be that $r_{\text{sum}} < 8 r_{\max}$ and $r_{\max} < \eps^2 r(E)$. Thus, we get:
    $$\sum_{j \in J_b} r_j = r_{\text{sum}} < 8 r_{\max} < 8 \eps^2 r(E).$$
    The statement of the lemma then follows from $r_j \geq 2 r_{\kappa(j)}^\star$ since each defensive ball is initialized to have a radius twice that of its core optimal ball.
\end{proof}
Now, let $R_{\text{cheap}}$ denote the sum of all the expansions of $E$
that were cheap (that is, the sum of $2 r_{\max}$ in
line~\ref{l:cheap_or_large_expantion} when
$r_{\text{sum}} \ge 8 r_{\max}$), and let
$R_{\text{large}} = r(E) - R_{\text{cheap}}$ be the sum of all
the expansions that were large (including the initial formation of
$E$). The following lemma states that at least a constant fraction of the expansion occurs via large expansions.

\begin{lemma}
  \label{lem:mostly_large_expansions}
  $R_{\text{large}} \geq \frac{r(E)}{2}$.
\end{lemma}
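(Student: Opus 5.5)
The plan is to bound $R_{\text{cheap}}$ from above by $r(E)/2$; this is equivalent to the statement, since $R_{\text{large}} = r(E) - R_{\text{cheap}}$. I will view the run of \texttt{expandedBall} as a sequence of steps $s = 0,1,2,\ldots$. Step $0$ is the initial formation $E_0 = D_1$, which counts as large, and step $s\ge 1$ increases the radius by $2r_{\max}^{(s)}$. I label step $s$ cheap whenever $r^{(s)}_{\text{sum}} \ge 8 r^{(s)}_{\max}$, since this is how $R_{\text{cheap}}$ is defined in the text.

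\textbf{Step 1: disjointness of the absorbed boundary sets.} Each expansion by $2r_{\max}^{(s)}$ swallows every current boundary ball. Indeed, a boundary ball $D_j$ of radius $r_j \le r^{(s)}_{\max}$ meets $E$, so every point of $D_j$ is within $r(E) + 2r_j$ of $c$. Hence the boundary index sets $J_b^{(s)}$ of distinct expansion steps are pairwise disjoint, and all of them end up in $J_{in}$ at termination. For a cheap step this gives
\[ 2r^{(s)}_{\max} \le \tfrac14 \sum_{j\in J_b^{(s)}} r_j, \]
and summing over cheap steps yields $R_{\text{cheap}} \le \frac14 \sum_{j \in J_{\text{ch}}} r_j$. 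Here $J_{\text{ch}}$ is the set of defensive balls absorbed during cheap steps; these balls are pairwise distinct and all lie inside the final $E$.

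\textbf{Step 2: bound the absorbed radius by $r(E)$.} It now suffices to show $\sum_{j\in J_{\text{ch}}} r_j \le 2 r(E)$. I would attempt an inductive potential argument, proving that the invariant
\[ \Phi_s := r(E_s) - 2R^{(s)}_{\text{cheap}} \ge \tfrac12\, r(E_s) \]
is maintained step by step. Large steps only increase $\Phi$ relative to $r(E)$, so the cheap steps are the issue. In a cheap step the balls absorbed all intersect $E_s$ but are not contained in it, and each has radius at most $r^{(s)}_{\max}$. I would try to exploit the structure of defensive balls: if $i<j$ then $c_j \notin D_i$, so $d(c_i,c_j) > r_i$. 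Combined with the fact that each absorbed ball pokes out of $E_s$, the aim is to argue that these balls "occupy" the annulus between $r(E_s)$ and $r(E_s)+2r^{(s)}_{\max}$ in a way that charges their radii against the radius increase of earlier steps. This would give a geometric series whose sum is dominated by the large steps.

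\textbf{Main obstacle.} Step 2 is where I expect trouble. In a general metric, a packing argument based only on $d(c_i,c_j) > r_i$ does not bound the number of balls meeting a sphere, so the bound $\sum r_j \lesssim r(E)$ may fail without additional structure. I would therefore first try to use the optimal solution instead. The absorbed balls defend disjoint sets of optimal balls, each defended optimal ball satisfies $r_j \le 2\sum_{i\in I(j)} r^\star_i$, and no optimal ball contains another optimal center. I would try to show that these optimal balls cannot all crowd into a thin shell around $E_s$ without their total radius being charged to $r(E_s)$. If even this fails, I would fall back on the ratio $r_{\text{sum}}\ge 8r_{\max}$: I would check whether a cheap step's increase $2r_{\max}$ can be charged to a $\frac14$ fraction of radius that was itself generated by earlier large steps, so that the cheap contributions telescope to at most half of $r(E)$.
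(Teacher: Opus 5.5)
Your Step~1 is correct and matches the paper's accounting. In a cheap step $2r_{\max}\le r_{\text{sum}}/4$, the boundary sets of different steps are pairwise disjoint, and every absorbed defensive ball ends up inside the final $E$.

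\textbf{The gap is Step~2.} You never establish $\sum_{j\in J_{\text{ch}}} r_j\le 2r(E)$, and the routes you sketch would not get there.
\begin{itemize}
\item A packing argument based on $d(c_i,c_j)>r_i$ cannot work in a general metric. In a star metric, arbitrarily many balls of radius $\rho$ with far-apart centers can each meet a ball of radius $R$ without being contained in it. Their total radius is then not bounded by any function of $R$, or of the expanded radius $R+2\rho$.
\item Your invariant $\Phi_s\ge \frac12 r(E_s)$ is equivalent to $R^{(s)}_{\text{cheap}}\le r(E_s)/4$. This is a factor $2$ stronger than what the lemma needs, and you give no argument for it.
\item The ``thin shell'' idea is unnecessary: no geometry is needed at all.
\end{itemize}

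\textbf{The missing idea is an exchange argument using the optimality of $\OPT$.} The optimal balls fully contained in the final $E$ have total radius at most $r(E)$. Otherwise, replace all of them by the single ball $E$, which is centered at $c(D_1)\in X$. This covers every point they covered, uses no more clusters, and is strictly cheaper, a contradiction.

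With this, the ingredients you already listed finish the proof:
\begin{itemize}
\item each $j\in J_{\text{ch}}$ satisfies $r_j\le r^\star_{\kappa(j)}+\sum_{i\in I(j)}r_i^\star\le 2\sum_{i\in I(j)}r_i^\star$;
\item the sets $I(j)$ are disjoint;
\item every $B_i^\star$ with $i\in I(j)$ lies in $D_j\subseteq E$.
\end{itemize}
Hence
\[
R_{\text{cheap}}\;\le\;\tfrac14\sum_{j\in J_{\text{ch}}} r_j\;\le\;\tfrac12\sum_{i:\,B_i^\star\subseteq E} r_i^\star\;\le\;\tfrac12\, r(E),
\]
which is exactly the paper's proof.
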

\begin{proof}
  Consider a cheap expansion of $E$. This happens when
  $r_{\text{sum}} \geq 8 r_{\max}$. Note that
  $$r_{\text{sum}} = \sum_{j \in J_b} r_j \leq 2 \sum_{j \in J_b} \sum_{i \in I(j)} r_i^\star.$$
  since each defensive ball is a $2$-approximate ball for the balls it
  defends. At this point, $R_{\text{cheap}}$ gets incremented by
  $$2 r_{\max} \le \frac{r_{\text{sum}}}{4} \le \frac{1}{2} \sum_{j \in J_b} \sum_{i \in I(j)} r_i^\star.$$
  Also, after the expansion, all the optimal balls with indices in
  $\cup_{j \in J_b} I(j)$ are contained completely inside $E$. Hence,
  eventually we must have:
  $$R_{\text{cheap}} \leq \frac{1}{2} \sum_{i: B^\star_i \subseteq E} r_i^\star \leq \frac{1}{2}r(E).$$
  The final inequality follows from the optimality of $\OPT$. (If the
  sum of radii of optimal balls completely inside $E$ exceed the
  radius of $E$, then they can be replaced by $E$, thereby lowering
  the cost of an optimal solution.)
\end{proof}

\begin{lemma}
\label{lem:constant_balls_cover_constant_fraction}
  Suppose $\eps$ is a small enough positive constant and each
  defensive ball has a significant core
  ($r_{\kappa(j)}^\star > \eps \sum_{i \in I(j)} r_i^\star$) and let $M = \ceil{\frac{1}{\eps^2}}$. There exists a set of $M' = \min(k_{in}, M)$ optimal balls within $E$ whose defender balls have index in $J_{in}$, such that the sum of their radii is at least $\frac{\eps}{6} r(E)$.
\end{lemma}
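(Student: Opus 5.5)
We combine \Cref{lem:mostly_large_expansions} with a counting argument over the large expansions. By \Cref{lem:mostly_large_expansions}, $R_{\text{large}} \ge r(E)/2$. Each large expansion (including the initial formation $E \gets D_1$) is caused by a single defensive ball $D_j$ of radius $r_{\max}$ and adds $2r_{\max}$ to the radius (or $r_1$ for the initialization). Let the large expansions be caused by balls $D_{j_1}, D_{j_2}, \ldots$, with expansion amounts $a_1, a_2, \ldots$ where $a_q \le 2 r_{j_q}$ and $\sum_q a_q = R_{\text{large}}$. After its expansion, each causing ball lies fully inside $E$, so every $j_q \in J_{in}$ at termination, since $E$ only grows.

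\textbf{Step 1: few large expansions carry most of $R_{\text{large}}$.} When a large expansion happens at current radius $\rho$, we have $r_{\max} \ge \eps^2 \rho$, so the radius grows to at least $(1+2\eps^2)\rho$. The radii are therefore geometrically increasing. I would take the last $M$ large expansions, with $M=\ceil{1/\eps^2}$. Before them the radius was at most $r(E)/(1+2\eps^2)^{M} \le r(E)/e^{\Theta(1)}$, say at most $r(E)/4$ for a suitable constant (which may require adjusting constants). So the last $M$ large expansions contribute at least $R_{\text{large}} - r(E)/4 \ge r(E)/4$. If there are fewer than $M$ large expansions in total, all of them contribute $R_{\text{large}} \ge r(E)/2$. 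The exact constant will need care; I expect this to be the main obstacle. I would first check the bound $(1+2\eps^2)^{-1/\eps^2} \le e^{-1}$ for small $\eps$, which gives a contribution of at least $r(E)(1/2 - 1/e) > r(E)/8$. This should suffice for the final $\eps/6$ with some slack; otherwise I would use a sharper grouping.

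\textbf{Step 2: pass from defensive balls to cores.} For the chosen causing balls $D_{j_q}$, which are distinct and so give at most $M$ balls, we have $a_q \le 2r_{j_q} \le 2(r^\star_{\kappa(j_q)} + \sum_{i\in I(j_q)} r_i^\star) \le 2(1+1/\eps)\, r^\star_{\kappa(j_q)}$, using the significant-core assumption. Hence
\[
\sum_q r^\star_{\kappa(j_q)} \ge \frac{\eps}{2(1+\eps)} \sum_q a_q \ge \frac{\eps}{2(1+\eps)}\cdot \frac{r(E)}{8}.
\]
This is too weak for $\eps/6$. To fix it, in Step 1 I aim to show that the chosen large expansions contribute at least roughly $\tfrac{1}{3}\cdot r(E)$ (up to $1+O(\eps)$). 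The relevant bound is $R_{\text{large}} - r(E)e^{-2} \ge r(E)(1/2 - 0.136) \approx 0.36\,r(E)$, obtained by using $(1+2\eps^2)^{M} \ge e^{2-O(\eps^2)}$. Then
\[
\frac{\eps}{2(1+\eps)}\cdot 0.36 \ge \frac{\eps}{6}
\]
for small $\eps$.

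\textbf{Step 3: bookkeeping.} The cores $B^\star_{\kappa(j_q)}$ are distinct optimal balls, because defenders are distinct and the $I(j)$ partition $[h]$. Each is defended by $D_{j_q}$ with $j_q\in J_{in}$, so it lies within $E$. Their number is at most $M$, and at most $k_{in}$ since each $j\in J_{in}$ defends $k_j\ge 1$ balls. If fewer than $M'$ balls are selected, we pad with further optimal balls defended by $J_{in}$ balls, which only increases the sum of radii. This yields exactly $M'=\min(k_{in},M)$ balls with total radius at least $\frac{\eps}{6}r(E)$.
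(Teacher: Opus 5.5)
Your proposal is correct and is essentially the paper's argument: you split on whether there are at least $M$ large expansions, take the last $M$ of them, combine the $(1+2\eps^2)$ geometric growth with $R_{\text{large}} \ge r(E)/2$ from \Cref{lem:mostly_large_expansions}, pass to cores via $r_j \le \frac{1+\eps}{\eps}\, r^\star_{\kappa(j)}$, and pad up to $M'$. The one difference is how $R_0$ (the radius before the last $M$ large expansions) is controlled. You bound it directly by $R_0 \le (1+2\eps^2)^{-M} r(E) \le e^{-2+2\eps^2} r(E)$, whereas the paper sums the per-expansion lower bounds to get $R' \ge 2R_0$. Your route gives a slightly weaker constant, about $0.18\,r(E)$ for the radii of the causing balls instead of the paper's $r(E)/5$, but this still suffices for small $\eps$; you should drop your preliminary $e^{-1}$ estimate, which is too weak.
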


\begin{proof}
  First, observe that the number of large expansions (say $M''$) must be upper bounded by $k_{in}$ since each expansion leads to the inclusion of at least one new defensive ball inside $E$. We consider two cases:
  \begin{enumerate}
      \item $M'' > M$. Thus, $M < k_{in}$ and $\min(M, k_{in}) = M$. Consider the last
  $M$ large expansions of $E$. Let $R_0$ be the radius of $E$ just
  before the last $M$ large expansions. Observe that $R_0 > 0$
    since the
  first large expansion is considered to be the formation of $E$ as
  $D_1$ itself, which must be of a non-zero radius since its core is
  significant (strictly greater than $\eps$ times the sum of the radii
  of the optimal balls it defends).
  Each
  large expansion increases the radius of $E$ by at least a factor of
  $(1 + 2\eps^2)$. Each of these expansions is done by an amount
  $2r_{\max}$, where $r_{\max}$ is the radius of the largest defensive
  boundary ball at that point in time. Consider the $M$ corresponding
  largest boundary balls. The radius of the $i^\text{th}$ of them must
  be at least $\eps^2R_0 (1 + 2 \eps^2)^{i-1}$. Hence, the sum $R'$ of
  their radii is at least:
  $$R' \ge \eps^2 R_0 \sum_{i=0}^{M-1} (1+2\eps^2)^i = \frac{R_0}{2} ((1+2\eps^2)^M - 1) \geq 2 R_0.$$
  Now, note that $R_{\text{large}} \leq R_0 + 2R'$, and hence
  $R' \geq \frac{2 R_{\text{large}}}{5} \geq \frac{r(E)}{5}$,
  where the second inequality uses
  \Cref{lem:mostly_large_expansions}. Now, since each defensive ball
  has a significant core, the sum of radii of the cores of these $M$
  defensive balls is at least $\frac{\eps}{1 + \eps} \frac{r(E)}{5} \geq \frac{\eps}{6} r(E)$.
    \item $M'' \le M$. Here, $\min(k_{in}, M) \ge M''$. Again,
  each of these $M''$ large expansions is done by an amount $2r_{\max}$, where
  $r_{\max}$ is the radius of the largest defensive boundary ball at
  that point in time, and thus the sum of the corresponding $M''$
  largest boundary balls must be at least
  $\frac{R_{\text{large}}}{2} \geq \frac{r(E)}{4}$. The sum of
  the radii of their core optimal balls is therefore at least
  $\frac{\eps}{6} r(E)$. Since there are exactly $k_{in}$ optimal balls defended by defensive balls indexed in $J_{in}$, we can add any remaining $\min(k_{in}, M) - M''$ of these optimal balls to this set and the statement of the lemma follows. \qedhere
  \end{enumerate}
\end{proof}

\subsubsection{Phase 3: Identifying and Solving the Subproblems}
\label{sec:ident-subproblems}

Our algorithm sets $\cB = \{D_j : j \in J_b\}$ as the set of balls to be included in our solution, and then expands them (line \ref{l:expanded_boundary_ball}) as in~\Cref{lem:expansion_for_central_property} in order to satisfy the central property. Then, it sets $S_{in}$ as the set of all the points inside $E$ that were not covered by the expanded boundary balls (line \ref{l:s_in}), and $S_{out}$ as the set of points outside $E$ that
were not covered by the expanded boundary balls (line \ref{l:s_out}). ~\Cref{fig:msr_recursion_example} gives an example of the expansion of the boundary balls to obtain these sets of these points. We say that the pair $(S_{in}, k_{in})$ forms the \emph{inside subproblem}, and the pair $(S_{out}, k_{out})$ forms the \emph{outside subproblem}. The following lemma states that $\OPT(S_{in})$ and $\OPT(S_{out})$ do not interact with each other and that $k_{in}$ and $k_{out}$ are upper bounds on their sizes respectively. Thus, we can solve the inside and outside subproblems independently.

\begin{lemma}
  \label{lem:S_in_out_properties}
  Suppose that the clustering $(\{D_1, D_2, \ldots, D_{t'}\}, (k_1, k_2, \ldots, k_{t'}))$ chosen in line~\ref{l:clustering_loop} is a defensive clustering of $\OPT(S)$. Then, every optimal ball that contains a point from $S_{in}$ (respectively, $S_{out}$) must be defended by some ball indexed in $J_{in}$ (resp., $J_{out}$) and therefore:
  \begin{itemize}
    \item $\OPT(S_{in}) \cap \OPT(S_{out}) = \emptyset$.
    \item $|\OPT(S_{in})| \le k_{in}$ and $|\OPT(S_{out})| \le k_{out}$.
    \item $\opt(S_{in}) \le r(E).$
  \end{itemize}
\end{lemma}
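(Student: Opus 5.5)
The core claim is that any optimal ball $B_i^\star \in \OPT(S)$ meeting $S_{in}$ is defended by some $D_j$ with $j \in J_{in}$. The symmetric claim for $S_{out}$ and $J_{out}$ is proved the same way. The three bullet points then follow quickly. The index sets $J_{in}, J_b, J_{out}$ partition $[t']$, and the sets $I(j)$ partition $[h]$. So every optimal ball has exactly one defender, and it cannot be both in $J_{in}$ and in $J_{out}$; this gives disjointness. The size bounds hold because $k_{in}=\sum_{j\in J_{in}}|I(j)|$ and $k_{out}=\sum_{j\in J_{out}}|I(j)|$ count exactly the optimal balls whose defenders lie in $J_{in}$ or $J_{out}$. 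For the last bullet, every ball of $\OPT(S_{in})$ is contained in its defender $D_j \subseteq E$ with $j\in J_{in}$. Hence $\opt(S_{in}) \le \sum_{i: B_i^\star\subseteq E} r_i^\star \le r(E)$. The second inequality uses the optimality argument already given in the proof of \Cref{lem:mostly_large_expansions}: otherwise those balls could be replaced by $E$.

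For the core claim, fix $B_i^\star$ with a point $p \in B_i^\star\cap S_{in}$, and let $D_j$ be its defender, so $B_i^\star \subseteq D_j$. I will rule out $j\in J_b$ and $j\in J_{out}$.

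\emph{Case $j\in J_b$.} The boundary balls are expanded (line~\ref{l:expanded_boundary_ball}) into $D'_q \supseteq D_{j_q}$. Hence $B_i^\star\subseteq D_j \subseteq \bigcup_q D'_q$, so $p$ is removed from both $S_{in}$ and $S_{out}$, a contradiction. This case does not even need the central property, only the containment $D'_q\supseteq D_{j_q}$.

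\emph{Case $j\in J_{out}$.} Then $D_j\cap E=\emptyset$, yet $p\in B_i^\star\subseteq D_j$ and $p\in E$ (since $p \in S_{in}$), a contradiction.

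The $S_{out}$ side is symmetric. If the defender is in $J_b$, then $p$ is covered by the $D'_q$ as before. If the defender is in $J_{in}$, then $p\in D_j\subseteq E$, contradicting $p\notin E$.

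I expect no real obstacle, since the argument is purely combinatorial bookkeeping on the partition properties (i)--(ii) of a defensive clustering. The one point needing care is the hypothesis: we use that the chosen clustering is a genuine defensive clustering of $\OPT(S)$. This guarantees that every $B_i^\star\in\OPT(S)$ has a defender containing it, and that the $k_j$ values equal the true $|I(j)|$. The subtlety about centers lying in $S_{in}$ (the validity of the pair) is not part of this statement; it will rely on the central property of the chosen $\cB'$ from \Cref{lem:expansion_for_central_property}, and I would defer it to a separate argument.
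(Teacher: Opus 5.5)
Your proof is correct and takes essentially the same route as the paper. You take the unique defender $D_j$ of an optimal ball meeting $S_{in}$ (resp.\ $S_{out}$), rule out $j\in J_b$ because the expanded boundary balls $D'_q\supseteq D_{j_q}$ cover all its points, and rule out the wrong side via $D_j\cap E$; the bullets then follow from the partition properties and $k_j=|I(j)|$, with $\opt(S_{in})\le r(E)$ coming from the same exchange argument that replaces the contained optimal balls by $E$.
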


\begin{proof}
  Consider any optimal ball $B_i^\star \in \OPT(S_{in})$. Let $D_j$ be the defensive ball that defends $B_i^\star$. By definition of $S_{in}$, the defensive balls on the boundary of $E$ do not cover any of the points inside $S_{in}$. Hence, $D_j$ must not be a boundary defensive ball. On the other hand, $B_i^\star$ clearly intersects $E$, and $B_i^\star \subseteq D_j$, so $D_j$ intersects $E$. Thus, $D_j$ must be fully inside $E$, i.e., $j \in J_{in}$. By similar arguments, every optimal ball containing a point in $S_{out}$ must be defended by a defensive ball fully outside $E$. Thus, $\OPT(S_{in}) \cap \OPT(S_{out}) = \emptyset$, $|\OPT(S_{in})| \leq k_{in}$, and $|\OPT(S_{out})| \leq k_{out}$.

Finally, since the optimal balls $\OPT(S_{in})$ are completely inside
the balls from $J_{in}$, their sum of radii must be upper bounded by $r(E)$. If not, then we can replace the optimal balls in $\OPT(S_{in})$ by $E$ to get a cheaper solution, contradicting the optimality of $\OPT(S)$.
\end{proof}

\begin{figure}[t]
  \centering
  \scalebox{0.6}{
\begin{tikzpicture}
    \usetikzlibrary{decorations.pathmorphing}
    \coordinate (O) at (0,0);
    \draw[dashed] (O) circle (4cm); %
     \draw[thick,black] (3,2.7) circle (0.95cm);
    \draw[thick,black] (-3.,-3) circle (0.85cm);
    \draw[thick,black] (-3.25,2) circle (1.75cm);
    \draw[thick,black] (3,-3) circle (1.3cm);
    \draw[thick,black] (3.3,1.4) circle (0.95cm);

    \fill[green!50!black] (3,2.7) circle (2pt);
    \fill[green!50!black] (-3,-3) circle (2pt);
    \fill[green!50!black] (-3.25,2) circle (2pt);
    \fill[green!50!black] (-4,2.5) circle (2pt);
    \fill[green!50!black] (3,-3) circle (2pt);
    \fill[green!50!black] (3.5,-2.5) circle (2pt);
    \fill[green!50!black] (3.3,2.1) circle (2pt);
    \fill[green!50!black] (3.1,2.4) circle (2pt);
    \fill[green!50!black] (2.5,2.9) circle (2pt);
    \fill[green!50!black] (3.6,2.6) circle (2pt);
    \fill[green!50!black] (-3.2,-2.7) circle (2pt);
    \fill[green!50!black] (-2.8,-3.4) circle (2pt);
    \fill[green!50!black] (-2.5,1.5) circle (2pt);
    \fill[green!50!black] (-3.8,1.2) circle (2pt);
    \fill[green!50!black] (-2.9,2.8) circle (2pt);
    \fill[green!50!black] (2.5,-2.8) circle (2pt);
    \fill[green!50!black] (3.4,-3.8) circle (2pt);
    \fill[green!50!black] (2.8,-3.5) circle (2pt);
    \fill[green!50!black] (3.3,1.1) circle (2pt);
    \fill[green!50!black] (3,3.5) circle (2pt);

    \fill[blue] (O) circle (2pt);
    \fill[blue] (1,1) circle (2pt);
    \fill[blue] (2.1,1.1) circle (2pt);
    \fill[blue] (-1,1.7) circle (2pt);
    \fill[blue] (O) circle (2pt);
    \fill[blue] (1,1) circle (2pt);
    \fill[blue] (2,-1.7) circle (2pt);
    \fill[blue] (1.7,2.2) circle (2pt);
    \fill[blue] (-2.7,-2) circle (2pt);
    \fill[blue] (-0.3,-2) circle (2pt);

    \fill[red] (3.7,3.7) circle (2pt);
    \fill[red] (-4.5,4.2) circle (2pt);
    \fill[red] (-4.8,-4.5) circle (2pt);
    \fill[red] (4.5,-4.1) circle (2pt);
    \fill[red] (0,4.8) circle (2pt);
    \fill[red] (-5.0,0) circle (2pt);
    \fill[red] (0,-4.8) circle (2pt);
    \fill[red] (4.8,0.5) circle (2pt);
    \fill[red] (4.3,4.3) circle (2pt);
    \fill[red] (3.9,4.1) circle (2pt);
    \fill[red] (-4.2,3.8) circle (2pt);
    \fill[red] (-3.8,-4.2) circle (2pt);
    \fill[red] (4.5,-2.5) circle (2pt);
    \fill[red] (-5.2,2.1) circle (2pt);

    \usetikzlibrary{shapes.geometric}
    \node[star, star points=5, star point ratio=2, fill=green!50!black, inner sep=1.2pt] at (2.5,1.5) {};
    \draw[thick, orange] (2.5, 1.5) circle (0.78cm);

    \node[star, star points=5, star point ratio=2, fill=blue, inner sep=1.2pt] at (1.9,1.9) {};
    \draw[thick, orange] (1.9, 1.9) circle (0.45cm);

    \node[star, star points=5, star point ratio=2, fill=green!50!black, inner sep=1.2pt] at (2.3,-2.1) {};
    \draw[thick, orange] (2.3,-2.1) circle (0.65cm);

    \node[star, star points=5, star point ratio=2, fill=green!50!black, inner sep=1.2pt] at (-4,3.4) {};
    \draw[thick, orange] (-4,3.4) circle (0.6cm);
    \draw[->, thick, double] (6.5,0) -- (8.5,0);

        \begin{scope}[shift={(15,0)}]
        \coordinate (O) at (0,0);

        \draw[dashed] (O) circle (4cm); %
        \draw[thick,black] (3,2.7) circle (0.95cm);
        \draw[thick,black] (-3.,-3) circle (0.85cm);
        \draw[thick,black] (-3.25,2) circle (2.32cm);
        \draw[thick,black] (3,-3) circle (1.94cm);
        \draw[thick,black] (3.3,1.4) circle (1.95cm);

        \fill[green!50!black] (3,2.7) circle (2pt);
        \fill[green!50!black] (-3,-3) circle (2pt);
        \fill[green!50!black] (-3.25,2) circle (2pt);
        \fill[green!50!black] (-4,2.5) circle (2pt);
        \fill[green!50!black] (3,-3) circle (2pt);
        \fill[green!50!black] (3.5,-2.5) circle (2pt);
        \fill[green!50!black] (3.3,2.1) circle (2pt);
        \fill[green!50!black] (3.1,2.4) circle (2pt);
        \fill[green!50!black] (2.5,2.9) circle (2pt);
        \fill[green!50!black] (3.6,2.6) circle (2pt);
        \fill[green!50!black] (-3.2,-2.7) circle (2pt);
        \fill[green!50!black] (-2.8,-3.4) circle (2pt);
        \fill[green!50!black] (-2.5,1.5) circle (2pt);
        \fill[green!50!black] (-3.8,1.2) circle (2pt);
        \fill[green!50!black] (-2.9,2.8) circle (2pt);
        \fill[green!50!black] (2.5,-2.8) circle (2pt);
        \fill[green!50!black] (3.4,-3.8) circle (2pt);
        \fill[green!50!black] (2.8,-3.5) circle (2pt);
        \fill[green!50!black] (3.3,1.1) circle (2pt);
        \fill[green!50!black] (3,3.5) circle (2pt);

        \fill[blue] (O) circle (2pt);
        \fill[blue] (1,1) circle (2pt);
        \fill[green!50!black] (2.1,1.1) circle (2pt);
        \fill[green!50!black] (-1,1.7) circle (2pt);
        \fill[blue] (O) circle (2pt);
        \fill[blue] (1,1) circle (2pt);
        \fill[green!50!black] (2,-1.7) circle (2pt);
        \fill[green!50!black] (1.7,2.2) circle (2pt);
        \fill[blue] (-2.7,-2) circle (2pt);
        \fill[blue] (-0.3,-2) circle (2pt);

        \fill[red] (3.7,3.7) circle (2pt);
        \fill[red] (-4.5,4.2) circle (2pt);
        \fill[red] (-4.8,-4.5) circle (2pt);
        \fill[green!50!black] (4.5,-4.1) circle (2pt);
        \fill[red] (0,4.8) circle (2pt);
        \fill[red] (-5.0,0) circle (2pt);
        \fill[red] (0,-4.8) circle (2pt);
        \fill[green!50!black] (4.8,0.5) circle (2pt);
        \fill[red] (4.3,4.3) circle (2pt);
        \fill[red] (3.9,4.1) circle (2pt);
        \fill[green!50!black] (-4.2,3.8) circle (2pt);
        \fill[red] (-3.8,-4.2) circle (2pt);
        \fill[green!50!black] (4.5,-2.5) circle (2pt);
        \fill[green!50!black] (-5.2,2.1) circle (2pt);

        \node[star, star points=5, star point ratio=2, fill=green!50!black, inner sep=1.2pt] at (2.5,1.5) {};
        \draw[thick, orange] (2.5, 1.5) circle (0.78cm);

        \node[star, star points=5, star point ratio=2, fill=green!50!black, inner sep=1.2pt] at (1.9,1.9) {};
        \draw[thick, orange] (1.9, 1.9) circle (0.45cm);

        \node[star, star points=5, star point ratio=2, fill=green!50!black, inner sep=1.2pt] at (2.3,-2.1) {};
        \draw[thick, orange] (2.3,-2.1) circle (0.65cm);

        \node[star, star points=5, star point ratio=2, fill=green!50!black, inner sep=1.2pt] at (-4,3.4) {};
        \draw[thick, orange] (-4,3.4) circle (0.6cm);
    \end{scope}
\end{tikzpicture}
}
\caption{An example of the creation of subproblems. The left image describes the boundary balls before we expand them in order to satisfy the central property, and the image on the right shows their status after this expansion. The dashed circle represents the expanded ball $E$ and the solid circles denote the boundary balls. The orange balls represent some optimal balls whose centers were (either initially or during the process of expanding the boundary balls) inside the union of the boundary balls, but they were not completely covered. The blue points denote $S_{in}$, the red points denote $S_{out}$ and the green points denote the points covered by the (expanded) boundary balls.}
  \label{fig:msr_recursion_example}
\end{figure}

For solving the outer subproblem, we simply call
$\recMSR(S_{out}, k_{out}, \ell)$ without increasing the level. For
solving the inside subproblem, we call
${\tt guessAndSolveMSR}(S_{in}, k_{in}, \ell + 1)$---which is presented in
\Cref{alg:guess_rec_msr}---where we ``guess'' (by iterating over all
$n^{O(M)}$ possibilities) the $M' = \min(k_{in}, M)$ biggest optimal
balls (say $\{B''_1, B''_2, \ldots, B''_{M'}\}$) of
$\OPT(S_{in})$. Then, we include all these $M'$ balls in our solution
and then make a recursive call
$\recMSR(S_{in} \setminus \cup_{j \in [M']} B''_j, k_{in} - M', \ell +
1)$ at level $\ell + 1$ to cover the points of $S_{in}$ not covered by
these $m$ balls, using at most $k_{in} - M'$ clusters. Note that we do
not need to expand the guessed balls to satisfy the central property
since these are optimal balls and hence they already satisfy the
central property (no optimal ball contains the center of another). We
return the best solution obtained out of all the guesses. Note that the total number of balls used by the solution is at most:
\begin{align*}
    k_{in} + k_{out}+|J_b| &= \sum_{j \in J_{in}} k_j + \sum_{j \in J_{out}} k_j + \sum_{j \in J_b} 1\\
    &\le \sum_{j \in J_{in}} k_j + \sum_{j \in J_{out}} k_j + \sum_{j \in J_b} k_j\\
    &\le t.
\end{align*}
where the inequalities use the fact the tuple $(k_1, k_2, \ldots k_{t'})$ consists of positive integers with sum not exceeding $t$ and that $J_{in}, J_{out}, J_b$ form a partition of $[t']$. Also, recall from~\Cref{lem:constant_balls_cover_constant_fraction} that up
to $M$ optimal balls within $E$ account for at least $\frac{\eps}{6}$
fraction of $r(E)$. The following lemma states that the same can
be said about the union of the optimal balls covered by the expanded
boundary balls and the guessed $M'$ largest optimal balls of
$\OPT(S_{in})$. Generally speaking, this is because if any of the $M$
largest optimal balls within $E$ is not fully covered by the expanded
boundary balls, it must be one of the $M$ largest balls in
$\OPT(S_{in})$ and hence gets guessed in the next step. This
observation helps us make progress at each level of the recursion
tree, since we only pay a factor of $1$ for these optimal balls,
except for the cores of the boundary balls. For the cores of the
boundary balls, we pay a factor of
$2$(see~\Cref{lem:expansion_for_central_property}) but they only have
a total cost of $O(\eps^2 r(E))$.

\begin{lemma}
  \label{lem:constant_fraction_saved}
  Let $T$ be the set of indices of the optimal balls of $\OPT(S)$
  completely covered by the set $\cB'$ of the expanded defensive
  boundary balls, and let $M' = \min(k_{in}, M)$. Let $I'$ be the set of the indices(in $\OPT(S)$)
  of the largest $\min(M', |\OPT(S_{in})|)$ optimal balls of
  $\OPT(S_{in})$. Then, $$\sum_{i \in T \cup I'} r_i^\star \geq \frac{\eps}{6}
  r(E).$$
\end{lemma}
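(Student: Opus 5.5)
Let $P$ be the set of $M'$ optimal balls given by \Cref{lem:constant_balls_cover_constant_fraction}: they lie within $E$, their defenders are indexed in $J_{in}$, and their total radius is at least $\frac{\eps}{6} r(E)$. The plan is to show that every ball of $P$ is counted in $T \cup I'$, or can be swapped for a ball of $I'$ at least as large. Then the bound transfers directly from that lemma.

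First split $P$ into $P_1$, the balls fully covered by the expanded boundary set $\cB'$, whose indices lie in $T$, and $P_2 = P \setminus P_1$.

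Take any $B_i^\star \in P_2$. It is not fully covered by $\cB'$. Since $\cB'$ satisfies the central property (\Cref{lem:expansion_for_central_property}), the center $c_i^\star$ is not covered by $\cB'$ either. The ball lies inside its defender $D_j$ with $j \in J_{in}$, so $c_i^\star \in D_j \subseteq E$. Hence $c_i^\star \in S \cap (E \setminus \cup_q D'_q) = S_{in}$. Here we use validity of $(S,t)$, which puts centers of $\OPT(S)$ in $S$; a ball with $D_j \subseteq E$ has $B_i^\star \subseteq E$, but we only need the center. So $B_i^\star \in \OPT(S_{in})$, and therefore $P_2 \subseteq \OPT(S_{in})$ with $|P_2| \le M'$.

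Next, $I'$ consists of the largest $\min(M', |\OPT(S_{in})|)$ balls of $\OPT(S_{in})$. If $|\OPT(S_{in})| \le M'$, then $I'$ is all of $\OPT(S_{in})$ and contains $P_2$. Otherwise $|I'| = M' \ge |P_2|$, and since $I'$ consists of the largest balls of $\OPT(S_{in})$, $\sum_{i\in I'} r_i^\star \ge \sum_{B_i^\star\in P_2} r_i^\star$. The sets $T$ and the indices of $P_2$ are disjoint, but $T$ and $I'$ may overlap. So the careful step is
\[
\sum_{i\in T\cup I'} r_i^\star \ \ge\ \sum_{i\in P_1} r_i^\star + \sum_{i\in I'\setminus T} r_i^\star .
\]
This reduces to bounding $\sum_{I'\setminus T} r_i^\star$ from below by $\sum_{P_2} r_i^\star$.

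The main obstacle is that overlap: a ball of $I'$ might intersect $S_{in}$ and still count in $T$. That cannot happen, because balls in $\OPT(S_{in})$ contain a point of $S_{in}$, which is uncovered by $\cB'$, so they are not fully covered by $\cB'$. Hence $I' \cap T = \emptyset$. This also gives $P_1 \cap I' = \emptyset$, so the two sums add:
\[
\sum_{T\cup I'} r_i^\star \ \ge\ \sum_{P_1} r_i^\star + \sum_{P_2} r_i^\star \ \ge\ \frac{\eps}{6} r(E).
\]
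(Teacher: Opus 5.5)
Your proof is correct. It reaches the bound by a slightly different mechanism than the paper's one-line argument.

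The paper first passes, by an implicit exchange step, from the set given by \Cref{lem:constant_balls_cover_constant_fraction} to the $M'$ largest $J_{in}$-defended optimal balls, with ties broken globally by index. It then proves pure membership. Each of these balls not covered by $\cB'$ lies in $\OPT(S_{in})$. Implicitly using that every ball of $\OPT(S_{in})$ is $J_{in}$-defended (\Cref{lem:S_in_out_properties}), such a ball must rank among the top $\min(M',|\OPT(S_{in})|)$ balls of $\OPT(S_{in})$, so it lies in $I'$. Hence that whole canonical set sits inside $T\cup I'$, and double counting never comes up.

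You instead keep the arbitrary witness set $P$ and do the exchange at the end, inside $\OPT(S_{in})$. That is why you need the extra observation $I'\cap T=\emptyset$ before adding the two sums. In return, you need neither the global tie-breaking nor the fact that $\OPT(S_{in})$ contains only $J_{in}$-defended balls.

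You also explicitly use the central property of $\cB'$ together with validity of $(S,t)$ to place $c_i^\star$ in $S_{in}$. The paper leaves implicit why a ball not fully covered by $\cB'$ actually meets $S_{in}$, since a priori its uncovered points might lie outside $S$; your argument settles this cleanly.
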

\begin{proof}
    Breaking ties globally by their index in $\OPT(S)$, consider the $M'$ largest (out of the $k_{in}$ total) optimal balls defended by defensive balls with an index in $J_{in}$. Any of these balls that does not get covered by $\cB'$ must be one of the largest $\min(M', |\OPT(S_{in})|)$ optimal balls of $\OPT(S_{in})$ according to the same global tie-breaking rule. Thus, its index must be in $I'$, and the statement of the lemma follows from~\Cref{lem:constant_balls_cover_constant_fraction}.
\end{proof}

\begin{algorithm}[h]
  \caption{{\tt guessAndSolveMSR} procedure}
  \label{alg:guess_rec_msr}
  \DontPrintSemicolon
  \textbf{Global:} The metric space $(X, d)$, the parameters $k, M, L$\;
  \KwIn{A non empty subset $S\subseteq X$, $t \in [k]$, $\ell \in \{1,\ldots,L\}$}
  \KwOut{A clustering of $S$ with at most $t$ balls.}
  \lIf{$|S| = 0$}{
    \Return $\emptyset$
  }
  $\cA^\star \gets null$ \tcp*{assume $\cost(null) = \infty$}
  balls $\gets \{\Ball(p, r): p \in S, r \in \{d(p, q): q \in X\}\}$ \tcp*{set of all possible balls}
  \For{tuples $(B''_1, B''_2, \ldots, B''_{M'})$ of balls with $M' = \min(t, M)$\label{l:guessing_loop}}{
    $\cA \gets \{B''_1, \ldots, B''_{M'}\} \cup \recMSR(S \setminus \cup_{j \in [M']} B''_j, t - M', \ell)$\;
    \If{the cost of $\cA$ is less than the cost of $\cA^\star$}{
      $\cA^\star \gets \cA$\;
    }
  }
  \Return $\cA^\star$\;
\end{algorithm}

\subsection{Analysis of the approximation factor}
\label{sec:analysis-MSR}
Recall from~\Cref{sec:general_approach_msr} that we call a pair $(S, t)$ \emph{valid} if $|\OPT(S)| \le t$ and the center of each optimal ball in $\OPT(S)$ is present in $S$. As mentioned in~\Cref{sec:general_approach_msr}, our analysis will rely on valid pairs, even though these are not the only kind of inputs passed to \recMSR.

Note that the parameters $(X, k)$ of the initial call to
$\recMSR(X, k, 0)$ form a valid pair since $|\OPT(X)| \le k$ and the
centers of all the optimal balls are present in $X$. We assume that
$\eps$ is a small enough positive constant. Recall that $\opt(S)$ denotes the sum of radii of all the optimal balls in $\OPT(S)$ and 
$L = \ceil{\frac{6}{\eps} \ln \frac{1}{\eps}}$. We will prove the
following upper bound on the approximation factor:
\begin{lemma}
  \label{lem:approx_ratio}
  For each $\ell \in \{0, 1, \ldots, L\}$ and each valid pair
  $(S, t)$, the cost of the solution returned by $\recMSR(S, t, \ell)$
  is at most $\rho(\ell) \cdot \opt(S)$, where
  $\rho(\ell) = 1 + 25\eps + (1 - \frac{\eps}{6})^{L - \ell}$.
\end{lemma}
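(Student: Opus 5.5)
We argue by a double induction: downward on the level $\ell$ (from $L$ to $0$), and, for a fixed level, upward on $|S|$ (or on $t$). The second induction handles the recursive calls that stay at the same level $\ell$. At level $L$ the procedure returns the best candidate in $\cC$. By \Cref{lem:small_set} this set contains $\DefensiveClustering(\OPT(S))$, and that clustering is a $2$-approximation, i.e. it costs at most $2\opt(S)\le \rho(L)\opt(S)$ because $\rho(L)\ge 2$. For $\ell<L$ we fix the candidate in $\cC$ that equals the true defensive clustering, and split into the two cases used by \recMSR.

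\textbf{Case 1: some $D_j$ has an insignificant core.} We take the expansion $D'_j$ guaranteed by \Cref{lem:expansion_for_central_property} applied to $\cB=\{D_j\}$. Its cost is at most $r^\star_{\kappa(j)}+\sum_{i\in T}r_i^\star\le (1+\eps)\sum_{i\in T}r_i^\star$, since $I(j)\subseteq T$. The central property then implies that $(S\setminus D'_j,t-1)$ is a valid pair:
\begin{itemize}
\item every optimal ball meeting $S\setminus D'_j$ has its center there;
\item it is not among the at least one fully covered balls, so at most $t-1$ such balls remain.
\end{itemize}
Moreover $\opt(S\setminus D'_j)\le \opt(S)-\sum_{i\in T}r_i^\star$. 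The same-level induction gives total cost at most $(1+\eps)\sum_T r_i^\star+\rho(\ell)(\opt(S)-\sum_T r_i^\star)\le \rho(\ell)\opt(S)$, using $\rho(\ell)\ge 1+\eps$.

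\textbf{Case 2: all cores are significant.} We take $E$ from \texttt{expandedBall}, and choose the radius tuple for which $\cB'$ is the output of \Cref{lem:expansion_for_central_property}. We first check validity of the two subproblems. \Cref{lem:S_in_out_properties} gives the cardinality bounds. The central property gives that centers of optimal balls meeting $S_{in}$ or $S_{out}$ lie in them: a ball meeting $S_{in}$ is defended from $J_{in}$, so it lies inside $E$ and its center is in $E$, not covered by $\cB'$. In \texttt{guessAndSolveMSR} we take the guess equal to the $M'$ largest balls $I'$ of $\OPT(S_{in})$, padding if fewer. Removing optimal balls preserves validity, with $t$ dropping by $|I'|$ (the padding needs slight care). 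Since $\OPT(S_{in})$, $\OPT(S_{out})$ and $T$ are disjoint families, the cost is at most
\[
\sum_{j\in J_b}r^\star_{\kappa(j)}+\sum_{T}r_i^\star+\sum_{I'}r_i^\star+\rho(\ell+1)\Bigl(\opt(S_{in})-\sum_{I'}r_i^\star\Bigr)+\rho(\ell)\,\opt(S_{out}).
\]
The next step is to bound the terms attached to $E$:
\begin{itemize}
\item the first term is at most $4\eps^2 r(E)$ by \Cref{lem:small_boundary_core};
\item write $A:=\sum_{T\cup I'}r_i^\star\ge \frac{\eps}{6}r(E)$ by \Cref{lem:constant_fraction_saved};
\item $\opt(S_{in})\le r(E)$ by \Cref{lem:S_in_out_properties}.
\end{itemize}
The goal is to show that the inner and boundary part costs at most $\rho(\ell)(\opt(S_{in})+\sum_T r_i^\star)$. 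Write $\rho(\ell+1)=1+25\eps+q$ with $q=(1-\eps/6)^{L-\ell-1}$, so that $\rho(\ell)=1+25\eps+q(1-\eps/6)$. The difference between the two sides is then essentially $4\eps^2r(E)-25\eps A'+q(\opt(S_{in})-\sum_{I'}r_i^\star)-q(1-\eps/6)(\opt(S_{in})+\sum_T r_i^\star)$. Here $A'$ is the part of the saved mass paid at factor $1$ rather than $\rho$.

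This accounting is the main obstacle. The $q$-terms combine to $\frac{q\eps}{6}\opt(S_{in})-q\,A+\dots$. One must use $A\ge \frac{\eps}{6}r(E)\ge \frac{\eps}{6}\opt(S_{in})$ to absorb the $q$ part. The $25\eps$ slack, applied to $A\ge\frac{\eps}{6}r(E)$, yields about $4\eps^2 r(E)$, which is exactly what absorbs the boundary-core cost. I would carefully separate how much of $A$ comes from $T$ versus $I'$ so that the constant $25$ suffices, since $25\cdot\frac16>4$. Summing with the outer term $\rho(\ell)\opt(S_{out})$ then gives $\rho(\ell)\opt(S)$, because $\opt(S)\ge\opt(S_{in})+\opt(S_{out})+\sum_T r_i^\star$. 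Finally, \recMSR returns the minimum over $\cT$, so its output is at most the cost of the particular candidate analyzed.
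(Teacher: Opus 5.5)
Your proof is correct and follows the paper's argument. The components match: the same induction (the paper uses the single measure $|S|+L-\ell$), the same two cases, and the same validity checks via the central property. The final accounting is also the same, since your $q$/$25\eps$ bookkeeping is a rearrangement of the paper's inequality $\frac{\eps}{6}(\rho(\ell)-1)-(\rho(\ell+1)-\rho(\ell))-4\eps^2\ge 0$.

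One correction to your sketch: $A\ge\frac{\eps}{6}r(E)\ge\frac{\eps}{6}\opt(S_{in})$ does not fully absorb the $q$-terms. A residue of at most $\frac{q\eps^2}{36}r(E)$ remains after giving the $I'$-mass the smaller coefficient $q(1-\frac{\eps}{6})$ of the $T$-mass, so no $T$ versus $I'$ split is needed. That residue is paid by the margin $(\frac{25}{6}-4)\eps^2 r(E)=\frac{\eps^2}{6}r(E)$.
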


Intuitively, the lemma implies that we get increasingly precise
solutions as we decrease the level from $L$ to $0$. Substituting
$\ell = 0, S = X$ and $t = k$, we get that $\recMSR(X, k, 0)$ returns
a $(1 + 26\eps)$-approximation for the original problem when
$\ell = 0$.

\begin{proof}[Proof of \Cref{lem:approx_ratio}.]
  We will use induction on $|S| + L - \ell$. When $\ell = L$, we have
  $\rho(L) \geq 2$, and at level $L$ we simply return the best
  solution out of the candidate defensive clusterings $\cC$. At least
  one of these is a defensive clustering, and each defensive
  clustering is a $2$-approximation. Also, when $|S| = 0$, our
  algorithm returns an empty clustering with cost $0$.
  
  When $S$ is non-empty and $\ell < L$, our solution first guesses a
  defensive clustering of $\OPT(S)$ (by iterating over a set of
  candidate clusterings guaranteed to contain one since $(S, t)$ is
  valid), say $D_1, D_2, \ldots, D_{t'}$. Each defensive ball $D_j$
  defends a set of optimal balls with indices in $I(j)$, and has a
  core with index $\kappa(j)$.

  There are two cases.  
  The first case is where there exists a defensive ball $D_j$ with an
  insignificant core
  $(r_{\kappa(j)}^\star \le \eps\sum_{i \in I(j)}r_i^\star)$. Here, in
  line~\ref{l:good_ball_expansion}, we expand $D_j$ to get a ball $D'_j$
  in order to satisfy the central property. Then, we include $D'_j$ in
  our solution and recursively find a clustering for the set
  $S \setminus D'_j$ with an upper bound of $t - 1$ on the number of clusters. Clearly, $(S \setminus D'_j, t-1)$ forms a valid pair since $|\OPT(S)| \le t$ and the core optimal ball of $D_j$ is present in $\OPT(S)$ but not in $\OPT(S \setminus D'_j)$. Thus, setting $\cB' = \{D'_j\}$
  in~\Cref{lem:expansion_for_central_property} and applying the
  induction hypothesis, we get that the total cost is therefore at most:
  $$\cost(\cB') + \rho(\ell) \opt(S \setminus D'_j) \leq r_{\kappa(j)}^\star + \sum_{i \in T} r_i^\star + \rho(\ell) \opt(S \setminus D'_j).$$
  where $T$ is the set of balls from $\OPT(S)$ completely inside $D'_j$. Clearly, $\opt(S \setminus D'_j) \leq \opt(S) - \sum_{i \in T} r_i^\star$. Also, since $D_j$ has an insignificant core and $I(j) \subseteq T$, we have:
  $$r_{\kappa(j)}^\star \leq \eps \sum_{i \in I(j)}r_i^\star \leq \eps \sum_{i \in T} r_i^\star.$$
  Thus, the total cost is at
  most:$$\rho(\ell) \opt(S) + (1 + \eps - \rho(\ell)) \sum_{i \in T}
  r_i^\star \leq \rho(\ell) \opt(S),$$ using
  $\rho(\ell) \geq 1 + 25\eps $ for all $\ell \in [0, L]$.

  The second case is when each ball has a significant core
  $(r_{\kappa(j)}^\star > \eps\sum_{i \in I(j)}r_i^\star)$. In this
  case, we form an expanded ball $E$ and get a set of boundary balls with indices $J_b$ to include in our solution. We expand them in order to satisfy the central property to get a set of balls $\cB'$ to include in our solution. From~\Cref{lem:expansion_for_central_property,lem:small_boundary_core}, we get:
  \begin{equation}
    \label{eqn:cost_B'}
    \cost(\cB') \le 4\eps^2 r(E) + \sum_{i \in T} r_i^\star.
  \end{equation}
  where $T$ denotes the set of optimal balls of $\OPT(S)$ fully covered
  by $\cB'$. Then, we obtain subproblems $S_{in}$ and $S_{out}$ as the set of
  points not covered by $\cB'$ inside and outside $E$ respectively. Recall from~\Cref{lem:S_in_out_properties} that $\OPT(S_{in}) \cap \OPT(S_{out}) = \emptyset$, $|\OPT(S_{in})| \le k_{in}$, $|\OPT(S_{out})| \le k_{out}$ and $\opt(S_{in}) \le r(E)$. Additionally, since $\cB'$ satisfies the central property, the centers of the optimal balls in $\OPT(S_{in})$ and $\OPT(S_{out})$ must be in $S_{in}$ and $S_{out}$ respectively. Thus, $(S_{in}, k_{in})$ and $(S_{out}, k_{out})$ form valid pairs. After including $\cB'$ in our solution, we recursively compute the following two solutions:

  \begin{enumerate}
  \item $\cB_{out}^\star = \recMSR(S_{out}, k_{out}, \ell)$. By the induction hypothesis, 
    \begin{equation}
      \label{eqn:cost_B_out}
      \cost(\cB_{out}^\star) \le \rho(\ell) \cdot \opt(S_{out}).
    \end{equation}
  \item $\cB^\star_{in} = {\tt guessAndSolveMSR}(S_{in}, k_{in}, \ell + 1)$,
    which is computed by considering all possible tuples
    $\cB'' = (B''_1, \ldots, B''_{M'})$ of $M' = \min(k_{in}, M)$ balls
    and computing
    $\recMSR(S_{in} \setminus \cup_{i \in [M']} B''_i, k_{in} - M', \ell
    + 1)$. Let $I'$ be the set of indices (in $\OPT(S)$) of the $\min(M', |\OPT(S_{in})|)$
    largest optimal balls of $\OPT(S_{in})$. Consider a tuple $\cB''$ of $M'$ balls that consists of all the optimal balls having indices in $I'$, along with $M' - \min(M', |\OPT(S_{in})|)$ additional balls each having radius $0$ and centered at the center of one of the optimal balls in $I'$. Since $(S_{in}, k_{in})$ is
    a valid pair, so is
    $(S_{in} \setminus \cup_{i \in I'}B_i^\star, k_{in} - M')$ as
    optimal balls do not contain the centers of other optimal
    balls and $\OPT(S \setminus \cup_{i \in I'} B_i^\star) \subseteq \OPT(S) \setminus \{B_i^\star: i \in I'\}$. Thus, by the induction hypothesis:
    \begin{equation}
      \label{eqn:cost_B_in}
      \cost(\cB^\star_{in}) \leq \rho(\ell + 1) \cdot \opt(S_{in} \setminus \cup_{i \in I'} B_i^\star) + \sum_{i\in I'} r_i^\star.
    \end{equation}
  \end{enumerate}

  By~\Cref{eqn:cost_B',eqn:cost_B_out,eqn:cost_B_in}, we get that the
  cost of the combined solution, i.e.,
  $\cost(\cB') + \cost(\cB_{out}^\star) + \cost(\cB_{in}^\star)$, is at
  most:
  \begin{equation}
    \label{eqn:total_cost}
    \begin{aligned}
      &4\eps^2 r(E)+\rho(\ell) \opt(S_{out}) + \rho(\ell+1)\opt(S_{in} \setminus \cup_{i \in I'} B_i^\star) + \sum_{i \in T \cup I'} r_i^\star\\
      &\leq 4\eps^2 r(E)+\rho(\ell)\bigl(\opt(S_{out}) + \opt(S_{in} \setminus \cup_{i \in I'} B_i^\star)\bigr) + (\rho(\ell+1) - \rho(\ell)) \opt(S_{in}) + \sum_{i \in T \cup I'} r_i^\star\\
      &\leq 4\eps^2 r(E) + \rho(\ell)\Big(\opt(S_{in}) + \opt(S_{out}) - \sum_{i \in I'} r_i^\star\Big) + (\rho(\ell + 1) - \rho(\ell)) r(E) + \sum_{i\in T\cup I'}r_i^\star
    \end{aligned}
  \end{equation}
  where the first inequality follows from
  $\opt(S_{in} \setminus \cup_{i \in I'}B_i^\star) \leq
  \opt(S_{in})$. The second inequality follows from
  $\opt(S_{in} \setminus \cup_{i \in I'}B_i^\star) \leq \opt(S_{in}) -
  \sum_{i \in I'}r_i^\star$ and $\opt(S_{in}) \leq r(E)$. Now,
  since $\OPT(S_{in})$ and $\OPT(S_{out})$ are disjoint subsets of $\OPT(S)$ and the optimal
  balls with indices in $T$ do not appear in them (since they were
  already covered by $\cB'$ and hence the points within them were
  removed), we have:
  \begin{equation}
    \label{eqn:in_out_sum}
    \opt(S_{in}) + \opt(S_{out}) \leq \opt(S) - \sum_{i \in T} r_i^\star.
  \end{equation}
  Also, from~\Cref{lem:constant_fraction_saved}, we have:
  \begin{equation}
    \label{eqn:covered_sum}
    \sum_{i \in T \cup I'} r_i^\star \geq \frac{\eps}{6} r(E).
  \end{equation}

  Thus, substituting \Cref{eqn:in_out_sum,eqn:covered_sum} in~\Cref{eqn:total_cost}, we get that the cost of the output clustering is at most:
  \begin{align*} 
    &4\eps^2 r(E) + \rho(\ell)(\opt(S) - \sum_{i \in T \cup I'} r_i^\star) + (\rho(\ell + 1) - \rho(\ell)) r(E) + \sum_{i \in T \cup I'} r_i^\star\\
    &= \rho(\ell)\opt(S) + (\rho(\ell + 1) - \rho(\ell) + 4\eps^2) r(E) - (\rho(\ell) - 1) \sum_{i \in T \cup I'} r_i^\star\\
    &\leq \rho(\ell)\opt(S) + (\rho(\ell + 1) - \rho(\ell) + 4\eps^2) r(E) - (\rho(\ell) - 1) \frac{\eps}{6} r(E)\\
    &= \rho(\ell) \opt(S) - \left(\frac{\eps(\rho(\ell) - 1)}{6} - (\rho(\ell + 1) - \rho(\ell)) - 4\eps^2\right)r(E).
  \end{align*}
  Finally, substitute $\rho(\ell) = 1 + 25\eps + (1 - \frac{\eps}{6})^{L - \ell}$ to note that:
  \begin{equation}
    \label{eqn:rho_eqn}
    \begin{aligned}
      \frac{\eps (\rho(\ell) - 1)}{6} - (\rho(\ell + 1) - \rho(\ell)) -
      4\eps^2
      &= \frac{\eps}{6}\left(25\eps + (1 - \frac{\eps}{6})^{L - \ell}\right) - \frac{\eps}{6}(1 - \frac{\eps}{6})^{L - \ell - 1}  - 4\eps^2\\
      &= \frac{25}{6}\eps^2 - \frac{\eps}{6}\left(1 - \frac{\eps}{6}\right)^{L - \ell - 1} \frac{\eps}{6} - 4\eps^2\\
      &= \frac{\eps^2}{6} -\frac{\eps^2}{36} \left(1 - \frac{\eps}{6}\right)^{L - \ell - 1} \geq 0.
    \end{aligned}
  \end{equation}
  This completes the proof of \Cref{lem:approx_ratio}.
\end{proof}

\subsection{Time complexity of $\recMSR$}
\label{sec:time-MSR}

To complete the proof of \Cref{thr:mainMSR}, we now bound the runtime
of the algorithm. Consider the recursion tree of $\recMSR$ and let
$f(\ell, t)$ denote the maximum possible number of nodes in the
subtree rooted at a node corresponding to a call
$\recMSR(S, t, \ell)$.

All the for loops in $\recMSR$, have at most $(\frac{1}{\eps})^{O(k)}$
iterations since $|\cC| \leq (\frac{1}{\eps})^{O(k)}$, and the number
of $t$-tuples of non negative integers with sum $O(\frac{k}{\eps})$,
for any $t \le k$ is also $(\frac{1}{\eps})^{O(k)}$. Also, the for
loop in ${\tt guessAndSolveMSR}$ has at most $n^{2M}$
iterations. Thus, there exists $b = (\frac{1}{\eps})^{O(k)}$, such
that $\recMSR(S, t, \ell)$ makes at most $b$ calls to subproblems at
level $\ell$, and at most $b \cdot n^{2M}$ calls to subproblems at level
$\ell + 1$. Hence, we have that for all $t \geq 1, \ell \in \{0, 1, \ldots, L-1\}$:
$$f(\ell, t) \leq b f(\ell, t - 1) + b n^{2M}f(\ell + 1, t - 1) + 1.$$
The following lemma proves an FPT upper bound on the size of the recursion tree.
\begin{lemma}
  $f(\ell, t) \leq b^{2(t+L - \ell)} n^{2M (L - \ell)}$ for all $t \in \{0, 1, \ldots, k\}, \ell \in \{0, 1, \ldots, L\}$
\end{lemma}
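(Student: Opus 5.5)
The plan is to prove the bound by induction on the quantity $t + (L-\ell)$, using the recurrence
\[
f(\ell,t) \le b\, f(\ell,t-1) + b\, n^{2M} f(\ell+1,t-1) + 1 .
\]
This recurrence holds for $t\ge 1$ and $\ell<L$. We also need two boundary cases.

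\textbf{Base cases.} For $t=0$, \recMSR returns immediately: either $S=\emptyset$, or it returns null. So $f(\ell,0)=1$. This is at most $b^{2(L-\ell)}n^{2M(L-\ell)}$, since $b\ge 1$ and $n\ge 1$. For $\ell=L$, the procedure returns the best candidate from $\cC$ without recursing, so $f(L,t)=1\le b^{2t}$. We may assume $b\ge 2$ by enlarging it, which keeps $b=(\nf{1}{\eps})^{O(k)}$.

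\textbf{Inductive step.} For $t\ge1$ and $\ell<L$, plug the hypothesis into the recurrence. It remains to show
\[
b\cdot b^{2(t-1+L-\ell)}n^{2M(L-\ell)} + b\,n^{2M}\cdot b^{2(t-1+L-\ell-1)}n^{2M(L-\ell-1)} + 1 \le b^{2(t+L-\ell)}n^{2M(L-\ell)} .
\]
Write $N = b^{2(t+L-\ell)}n^{2M(L-\ell)}$. The first term equals $N/b$, and the second term equals $N/b^{3}$. The $n^{2M}$ factor cancels exactly, since the level-$(\ell+1)$ bound carries one fewer factor of $n^{2M}$. The additive $1$ is at most $N/b^2$, because $N\ge b^2$ (here $t\ge1$ makes the exponent of $b$ at least $2$). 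So the left side is at most $N(1/b+1/b^2+1/b^3)$. This is at most $N$ whenever $b\ge 2$, which completes the induction.

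\textbf{Main obstacle.} The only delicate point is making sure the recurrence is actually valid, i.e.\ that each child call is charged to the correct level. Calls made through {\tt guessAndSolveMSR} are at level $\ell+1$, are at most $b\,n^{2M}$ in number, and have $t$ decreased by at least one. The paper's recurrence treats them as one-step subproblems, absorbing the {\tt guessAndSolveMSR} node into the count; the slack factor $b^2$ per level easily covers this bookkeeping. The direct recursive calls (on $S\setminus D'_j$ and $S_{out}$) stay at level $\ell$ with a parameter of at most $t-1$; for $S_{out}$ this holds since $k_{out}\le t-1$ because $k_1\ge1$. Since $f$ is monotone in $t$, bounding those calls by $f(\ell,t-1)$ is legitimate.

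Finally, I will note the consequence. With $t=k$, $\ell=0$, and $L=O(\nf{1}{\eps}\ln\nf1\eps)$, the total number of nodes is
\[
(\nf1\eps)^{O(k/\eps\,\ln(1/\eps))}\, n^{O(\eps^{-3}\ln(1/\eps))},
\]
and each node does FPT$\cdot\poly(n)$ work. This gives the running time in \Cref{thr:mainMSR}.
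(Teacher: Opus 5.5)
Your proof of the lemma is correct and follows the paper's argument: induction on $t+L-\ell$ through the recurrence $f(\ell,t)\le b\,f(\ell,t-1)+b\,n^{2M}f(\ell+1,t-1)+1$, with the $n^{2M}$ factor cancelling across levels. Your accounting $N(1/b+1/b^2+1/b^3)\le N$ is marginally cleaner than the paper's. The paper collapses both terms to $2b^{2(t+L-\ell)-1}n^{2M(L-\ell)}$ and so tacitly needs $b\ge 3$, whereas you need only $b\ge 2$. You also do not need monotonicity of $f$, which you assert without proof. The bound $b^{2(t+L-\ell)}n^{2M(L-\ell)}$ is itself increasing in $t$, and every child call with parameter $t''\le t-1$ is already covered by the induction hypothesis.

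Your closing consequence, however, is wrong. With $b=(1/\eps)^{O(k)}$, the lemma gives
\[
f(0,k)\le b^{2(k+L)}n^{2ML}=(1/\eps)^{O(k^2+kL)}\,n^{2ML}.
\]
The factor $b^{2k}$ is $(1/\eps)^{O(k^2)}$, and it is not absorbed into $(1/\eps)^{O(\frac{k}{\eps}\ln\frac{1}{\eps})}$. It comes from the same-level calls on $S\setminus D'_j$ and $S_{out}$. Each of these decreases $t$ by only one, so the recursion can branch $b$ ways through $k$ successive decrements of $t$ while staying at the same level.

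This is why the paper states only the weaker bound $(1/\eps)^{O(k^2+\frac{k}{\eps}\log\frac{1}{\eps})}n^{O(\eps^{-3}\log\frac{1}{\eps})}$ as a corollary. It obtains the running time of \Cref{thr:mainMSR} only through the modified procedure \texttt{recMSR2} in \Cref{sec:faster_msr}. There every recursive call goes to level $\ell+1$, so the tree has depth $L$ with branching $(1/\eps)^{O(k)}n^{2M}$.
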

\begin{proof}
  We prove this via induction on $L - \ell + t$. When $\ell = L$ or $t = 0$, there are no recursive calls and hence the statement is trivially true. We have:
  \begin{align*}
    f(\ell, t) &\leq bf(\ell, t - 1) + b n^{2M} f(\ell + 1, t - 1) + 1\\
               &\leq b \cdot b^{2(t + L - \ell - 1)} n^{2M (L - \ell)} +  bn^{2M} \cdot b^{2(t + L - \ell-2)} n^{2M (L - \ell - 1)} + 1\\
               &\leq 2 b^{2(t + L - \ell) - 1} n^{2M (L - \ell)} + 1\\
               &\leq b^{2(t + L - \ell)} n^{2M(L - \ell)}. \qedhere
  \end{align*}
\end{proof}
In each node of the recursion tree, we spend $(\frac{1}{\eps})^{O(k)} \cdot n^{2M}$ time. Thus, substituting $b = (\frac{1}{\eps})^{O(k)}$, $L = \ceil{\frac{6}{\eps} \ln \frac{1}{\eps}}$ and $M = \ceil{\frac{1}{\eps^2}}$, we get the following result:

\begin{corollary}
  \label{lem:time_complexity_msr}
  The overall time complexity of $\recMSR$ is
  $$\left(\frac{1}{\eps}\right)^{O\left(k^2 + \nicefrac{k}{\eps} \log \nicefrac{1}{\eps}\right)} n^{O\left(\nicefrac{1}{\eps^3} \log \nicefrac{1}{\eps}\right)}.$$
\end{corollary}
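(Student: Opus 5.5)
The corollary follows by combining the recursion-tree bound of the preceding lemma with a bound on the work done at a single node. I take the preceding lemma as given, instantiated at the root call $\recMSR(X,k,0)$. It bounds the number of nodes by $f(0,k)\le b^{2(k+L)} n^{2ML}$, where $b=(\nf{1}{\eps})^{O(k)}$. The nodes of the tree include the calls made from inside ${\tt guessAndSolveMSR}$, because those calls were already counted in the recurrence through the factor $b\,n^{2M}$.

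The first step is to bound the work per node, apart from the recursive calls themselves. ${\tt CandidateClusterings}$ enumerates $(\nf1\eps)^{O(k)}$ tuples, by \Cref{lem:small_set}. For each tuple it does polynomial work to compute the centers and check coverage. $\texttt{expandedBall}$ runs in polynomial time, since each expansion absorbs at least one more defensive ball, so there are at most $k$ iterations. The enumerations in lines~\ref{l:expansion_1} and~\ref{l:expansion_2} range over $(\nf1\eps)^{O(k)}$ tuples. The guessing loop of ${\tt guessAndSolveMSR}$ ranges over at most $n^{2M}$ tuples of balls. Comparing costs and building each candidate solution takes $n^{O(1)}$ time. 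So the work at each node is at most $(\nf1\eps)^{O(k)}\, n^{2M+O(1)}$.

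The second step multiplies the two bounds and simplifies the exponents. The total time is
\[
b^{2(k+L)}\, n^{2ML}\cdot\Big(\tfrac1\eps\Big)^{O(k)} n^{2M+O(1)} .
\]
Since $b=(\nf1\eps)^{O(k)}$, the factor $b^{2(k+L)}$ equals $(\nf1\eps)^{O(k(k+L))}$. Substituting $L=\lceil\frac6\eps\ln\frac1\eps\rceil$ gives exponent $O(k^2+\frac{k}{\eps}\log\frac1\eps)$. For the power of $n$, substituting $M=\lceil 1/\eps^2\rceil$ gives $ML=O(\frac{1}{\eps^3}\log\frac1\eps)$. Together with the additive $2M+O(1)$, the exponent of $n$ is $O(\frac1{\eps^3}\log\frac1\eps)$. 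Finally, the preprocessing of \Cref{lem:reduction:radii} adds only $n^{O(1)}$, which is absorbed.

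The computation is routine, so the only place needing care is the first step. There I must check that every loop inside a single invocation is bounded by $(\nf1\eps)^{O(k)}$ or $n^{2M}$. The point is that the parameters $r'_j$ and the tuples over $J_b$ have sum bounded by $\frac{8k}{\eps}$, so these enumerations are FPT in size and not polynomial in $n$ raised to the power $k$.
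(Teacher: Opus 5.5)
Your proposal is correct and follows the paper's own argument. You multiply the recursion-tree bound $f(0,k)\le b^{2(k+L)}n^{2ML}$ from the preceding lemma by the per-node cost $(\nf1\eps)^{O(k)}n^{2M+O(1)}$ and substitute $b=(\nf1\eps)^{O(k)}$, $L$ and $M$. Your extra checks are accurate details the paper leaves implicit: \texttt{expandedBall} stops after at most $k$ expansions, and the calls made inside {\tt guessAndSolveMSR} are already counted through the factor $b\,n^{2M}$.
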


\Cref{lem:approx_ratio} and \Cref{lem:time_complexity_msr} complete
our discussion of the first $(1+\eps)$-approximation for \MSR in FPT
time $f(k,\eps) n^{\poly(1/\eps)}$.

The claimed runtime in \Cref{lem:time_complexity_msr}, though FPT, is worse than
that claimed in \Cref{thr:mainMSR}: it contains a $k^2$ term in the
exponent in the runtime. To see how to get the
improvement, note that the $k^2$ term appears in the exponent because
there are $(1/\eps)^{O(k)}$ candidate defensive clusterings; for each
of them, we make a recursive call at the same level $\ell$ (in
line~\ref{l:insignificant_core_recursion} and
line~\ref{l:outside_recursion}) potentially reducing $t$ only by
$1$. In the modified algorithm, we will instead make all recursive
calls at the next level, $\ell + 1$. We begin by guessing the set of
``good'' defensive balls (whose core is insignificant). Then, after we
find the first expanded ball $E$ and the set of boundary balls,
instead of solving the outside subproblem at the same level, we will
keep probing the outside points (if any) to form another expanded
ball. We repeat this until all defensive balls are either good, or
completely inside/on the boundary of one of the expanded balls. We
defer the details to~\Cref{sec:faster_msr}.

\section{Concluding Remarks}
\label{sec:concluding-remarks}

We consider the min-sum diameters and min-sum radii problems, where we
partition a set of points into $k$ parts, to minimize the sum of the
diameters or radii of the parts.  For both problems, we present the
first FPT approximation schemes, with respect to the natural parameter
$k$; the previous best FPT approximation algorithms for these problems
have approximation factors $4+\eps$ and $2+\eps$,
respectively. Specifically, given $\eps>0$, we show how to compute
$(1+\eps)$-approximations for both \MSD and \MSR in time
$(1/\eps)^kn^{O(1)}$ and
$(1/\eps)^{O(k/\eps \log 1/\eps)}n^{\poly(1/\eps)}$ respectively.

Our work suggests several open problems. The question of whether we can get an \emph{exact} algorithm for \MSD in time parameterized by $k$ remains a tantalizing open problem: no $W[1]$-hardness is known for this problem. Improving the approximation ratio of the \MSR problem under mergeable constraints to get an FPT PTAS remains open; we only give a $(2+\eps)$-approximation. It would also be interesting to improve the runtimes of our algorithms further, and remove the dependence on $\eps$ in the exponent of $n$ for the \MSR problem.

\appendix
\crefalias{section}{appendix}
\section{Missing Proofs}
\label{sec:missing}

\begin{proof}[Proof of \Cref{lem:reduction:radii}]
Consider an input instance $(X,d,k)$ of \MSR with optimal cost $\opt$. %
We will assume without loss of
  generality, that $\opt>0$, else the set $X$ consists of at
  most $k$ subsets having zero radius, and hence the optimal solution
  can be computed in polynomial time. Now, in deterministic polynomial time, compute
  an approximate solution $\APX$ having cost $\apx$ lying in the
  interval $[\opt, 4 \opt]$, e.g., using the algorithm in
  \cite{doi:10.1137/1.9781611977912.69}.

  Given this estimate of the optimal value, modify the metric by
  rounding up each pairwise distance to the next integer multiple of
  $\nicefrac{\eps}{4 k}\cdot\apx$, and recomputing the metric
  closure. Since the radius of each cluster in the original optimal
  solution increases by at most $\nicefrac{\eps}{4 k}\cdot\apx$, this
  rounding may increase the cost of the optimum solution at most by
  $\eps \opt$, hence the optimum solution is at most
  $(1+\eps)\,\opt\leq (1+\eps)\,\apx$. Moreover, all the distances are
  now integer multiples of $\nicefrac{\eps}{4 k}\cdot\apx$, so
  multiplying these distances by $\nicefrac{4 k}{(\eps \apx)}$ gives
  us a metric space $\cM' = (X,d')$ with integer distances.

  The optimum solution value for $\cM'$ is at most
  $\frac{4k}{\eps \apx}\cdot (1+\eps)\apx\leq \frac{8k}{\eps}$. To
  this instance we apply the $\rho(\eps)$-approximation in the claim,
  and we return the obtained solution for the original problem. By
  construction, the cost of the obtained solution is at most
  $(1+\eps)\,\rho(\eps) \cdot \opt$. The overall running time is
  $T(n,k,\eps)+n^{O(1)}$, with the additive term due to the above
  preprocessing steps.

  The same preprocessing algorithm also works for $\MSD$ with slightly different constants, since we can get $\apx \in [\opt, 8\opt]$ (by running the algorithm from~\cite{doi:10.1137/1.9781611977912.69}) where $\opt$ now denotes the cost of the optimal \MSD clustering.
\end{proof}

\section{Derandomization of the MSD Algorithm}
\label{sec:derandomization}

In this section, we give a derandomization of the previous algorithm. First, we preprocess the input in $n^{O(1)}$ (using \Cref{lem:reduction:radii}) to get a new instance with integral distances and optimum bounded by $\frac{16k}{\eps}$, at a cost of $(1 + O(\eps))$ to the approximation factor. We will also update each distance $> \frac{16k}{\eps}$ to $\frac{16k}{\eps}$. Doing this preserves the optimum solution. In the rest of the section, we will provide an exact algorithm for such instances implying a $(1 + O(\eps))$-approximate algorithm for general instances with an additional additive overhead of $n^{O(1)}$ in the runtime.

The idea of our algorithm \detRecMSD(\Cref{alg:PTAS_MSD_det}) is
simple: Instead of randomly choosing the threshold in $[0, \diam(S)]$,
we will try each possible integral threshold $R \in \{0, 1, \ldots
\diam(S) - 1\}$ (recall that distances are now integers). The
following two lemmas establish the correctness and runtime of our
algorithm, thus proving~\Cref{thr:mainMSD}.

\begin{algorithm}[h]
  \caption{The Procedure \detRecMSD}
  \label{alg:PTAS_MSD_det}
  \DontPrintSemicolon
  \textbf{Global:} The metric space $(X, d)$ with integral distances in $[0, \frac{16k}{\eps}].$\;
  \KwIn{subset $S\subseteq X$ with $S\neq \emptyset$, and parameter $t \in [k].$}
  \KwOut{A vector $\cC = (\cC_1, \cC_2, \ldots, \cC_t)$, where $\cC_i$ is an $i$-clustering of $S$.}
  $\cC_i \gets \{S\}$ for all $i \in \{1, 2, \ldots, t\}$\;
  \lIf{$t=1$ or $|S| = 1$}{\textbf{return} $(\cC_1, \cC_2, \ldots
    \cC_t)$}
  Let $x, y \in S$ be such that $d(x, y) = \diam(S)$\;
  \For{$R = 0, 1, \ldots \diam(S) - 1$\label{l:cut_and_recurse}}{
        Let $S_1\leftarrow S\cap \Ball(x,R)$ and $S_2\leftarrow S\setminus \Ball(x,R)$\;
        Let $\cA \gets \detRecMSD(S_1, t - 1)$ and $\cB \gets \detRecMSD(S_2, t-1)$\;
      \tcp{Perform a $(\min, +)$ convolution of $\cA$ and $\cB$}
        \For{i, j = $1, 2, \ldots t - 1$}{
            \lIf{$i+j \leq t$ and cost$(\cA_i \cup \cB_j) <$ cost$(\cC_{i+j})$}{
            $\cC_{i+j} \gets \cA_i \cup \cB_j$
            }
      }
  }
  \Return $(\cC_1, \cC_2, \ldots, \cC_t)$
\end{algorithm}
\begin{lemma}
    \label{lem:msd_det_correctness}
    Let $\opt_i(S)$ denote the cost of the optimal $i$-clustering of
    $S$ and $(\cC_1, \cC_2, \ldots \cC_t)$ be the output of
    $\detRecMSD(S, t)$. Then, for each $r \in \{1, 2, \ldots t\}$, $\cost(\cC_r) = \opt_r(S)$.
\end{lemma}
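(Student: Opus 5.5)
We argue by induction on $t$ (or equivalently on $|S|+t$), closely mirroring the proof of \Cref{lem:msd_prob} but replacing the probabilistic choice of $R$ by the exhaustive loop over integer thresholds. Since every $\cC_r$ output by \detRecMSD is a genuine $r$-clustering of $S$ (each candidate is either $\{S\}$ or a union of an $i$-clustering of $S_1$ and a $j$-clustering of $S_2$ with $i+j\le r$, and $S_1,S_2$ partition $S$), we always have $\cost(\cC_r)\ge \opt_r(S)$. So the task is the upper bound $\cost(\cC_r)\le \opt_r(S)$.

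\emph{Base cases.} If $t=1$, or $|S|=1$, or $r=1$, the only candidate we need is $\{S\}$: its cost $\diam(S)$ equals $\opt_1(S)$, and for $|S|=1$ it is $0$. For $r\ge 2$ and $|S|\ge2$ we split into two cases. If $\opt_r(S)=\diam(S)$ the initialization $\cC_r=\{S\}$ suffices. Otherwise $\opt_r(S)<\diam(S)$; since all distances are integers, $\opt_r(S)\le \diam(S)-1$. Take an optimal clustering $X_1,\dots,X_{r'}$ ($r'\le r$) and the intervals $I_h=[L_h,R_h]$ of distances from $x$, exactly as in \Cref{lem:msd_prob}; $L_h,R_h$ are integers in $[0,\diam(S)]$, and $R_h-L_h\le\diam(X_h)$.

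\emph{Key step: existence of a good integral threshold.} I will show that some integer $R\in\{0,\dots,\diam(S)-1\}$ lies in no half-open interval $[L_h,R_h)$. The integers $R$ with $L_h\le R<R_h$ number exactly $R_h-L_h\le \diam(X_h)$, so the union over $h$ covers at most $\sum_h\diam(X_h)=\opt_r(S)\le\diam(S)-1$ of the $\diam(S)$ candidate integers; hence a free $R$ exists. For such $R$, each cluster $X_h$ has either $R_h\le R$ (so $X_h\subseteq\Ball(x,R)$) or $L_h>R$, i.e.\ $L_h\ge R+1$ (so $X_h\cap \Ball(x,R)=\emptyset$). Thus $S_1$ and $S_2$ are unions of $r_1$ and $r_2'$ optimal clusters, respectively. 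Both are nonempty because $x\in S_1$ ($R\ge0$) and $y\in S_2$ ($R<\diam(S)$), so $r_1,r_2'\ge1$ and $r_1+r_2'=r'\le r$. Set $r_2=r-r_1\ge r_2'$. This is the only step that needs care: the half-open bookkeeping and integrality are exactly what replace the ``probability $\ge\eps$'' argument.

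\emph{Conclusion.} Since $r_1,r_2\le r-1\le t-1$, the induction hypothesis applied to $\detRecMSD(S_1,t-1)$ and $\detRecMSD(S_2,t-1)$ gives $\cost(\cA_{r_1})=\opt_{r_1}(S_1)\le\sum_{X_h\subseteq S_1}\diam(X_h)$. Similarly $\cost(\cB_{r_2})=\opt_{r_2}(S_2)\le\opt_{r_2'}(S_2)\le\sum_{X_h\subseteq S_2}\diam(X_h)$. The $(\min,+)$ convolution in that iteration of the loop then yields $\cost(\cC_r)\le\cost(\cA_{r_1})+\cost(\cB_{r_2})\le\opt_r(S)$. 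Later iterations only decrease $\cost(\cC_r)$, so the bound persists, which completes the induction.
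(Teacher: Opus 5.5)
Your proof is correct and follows essentially the same route as the paper: induction on $t$, the projected intervals $I_h=[L_h,R_h]$, integrality to find a threshold $R\in\{0,\ldots,\diam(S)-1\}$ that splits no optimal cluster, and then the $(\min,+)$ convolution. Your count of the integers in the half-open intervals $[L_h,R_h)$ is a more explicit version of the paper's claim that some unit interval $(R,R+1)$ avoids every $I_h$. You also state the lower bound $\cost(\cC_r)\ge\opt_r(S)$, which the equality needs and which the paper leaves implicit.
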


\begin{proof}
    We prove this lemma via induction on $t$. If $t = 1$, there is only one $t$-clustering, which is also returned by the algorithm. Similar to the proof of~\Cref{lem:msd_prob}, consider the clusters $X_1, X_2, \ldots, X_{r'}$ in the optimal $r$-clustering of $S$, where $r' \leq r$. For each cluster $X_h$, define $L_h$ to be the smallest distance from $x$ to some point in $X_h$ and $R_h$ to be the largest distance from $x$ to some point in $X_h$, and define $I_h$ to be the interval $[L_h, R_h]$. Observe that the length of each interval $I_h$ is at most $\diam(X_h)$ and thus the sum of lengths of the $r'$ intervals corresponding to the optimal clusters is at most $\opt_r(S)$.

    If $\opt_r(S) \geq \diam(S)$, then the single cluster $\{S\}$ is already optimal. Assume $\opt_r(S) < \diam(S)$. Since all distances are integers, the sum of the lengths of the intervals $I_h$ is strictly less than the length of the interval $[0, \diam(S)]$. Thus, there exists some $R$, such that every interval is disjoint from $(R, R+1)$. Thus for each cluster, either all points are at a distance $\leq R$, or all points are at a distance $> R$. When this $R$ is chosen as the threshold in line~\ref{l:cut_and_recurse} of the algorithm, we get a split $S_1, S_2$ where $S_1$ is precisely a union of some $r_1$ clusters of the points at distance $\leq R$ from $x$ and $S_2$ is the union of some $r_2' = r'-r_1$ clusters of points at distance $>R$ from $x$. Let $r_2 = r - r_1 \geq r_2'$. The algorithm then computes $\cA$ as the output of $\detRecMSD(S_1, t-1)$ and $\cB$ as the output of $\detRecMSD(S_2, t-1)$. By the induction hypothesis, we have:
    \begin{gather}
      \cost(\cA_{r_1}) = \opt_{r_1}(S_1) \quad\text{and}\quad
      \cost(\cB_{r_2}) = \opt_{r_2}(S_2) \leq \opt_{r_2'}(S_2)
    \end{gather}

    Hence, after we compute the $(\min,+)$ convolution of $\cA$ and $\cB$, we must have:
    $$\text{cost}(\cC_r) \leq \text{cost}(\cA_{r_1}) +
      \text{cost}(\cB_{r_2}) \leq \opt_{r_1}(S_1) +
        \opt_{r_2'}(S_2) \leq \opt_r(S).$$
\end{proof}

\begin{lemma}
    $\detRecMSD(X, k)$ runs in time $\left(O(\frac{k}{\eps})\right)^k \cdot n^{O(1)}$
\end{lemma}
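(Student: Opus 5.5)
We bound the size of the recursion tree of $\detRecMSD(X,k)$ and multiply by the polynomial work done per node. Let $D=\lfloor 16k/\eps\rfloor$ denote the maximum possible distance after the preprocessing; then $\diam(S)\le D$ for every $S$. A call $\detRecMSD(S,t)$ with $t\ge 2$ and $|S|\ge 2$ loops over $\diam(S)\le D$ thresholds. Each threshold produces two child calls with parameter $t-1$, so the node has at most $2D$ children. Leaves are the calls with $t=1$ or $|S|=1$. The work per node, excluding the recursive calls, is $n^{O(1)}$: finding the diameter pair, forming $S_1,S_2$, and the $(\min,+)$ convolution take $O(t^2)$ cost evaluations per threshold, each polynomial. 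So it suffices to count nodes.

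The naive count gives $f(t)\le 1+2D\,f(t-1)$ and hence $f(k)\le (2D+1)^{k}=(O(k/\eps))^k$, since the recursion depth is at most $k-1$. This already gives the claimed bound $(O(k/\eps))^k\cdot n^{O(1)}$, because $(2D)^{k}=(32k/\eps)^k$. One subtlety is that $S_1$ or $S_2$ may be empty for some thresholds. It cannot happen here: $x\in S_1$ because $R\ge 0$, and $y\in S_2$ because $R<\diam(S)=d(x,y)$. So every child is a valid nonempty input, the recursion is well defined, and its depth is at most $k-1$ since $t$ decreases by one at each level.

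If we want to avoid extra constants, we can also note that the number of distinct subproblems is only polynomial per level, but that is not needed. I expect the only point needing care to be confirming that capping distances at $16k/\eps$ preserves the metric (it does, since $\min(d,c)$ of a metric is a metric) and keeps $\diam(S)\le D$. With that, the total running time is $(2D+1)^{k}\cdot n^{O(1)}=(O(k/\eps))^k n^{O(1)}$. Adding the $n^{O(1)}$ preprocessing from \Cref{lem:reduction:radii} proves the lemma.
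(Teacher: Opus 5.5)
Your proposal is correct and follows the same argument as the paper. The paper also gets branching factor $O(k/\eps)$ from the at most $\diam(S)\le 16k/\eps$ integral thresholds, recursion depth at most $k$ because $t$ drops by one per level, and $n^{O(1)}$ work per node. Your extra checks are sound: both children are nonempty since $x\in S_1$ and $y\in S_2$, and capping preserves the metric. They make explicit what the paper leaves implicit. One small looseness is that a node's own work is really $D\cdot n^{O(1)}$ rather than $n^{O(1)}$, but this is harmless because it can be charged to the node's $2D$ children.
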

\begin{proof}
    For each problem in the recursion tree, we create $O(k/\eps)$ smaller subproblems corresponding to the $\diam(S) \leq \frac{16k}{\eps}$ possible values of the threshold $R$. Hence, the recursion tree has a branching factor of at most $O(k/\eps)$, and a depth of $k$. Thus, there are $(O(\frac{k}{\eps}))^k$ nodes in the recursion tree, and we spend $n^{O(1)}$ time in each node and the lemma statement follows.
\end{proof}

\section{Simple and Fast Exact Algorithm for \MSD}
\label{sec:exact_msd}

In this section, we describe a simple exact algorithm for \MSD that
runs in $n^{k+O(1)}$, improving the $n^{5k + O(1)}$ bound
from~\cite{10.1609/aaai.v39i15.33699}.

To solve the problem on some subset $S \sse X$, again consider a
diameter witness pair $x,y$; i.e., such that $d(x,y) =
\diam(S)$. Order the points of $S$ on a line as
$z_1 = x, z_2, \ldots z_{|S|} = y$ in increasing order of distance
from $x$, breaking ties arbitrarily. As a thought experiment, project
each cluster $X_h$ in the optimal solution onto an interval
$I_h = [L_h, R_h]$ where $L_h := \min \{i \in [|S|]: z_i \in X_h\}$
and $R_h := \max \{i \in [|S|]: z_i \in X_h\}$. If there exists some
$q \in [|S| - 1]$ such that $q$ and $q + 1$ are never a part of the same
interval, then we can split the problem into subproblems
$\{z_1, \ldots z_q\}$ and $\{z_{q+1}, \ldots z_{|S|}\}$. On the other
hand, when no such $q$ exists, it is optimal to have all the points in
the single cluster. This is because, the cost of the optimal
clustering equals the sum of
$\diam(X_h) \geq d(x, z_{R_h}) - d(x, z_{L_h})$ over all indices $h$
of the optimal clusters, and each $q \in [|S|-1]$ contributes at least
$d(x, z_{q+1}) - d(x, z_{q})$ to at least one term of this sum. Hence, the optimal cost is at
least $d(x, z_{|S|}) - d(x, z_1) = \diam(S)$. Since our algorithm does not
know the projected intervals, it tries all these options and chooses
the best one.

To bound the runtime, observe that the number of nodes in the
recursion tree is given by $T(n, 1) = 1$ and for all
$t \ge 2, n \ge 1$:
$$T(n, t) = 1 + 2 \sum_{m = 1}^{n-1} T(m, t-1).$$ It can be proved via
induction on $t$ that $T(n, t) \leq n^t$ for all $n, t$. This is
clearly true for $n=1$ or $t = 1$ since $T(n,t)=1$ if either of these
holds true. Otherwise, when $n,t\ge 2$: 
\begin{align*}
  T(n, t) &= 1 + 2 \sum_{m=1}^{n-1} T(m, t-1)\\
          &= 1 + 2 \sum_{m=1}^{n-2} T(m, t-1) + 2T(n-1,t-1)\\
          &= T(n-1, t) + 2T(n - 1, t-1) \leq (n-1)^t + 2(n-1)^{t-1}\\
          &= (n-1)^{t-1} (n + 1) = n^t \left(1 - \frac{1}{n}\right)^{t-1} \left(1 + \frac{1}{n}\right)\\
          &\leq n^t\left(1-\frac{1}{n^2}\right) < n^t.
\end{align*}
Since we spend $n^{O(1)}$ time per node in the recursion tree, the
overall time complexity of our exact algorithm for the original input
with parameter $k$ is $n^{k+O(1)}$.

\section{Clustering under mergeable constraints}
\label{sec:mergeable_clustering}
In this section, we adapt~\Cref{alg:PTAS_MSD_rnd} to design an FPT approximation scheme for \MSD under mergeable constraints. For a mergeable constraint, we say that a clustering $\cC$ is \emph{feasible} if all clusters in $\cC$ satisfy the constraint. We modify the cost function before running \recMSD(\Cref{alg:PTAS_MSD_rnd}) as follows:
$$\cost'(\cC) = \begin{cases}
  \cost(\cC) & \text{if $\cC$ is feasible}\\
  \infty & \text{if $\cC$ is infeasible}
\end{cases}$$

With this modification, we get the following lemma, analogous to~\Cref{lem:msd_prob}. 
\begin{lemma}
\label{lem:approx_msd_mergeable}
    Let $S$ be a non-empty subset of $X$ which satisfies the given mergeable constraint. Let $\opt_i(S)$ denote the cost of optimal feasible $i$-clustering of $S$, and $(\cC_1, \cC_2, \ldots, \cC_t)$ be the output of \recMSD with the modified cost function $\cost'$. Then, for any $\eps \in [0, 1)$ and any $r \in \{1,2, \ldots, t\}$,
    $$\Pr\left[\cost'(\cC_r) \leq \frac{\opt_r(S)}{1 - \eps}\right] \geq \eps^{r-1}.$$
\end{lemma}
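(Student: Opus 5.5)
We follow the induction on $t$ from the proof of \Cref{lem:msd_prob}, replacing $\cost$ by $\cost'$ throughout. One point needs care. Since $S$ satisfies the constraint, the trivial clustering $\{S\}$ is feasible, so every initialized $\cC_i=\{S\}$ has $\cost'(\cC_i)=\diam(S)$. Also, the $(\min,+)$ convolution only replaces $\cC_{i+j}$ by a cheaper clustering under $\cost'$, so the output always has $\cost'(\cC_r)\le\diam(S)$. The base case $t=1$ is immediate, since the only $1$-clustering is $\{S\}$, which is feasible and optimal. If $\opt_r(S)\ge(1-\eps)\diam(S)$, the same observation finishes the proof.

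Otherwise we fix an optimal feasible $r$-clustering $X_1,\dots,X_{r'}$ of $S$ and project each cluster onto the interval $I_h=[L_h,R_h]$ of distances from $x$. As before, the total length of these intervals is at most $\opt_r(S)<(1-\eps)\diam(S)$. Hence with probability at least $\eps$ the threshold $R$ avoids every interval. In that event $S_1$ is exactly the union of some $r_1$ optimal clusters and $S_2$ is the union of the remaining $r_2'=r'-r_1$ clusters, and both counts are at least $1$ since $x\in S_1$ and $y\in S_2$.

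Here the mergeability enters, and this is the only genuinely new step. To apply the induction hypothesis to $S_1$ and $S_2$, we need each of them to satisfy the constraint. This holds because $S_1$ is a disjoint union of feasible clusters $X_h$, and repeatedly merging disjoint feasible sets keeps feasibility. The same argument covers $S_2$. Moreover, the clusters of the optimal solution lying in $S_1$ form a feasible $r_1$-clustering of $S_1$, and those in $S_2$ form a feasible $r_2'$-clustering, hence also a feasible $r_2$-clustering for $r_2=r-r_1$. Thus $\opt_{r_1}(S_1)+\opt_{r_2}(S_2)\le\opt_r(S)$.

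We then apply the induction hypothesis to the independent recursive calls on $(S_1,t-1)$ and $(S_2,t-1)$. With probability at least $\eps^{r_1-1}\eps^{r_2-1}$, both $\cost'(\cA_{r_1})$ and $\cost'(\cB_{r_2})$ are within a factor $1/(1-\eps)$ of their optima. In particular both clusterings are feasible, so their union is a feasible clustering of $S$ and its $\cost'$ equals the sum of the two costs. The convolution step then gives $\cost'(\cC_r)\le\opt_r(S)/(1-\eps)$. Multiplying by the probability $\eps$ of a good threshold yields the bound $\eps^{r-1}$.

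The main obstacle is making sure the recursive inputs satisfy the hypothesis that $S$ itself obeys the constraint, which is exactly the mergeability step above. Everything else is a direct transcription of \Cref{lem:msd_prob}.
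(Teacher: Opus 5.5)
Your proposal is correct and matches the paper's proof. Like the paper, you rerun the induction of \Cref{lem:msd_prob}, use feasibility of the single cluster $\{S\}$ to handle the case $\opt_r(S)\ge(1-\eps)\diam(S)$, and use mergeability to guarantee that $S_1$ and $S_2$, being unions of optimal feasible clusters, satisfy the constraint so the induction hypothesis applies; you are slightly more explicit than the paper about why the union of the two recursive solutions is feasible and why $\cost'$ adds over it.
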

In particular, note that the original input $X$ must satisfy the constraint (otherwise no clustering is feasible). The proof of~\Cref{lem:approx_msd_mergeable} follows the proof of~\Cref{lem:msd_prob} with the only additional point being feasibility under the mergeable constraint. If $\opt_r(S) \geq (1 - \eps) \diam(S)$, then the single cluster solution is a feasible $1/(1-\eps)$ approximate solution. Else, with probability at least $\eps$, the random threshold $R$ does not split any cluster of the optimal feasible $r$-clustering. In this event, both subproblems $S_1$ and $S_2$ satisfy the constraint, since they are both unions of optimal clusters. Thus, we can proceed to inductively prove the lemma statement as in the proof of~\Cref{lem:msd_prob}.

The number of nodes in the recursion tree is the same as in~\Cref{alg:PTAS_MSD_rnd}. Since the constraint is verifiable in polynomial time, each node of the recursion tree consumes polynomial time and we get an identical runtime as~\Cref{alg:PTAS_MSD_rnd}. Similarly, applying the same cost function modification to \detRecMSD (\Cref{alg:PTAS_MSD_det}) yields a deterministic FPT approximation scheme for \MSD under mergeable constraints, with an identical running time to~\Cref{alg:PTAS_MSD_det}. This completes the proof of~\Cref{thr:mergeable_clustering}.

\section{Faster FPT approximation scheme for \MSR}
\label{sec:faster_msr}
In this section, we improve the dependence on $k$ in the runtime of our approximation scheme from $(\frac{1}{\eps})^{O(k^2 + \nicefrac{k}{\eps} \log \nicefrac{1}{\eps})}$ to $(\frac{1}{\eps})^{O(\nicefrac{1}{\eps} \log \nicefrac{k}{\eps})}$. 

Consider a subset $S$ of points, an upper bound $t$ on the number of clusters, and the level $\ell$ of the subproblem. Let us call a defensive ball \emph{good} if it has an insignificant core and \emph{bad} otherwise. In case there exists a good defensive ball, in~\Cref{alg:rec_msr}, we included an arbitrary one in our solution and then recursively solved (at the same level $\ell$) for the set of uncovered points. Another place where we made calls at the same level is when solving for the outside subproblem of the expanded ball $E$. Both of these lead to $(\frac{1}{\eps})^{O(k)}$ recursive calls at the same level $\ell$(with a smaller $t$) and overall this leads to an overhead of $(\frac{1}{\eps})^{O(k^2)}$. In order to remove this quadratic term in $k$ in the exponent, we will now directly divide into subproblems at level $\ell + 1$ rather than subproblems at level $\ell$ as well as $\ell + 1$.
\begin{algorithm}[h]
  \caption{{\tt getTemporarySubproblems}}
  \label{alg:expanded_balls_new}
  \DontPrintSemicolon
  \textbf{Global:} The metric space $(X, d)$, the parameters $k, M, L$.\;
  \KwIn{$(\{D_1, D_2, \ldots,D_{t'}\}, (k_1, k_2, \ldots,k_{t'}))$ and the subset $G$ of the indices of good defensive clusters.}
  \KwOut{The set of chosen balls, along with the remaining subproblems.}
  $\cB \gets \{D_j: j \in G\}$ \tcp*{chosen balls, initialized as good defensive balls.}
  $A \gets [t'] \setminus G$ \tcp*{available defensive balls}
  $p \gets 1, P \gets \{\}$ \tcp*{index of the next temporary subproblem and the set of temporary subproblems}
  \While{$A \neq \emptyset$}{
    Let $j$ be an arbitrary index in $A$\;
    $E_p \gets D_j, c \gets c(D_j), r \gets r_j$\tcp*{initialize $E_p$}
    $J_p \gets \{i \in A: D_i \subseteq E_p\}$ \tcp*{available defensive balls fully inside $E_p$}
    $J'_p \gets \{i \in A \setminus J_p: D_i \cap E_p \neq \emptyset\}$ \tcp*{available defensive balls on the boundary of $E_p$}
    \While{$J'_p \neq \emptyset$}{
        $r_{sum} \gets \sum_{i \in J'_p} r_i$\;
        $r_{\max} \gets \max_{i \in J'_p} r_i$\;
        \If{$r_{sum} \ge 8 r_{\max}$ or $r_{\max} \ge \eps^2 \cdot r(E_p)$}{
           $r \gets r +2r_{\max}$, $E_p\gets \Ball(c,r)$ \tcp*{cheap expansion or large expansion}
           $J_p \gets \{i \in A: D_i \subseteq E_p\}$\;
           $J'_p \gets \{i \in A \setminus J_p: D_i \cap E_p \neq \emptyset\}$\;
        }
        \Else{
            \textbf{break}\;
        }
    }
    $\cB \gets \cB \cup \{D_j: j \in J'_p\}$\label{l:add_boundary_balls} \tcp*{add the boundary balls to $\cB$.}
    $k'_p \gets \sum_{j \in J_p} k_j$ \tcp*{number of optimal balls defended by defensive balls with index in $J_p$}
    $S_p \gets ((S \cap E_p) \setminus \bigcup_{B \in \cB} B) \setminus \bigcup_{q=1}^{p-1} E_q \label{l:points_subproblem}$\\
    $P \gets P \cup \{(S_p, k'_p)\}$\tcp*{the $p^\text{th}$ temporary subproblem}
    $p \gets p + 1$\\
    $A \gets A \setminus (J_{p} \cup J'_p)$ \tcp*{Mark the defensive inside or on the boundary of $E_p$ as unavailable}
  }
  \Return $\cB, P$
\end{algorithm}

Given a defensive clustering, our modified algorithm {\tt recMSR2} (with pseudo-code in \Cref{alg:new_rec_msr}) first guesses (by iterating over all $O(2^k)$ choices) the set of defensive balls with insignificant cores and includes all of them in our chosen set of balls $\cB$. Then, it creates a sequence of expanded balls $E_1, E_2, \ldots$ as follows (the pseudocode for this process is given in \Cref{alg:expanded_balls_new}) -- Suppose we have already removed the good defensive balls and created a sequence $E_1, E_2, \ldots,E_{p-1}$. Initially, $p = 1$ and we mark every bad defensive ball as \emph{available}. While there exists an available defensive ball, we initialize $E_p$ as this defensive ball, and then expand it as in~\Cref{alg:expanded_ball}, while only considering available defensive balls. At the end, when no further expansion is possible, we mark each defensive ball inside $E_p$ or on the boundary of $E_p$ as unavailable and increase $p$ by $1$. This gives us a temporary subproblem, consisting of points $S_p$ inside $E_p$ that were not inside any of the $E_i$ for any $i < p$, and were also not inside any of the boundary balls (see line~\ref{l:points_subproblem}). The reason we call this a ``temporary'' subproblem is that we will not include the boundary balls directly in our final solution. Rather we will first expand them as in~\Cref{lem:expansion_for_central_property} in order to satisfy the central property. This results in further pruning of $S_p$.

Consider the set $\cB$ of chosen balls and the temporary subproblems $\{(S_1, k'_1), \ldots, (S_m, k'_m)\}$ corresponding to the expanded balls $E_1, E_2, \ldots, E_m$ respectively. For each $p \in [m]$, let $J_p$ be the set of the defensive balls inside $E_p$ that were available at the time of its creation and $k'_p = \sum_{j \in J_p} k_j$ be the number of optimal balls defended by the defensive balls with index in $J_p$. Let $\eps$ be a small enough constant and recall that $M = \ceil{\frac{1}{\eps^2}}$. We get the following lemma (analogous to~\Cref{lem:constant_balls_cover_constant_fraction}):

\begin{lemma}
\label{lem:constant_frac_new}
     For each $p \in [m]$, there exists a set of $M'_p = \min(M, k'_p)$ optimal balls whose defender balls have index in $J_p$, such that the sum of their radii is at least $\frac{\eps}{6} r(E_p)$.
\end{lemma}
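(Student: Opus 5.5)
The plan is to redo the proof of \Cref{lem:constant_balls_cover_constant_fraction} for each expanded ball $E_p$ separately. The expansion of $E_p$ in {\tt getTemporarySubproblems} is the same procedure as {\tt expandedBall}, except that it only looks at available defensive balls. So the task is to check that each ingredient of the earlier proof survives this restriction.

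\textbf{Step 1 (analogue of \Cref{lem:mostly_large_expansions}).} Define $R^{(p)}_{\text{cheap}}$ and $R^{(p)}_{\text{large}}$ for $E_p$, counting the initialization $E_p \gets D_j$ as a large expansion of nonzero radius. This radius is nonzero because $D_j$ is bad, so its core is significant and has positive radius.
\begin{itemize}
\item A cheap expansion pays $2r_{\max} \le r_{\text{sum}}/4$. The balls in $J'_p$ at that moment are available and each defends optimal balls of total radius at least half its own radius.
\item After the expansion these defensive balls lie inside $E_p$, so they end up in $J_p$: they stay available until $E_p$ is finished, and $J_p$ is recomputed over $A$.
\item Different cheap expansions absorb disjoint sets of defensive balls, and the sets $I(j)$ are disjoint.
\end{itemize}
Hence $R^{(p)}_{\text{cheap}} \le \frac12 \sum_{j\in J_p}\sum_{i\in I(j)} r_i^\star$. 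All these optimal balls lie inside $E_p$, so optimality of $\OPT$ gives the bound $\frac12 r(E_p)$. Therefore $R^{(p)}_{\text{large}} \ge r(E_p)/2$.

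\textbf{Step 2 (counting large expansions).} Each large expansion by $2r_{\max}$ is caused by an available boundary ball of radius $r_{\max}$. That ball is swallowed by the expansion, so it belongs to $J_p$. Distinct large expansions therefore correspond to distinct indices in $J_p$, and in particular their number $M''$ is at most $|J_p| \le k'_p$.

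\textbf{Step 3 (the two cases).} I then split into the same two cases as before.
\begin{itemize}
\item If $M'' > M$, take the last $M$ large expansions, starting from radius $R_0>0$. Each multiplies the radius by at least $1+2\eps^2$, so the causing balls have total radius $R' \ge \frac{R_0}{2}((1+2\eps^2)^M-1) \ge 2R_0$. Combined with $R^{(p)}_{\text{large}} \le R_0 + 2R'$, this gives $R' \ge \frac{2}{5}R^{(p)}_{\text{large}} \ge r(E_p)/5$.
\item If $M'' \le M$, the balls causing all large expansions have total radius at least $R^{(p)}_{\text{large}}/2 \ge r(E_p)/4$.
\end{itemize}
In either case, the causing defensive balls are bad and have indices in $J_p$. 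Their cores satisfy $r^\star_{\kappa(j)} > \eps\sum_{i\in I(j)} r^\star_i \ge \frac{\eps}{1+\eps} r_j$, using $r_j \le r^\star_{\kappa(j)}+\sum_{i\in I(j)}r^\star_i \le (1+1/\eps) r^\star_{\kappa(j)}$. So the cores have total radius at least $\frac{\eps}{6}r(E_p)$. The cores are distinct optimal balls with defenders in $J_p$, and there are at most $\min(M,k'_p)$ of them. I would pad with further optimal balls defended by $J_p$ to reach exactly $M'_p$ balls.

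\textbf{Main obstacle.} The delicate point is Step 1: making sure the charging in the cheap case only uses balls that end up in $J_p$, and never balls that were already made unavailable by an earlier $E_q$. The availability restriction in the definitions of $J_p$ and $J'_p$ is exactly what guarantees this, so I expect the argument to go through with the earlier constants unchanged.
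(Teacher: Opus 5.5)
Your proposal is correct and follows the paper's approach. The paper's proof just says that the argument of \Cref{lem:constant_balls_cover_constant_fraction} carries over once one restricts to the available balls used to build $E_p$, all of which are bad because the good balls were removed at the outset; you carry out exactly this check, including that cheaply absorbed balls stay available and end up in $J_p$.

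One small slip: the middle term of your chain $r^\star_{\kappa(j)} > \eps\sum_{i\in I(j)} r^\star_i \ge \frac{\eps}{1+\eps} r_j$ is false in general (e.g.\ $r_j = 2r^\star_{\kappa(j)}$ with $I(j)=\{\kappa(j)\}$). The bound you actually need, $r^\star_{\kappa(j)} > \frac{\eps}{1+\eps} r_j$, still follows directly from the justification you give.
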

The proof is similar to~\Cref{lem:constant_balls_cover_constant_fraction} after restricting attention to the available balls used to build $E_p$ and observing that all such balls are bad (have a significant core) since the good balls were removed from the available set at the outset.

Similarly, we get the following lemma (analogous to~\Cref{lem:S_in_out_properties}):
\begin{lemma}
    \label{lem:S_p_properties}
    Suppose that the clustering $(\{D_1, D_2, \ldots, D_{t'}\}, (k_1, k_2, \ldots, k_{t'}))$ chosen in line~\ref{l:iterate_over_clusterings} is a defensive clustering of $\OPT(S)$. Then, for every $p \in [m]$, any optimal ball that contains a point from $S_p$ must be defended by a defensive ball with index in $J_p$ and therefore:
    \begin{itemize}
        \item $\OPT(S_{p_1}) \cap \OPT(S_{p_2}) = \emptyset$ for all $p_1 \neq p_2 \in [m]$.
        \item $|\OPT(S_p)| \le k'_p$ for all $p \in [m]$.
        \item $\opt(S_p) \le r(E_p)$ for all $p \in [m]$.
    \end{itemize}
\end{lemma}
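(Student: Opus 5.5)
Fix $p\in[m]$ and an optimal ball $B_i^\star\in\OPT(S_p)$ with defender $D_j$. The plan is to show $j\in J_p$. The three bulleted properties then follow exactly as in \Cref{lem:S_in_out_properties}.

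First, $j\notin G$. Indeed $B_i^\star\subseteq D_j$, and every good defensive ball was placed in $\cB$ at the start of {\tt getTemporarySubproblems}. Points of $S_p$ avoid every ball in $\cB$ by line~\ref{l:points_subproblem}, so $B_i^\star$ would contain no point of $S_p$. Hence $D_j$ is bad and starts out available.

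Next, $D_j$ cannot be a boundary ball $J'_q$ for any $q$. All such balls are added to $\cB$ by line~\ref{l:add_boundary_balls}. The set $\cB$ only grows, and the final $\cB$ is what gets removed, so once more $B_i^\star\cap S_p=\emptyset$. This uses the observation that $S_p$ should be read with respect to the final $\cB$; after the central-property expansions it is only pruned further, which only helps.

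It remains to locate $D_j$. Take a point $x\in B_i^\star\cap S_p$. Then $x\in E_p$ and $x\notin E_q$ for $q<p$.

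Suppose first that $D_j$ was made unavailable at some step $q<p$. Since $j\notin J'_q$, we get $j\in J_q$, so $x\in D_j\subseteq E_q$. This is a contradiction.

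So $D_j$ is still available when $E_p$ is built. Since $x\in D_j\cap E_p$, the ball $D_j$ is either fully inside $E_p$ or on its boundary. Note that $J_p$ and $J'_p$ are computed with respect to the available set $A$ at that time, and $j\in A$. The boundary case is already excluded, so $j\in J_p$.

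The main obstacle is this bookkeeping. We must argue that availability is monotone, and that $S_p$ excludes points of earlier expanded balls $E_q$ but not points of later ones. The later ones are harmless, because the case $q>p$ never arises in the argument.

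Given the claim, the three properties follow.

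\textbf{Disjointness.} If $B_i^\star$ met both $S_{p_1}$ and $S_{p_2}$, its unique defender would lie in $J_{p_1}\cap J_{p_2}$. But these sets are disjoint, because indices in $J_{p_1}$ are removed from $A$ before later expanded balls are formed.

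\textbf{Cardinality.} Distinct balls of $\OPT(S_p)$ have their defenders in $J_p$. Since the sets $I(j)$ partition the indices, $|\OPT(S_p)|\le\sum_{j\in J_p}k_j=k'_p$.

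\textbf{Cost.} Every ball of $\OPT(S_p)$ lies inside some $D_j\subseteq E_p$. If the sum of their radii exceeded $r(E_p)$, replacing them by $E_p$ would yield a cheaper solution, contradicting the optimality of $\OPT$.
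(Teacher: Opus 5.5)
Your proof is correct and is essentially the paper's argument. You exclude $j\in G$ and any $j$ removed at an earlier step $q<p$, since either would put $x$ in a ball of $\cB$ or in $\bigcup_{q<p}E_q$. You then conclude that $D_j$ is still available and meets $E_p$, rule out $j\in J'_p$, and derive the three bullets as in \Cref{lem:S_in_out_properties}.

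One small correction: you need not read $S_p$ relative to the final $\cB$. The temporary $S_p$ in {\tt getTemporarySubproblems} only removes the balls indexed by $G\cup J'_1\cup\dots\cup J'_p$. Your argument only ever uses $j\notin J'_q$ for $q\le p$, so it applies verbatim to that set, and a fortiori to the pruned one.
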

\begin{proof}
    Consider an optimal ball $B_i^\star$ that contains a point $x \in S_p$ and let $D_j$ be its defender. Note that $S_p$ is defined as $((S \cap E_p)\setminus \bigcup_{B \in \cB} B) \setminus \bigcup_{i=1}^{p-1}E_i$, where $\cB = \{B_j: j \in G \cup \bigcup_{i=1}^{p} J'_i\}$ is the union of the good defensive balls and the set of boundary defensive balls just after $E_p$ is created (line \ref{l:add_boundary_balls}). First, we claim that $D_j$ must be available when $E_p$ is built. If it was not available, it must have been either a good defensive ball ($j \in G$), or inside/on the boundary of some $E_q$ for $q < p$. In all these cases, $D_j$ would either be in $\cB$ or fully contained within $\bigcup_{q=1}^{p-1} E_q$. Since $x \in B_i^\star \subseteq D_j$, this contradicts $x \in S_p$. Thus, $D_j$ must have been available. Also, $D_j$ can not be fully outside $E$ since it contains a point from $S_p \subseteq E_p$. Also, it can not be on the boundary of $E_p$ since that would mean $D_j \in \cB$, and $S_p \cap B =
    \emptyset$ for all $B \in \cB$ by definition. Thus, $D_j$ must be fully inside $E_p$. By definition, $J_p$ denotes the set of indices of available defensive balls fully inside $E_p$, and hence $j \in J_p$. Once this is known, all three conclusions follow exactly as in~\Cref{lem:S_in_out_properties}.
\end{proof}

\begin{algorithm}[h]
  \caption{{\tt recMSR2} procedure}
  \label{alg:new_rec_msr}
  \DontPrintSemicolon
  \textbf{Global:} The metric space $(X, d)$, the parameters $k, M, L$\;
  \KwIn{A subset $S\subseteq X$, $t \in [k]$, $\ell \in \{0,\ldots,L\}$}
  \KwOut{A clustering of $S$ with at most $t$ balls.}
  \If{$S = \emptyset$}{\Return $\emptyset$\;}
  $\cC \gets {\tt CandidateClusterings}(S, t)$\;

  $\cA^\star \gets $ the cheapest clustering from $\cC$\\
  \If{$\ell = L$}{
    \Return $\cA^\star$;
  }
  \For{\label{l:iterate_over_clusterings}each defensive clustering $(\{D_1, D_2, \ldots,D_{t'}\}, (k_1, k_2, \ldots,k_{t'})) \in \cC$}{
    \For{each $G \subseteq [t']$}{
            $\cB, \{(S_1, k'_1), (S_2, k'_2), \ldots, (S_m, k'_m)\} \gets {\tt getTemporarySubproblems}((\{D_1, \ldots, D_{t'}\}, (k_1, \ldots,k_{t'})), G)$\\
            Let $J \gets \{j_1, j_2, \ldots,j_s\}$ be the set of indices of balls in $\cB$.\\
            \For{tuples $(r'_1, r'_2, \ldots,r'_{s})$ of non negative integers with sum $\leq \lfloor \frac{8k}{\eps}\rfloor$}{
                $D'_q \gets \Ball(c(D_{j_q}), r(D_{j_q}) + r'_q)$ for all $q \in [s]$\\
                $\cB' \gets \{D'_1, D'_2, \ldots,D'_s\}$\label{l:boundary_expanded_balls2}\\
                $\cA \gets \cB'$\\
                \For{$p \in [m]$}{
                    $S_p \gets S_p \setminus \bigcup_{q \in [s]} D'_q$\label{l:pruning_S_p}\\
                    $\cB_p^\star \gets {\tt guessAndSolveMSR2}(S_p, k'_p, \ell + 1)$\\
                    $\cA \gets \cA \cup \cB_p^\star$
                }
                \If{$\cost(\cA) < \cost(\cA^\star)$}{
                    $\cA^\star \gets \cA$
                }
            }
        }
    }
  \Return the best solution from $\cA^\star$;
\end{algorithm}

We now describe how we solve the subproblems returned by~\Cref{alg:expanded_balls_new}. Let $J = G \cup J'_1 \cup \ldots,J'_m$ denote the set of indices of the defensive balls in $\cB$, where $G$ is the set of good defensive balls, and $\bigcup_{p \in [m]} J'_p$ is the set of (indices of) the boundary balls chosen by~\Cref{alg:expanded_balls_new} (in line~\ref{l:add_boundary_balls}). In particular $J'_p$ is the set of indices of the defensive balls on the boundary of $E_p$, that were available at the time of creation of $E_p$. First, we expand the balls in $\cB$ as in~\Cref{lem:expansion_for_central_property} to get a set $\cB'$ (line \ref{l:boundary_expanded_balls2}) of balls that satisfy the central property such that:
\begin{equation}
\label{eqn:Bcost}
    \cost(\cB') \le \sum_{j \in J} r_{\kappa(j)}^\star + \sum_{i \in T} r_i^\star.
\end{equation}
where $T$ denotes the set of (the indices of) optimal balls fully covered by $\cB'$. For each subproblem $p$, we update $S_p$ by removing from it the additional points covered during the conversion from $\cB$ to $\cB'$ (line~\ref{l:pruning_S_p}). Note that, the statements of~\Cref{lem:S_p_properties} hold even after this pruning, since no new optimal balls are added to $\OPT(S_p)$ upon the pruning of $S_p$. We solve each subproblem $p$ independently by making a call to {\tt guessAndSolveMSR2} (\Cref{alg:guess_rec_msr2}) at level $\ell + 1$, in which we first guess the $M'_p = \min(M, k'_p)$ largest optimal balls in $\OPT(S_p)$, and then cover the remaining points using at most $k'_p - M'_p$ clusters via a recursive call to ${\tt recMSR2}$ (at level $\ell + 1$). The total number of balls in the chosen set $\cB$ is exactly $|G| + \sum_{p=1}^m |J'_p|$. Each of these balls is a defensive ball, and thus corresponds to a distinct optimal ball that it defends. For each subproblem $p$, we allocate a budget of $k'_p = \sum_{j \in J_p} k_j$, which is exactly the number of optimal balls defended by the defensive balls with indices in $J_p$. Since the sets $G$, $J'_1, \ldots, J'_m$, and $J_1, \ldots, J_m$ are mutually disjoint subsets of the indices of the defensive clusters (which in turn defend disjoint sets of optimal balls), the total number of balls opened across all subproblems and the set $\cB'$ is at most $\sum_{j=1}^{t'} k_j = t$.

\begin{algorithm}[h]
  \caption{{\tt guessAndSolveMSR2} procedure}
  \label{alg:guess_rec_msr2}
  \DontPrintSemicolon
  \textbf{Global:} The metric space $(X, d)$, the parameters $k, M, L$\;
  \KwIn{A non empty subset $S\subseteq X$, $t \in [k]$, $\ell \in \{1,\ldots,L\}$}
  \KwOut{A clustering of $S$ with at most $t$ balls.}
  \lIf{$|S| = 0$}{
    \Return $\emptyset$
  }
  $\cA^\star \gets null$ \tcp*{assume $\cost(null) = \infty$}
  balls $\gets \{\Ball(p, r): p \in S, r \in \{d(p, q): q \in X\}\}$ \tcp*{set of all possible balls}
  \For{tuples $(B''_1, B''_2, \ldots,B''_{M'})$ of balls with $M' = \min(t, M)$\label{l:guessing_loop2}}{
    $\cA \gets \{B''_1, \ldots,B''_{M'}\} \cup {\tt recMSR2}(S \setminus \bigcup_{j \in [M']} B''_j, t - M', \ell)$\label{l:solution_p}\;
    \If{the cost of $\cA$ is less than the cost of $\cA^\star$}{
      $\cA^\star \gets \cA$\;
    }
  }
  \Return $\cA^\star$\;
\end{algorithm}

\subsection{Approximation factor analysis}
We will prove the following upper bound on the approximation factor:
\begin{lemma}
    \label{lem:faster_msr_approx}
    For each $\ell \in \{0, 1, \ldots,L\}$ and each valid pair $(S, t)$, the cost of the solution returned by ${\tt recMSR2}(S, t, \ell)$ is at most $\rho(\ell)\cdot \opt(S)$ where $\rho(\ell) = 1 + 25 \eps + (1 - \frac{\eps}{6})^{L-\ell}.$
\end{lemma}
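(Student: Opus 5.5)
We will mirror the proof of \Cref{lem:approx_ratio}, using induction on $|S| + L - \ell$. The base cases are the same. When $S=\emptyset$ the cost is $0$. When $\ell = L$ we return the cheapest candidate from $\cC$; by \Cref{lem:small_set} this set contains a defensive clustering, which is a $2$-approximation, and $\rho(L)\ge 2$.

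For $\ell<L$, fix the iteration of line~\ref{l:iterate_over_clusterings} in which the defensive clustering of $\OPT(S)$ is chosen. Take $G$ to be exactly the set of indices of good balls (those with insignificant cores), and take the expansion radii that realize the thought experiment of \Cref{lem:expansion_for_central_property}. Then $\cB'$ satisfies the central property and \eqref{eqn:Bcost} holds.

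We split the core term $\sum_{j\in J} r^\star_{\kappa(j)}$ into two parts.
\begin{itemize}
\item For $j\in G$, we have $r^\star_{\kappa(j)}\le \eps\sum_{i\in I(j)}r_i^\star\le \eps\sum_{i\in I(j)} r_i^\star$, and $I(j)\subseteq T$.
\item For the boundary balls $J'_p$ of $E_p$, the argument of \Cref{lem:small_boundary_core} (run only over available balls) gives $\sum_{j\in J'_p} r^\star_{\kappa(j)}\le 4\eps^2 r(E_p)$.
\end{itemize}
Hence
\[
\cost(\cB')\le \eps\sum_{i\in T}r_i^\star+\sum_{i\in T}r_i^\star+4\eps^2\sum_p r(E_p).
\]

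Next come the subproblems. By \Cref{lem:S_p_properties}, which survives the pruning, and by the central property of $\cB'$, each $(S_p,k'_p)$ is a valid pair, the sets $\OPT(S_p)$ are pairwise disjoint, none of them meets $T$, and $\opt(S_p)\le r(E_p)$. We must also check that every optimal ball of $\OPT(S)$ lies either in $T$ or in some $\OPT(S_p)$, i.e.\ that the $S_p$ together with $\cB'$ cover $S$. This holds because every bad defensive ball is, at its removal from $A$, inside or on the boundary of some $E_p$. The points in the $E_q$ are covered by the $S_q$ (minus chosen balls) for the first such $q$.

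In {\tt guessAndSolveMSR2} we guess the largest $\min(M'_p,|\OPT(S_p)|)$ balls $I'_p$, padded with zero-radius balls exactly as before. The induction hypothesis at level $\ell+1$ then gives
\[
\cost(\cB_p^\star)\le \rho(\ell+1)\,\opt(S_p\setminus\textstyle\bigcup_{i\in I'_p}B_i^\star)+\sum_{i\in I'_p}r_i^\star.
\]
The argument of \Cref{lem:constant_fraction_saved}, using \Cref{lem:constant_frac_new}, yields $\sum_{i\in T_p\cup I'_p}r^\star_i\ge\frac{\eps}{6}r(E_p)$, where $T_p\subseteq T$ denotes the covered optimal balls defended from $J_p$. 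These $T_p$ are disjoint across $p$ and disjoint from $\bigcup_{j\in G}I(j)$.

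To sum up, write $\opt(S)\ge \sum_{i\in T}r_i^\star+\sum_p \opt(S_p)$ and, for each $p$, apply the identical computation of \eqref{eqn:total_cost} and \eqref{eqn:rho_eqn} to the block consisting of $r(E_p)$, $T_p$, $I'_p$ and $S_p$. Each block then costs at most $\rho(\ell)$ times its optimal share, since the term $(\rho(\ell+1)-\rho(\ell)+4\eps^2)r(E_p)$ is absorbed by $(\rho(\ell)-1)\frac{\eps}{6}r(E_p)$. The optimal balls in $T$ that belong to neither a good ball nor any $T_p$ are paid at factor $1$, and those in good balls at $1+\eps\le\rho(\ell)$.

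The main obstacle is this per-$E_p$ bookkeeping. We must ensure that the charges to different expanded balls do not double count. In particular, the $\frac{\eps}{6}r(E_p)$ guarantee must come from balls disjoint across different $p$; this holds because they are defended by the disjoint sets $J_p$. We also need \Cref{lem:mostly_large_expansions} to hold for $E_p$ when only available balls are considered. It does, since its final step bounds the mass of optimal balls inside $E_p$ by $r(E_p)$ via optimality.
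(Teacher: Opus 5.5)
Your proof is correct and is essentially the paper's argument. The paper sums globally, using $\sum_{i\in T\cup I'}r_i^\star-\sum_{j\in G}\sum_{i\in I(j)}r_i^\star\ge\frac{\eps}{6}\sum_p r(E_p)$ (proved by exactly your per-$p$ top-$M'_p$ argument) and $\sum_{j\in J\setminus G}r^\star_{\kappa(j)}\le 4\eps^2\sum_p r(E_p)$; you split the same inequalities into one block per $E_p$, plus the good balls at $1+\eps$ and the remaining covered balls at $1$. One caution: your displayed bound $\cost(\cB')\le(1+\eps)\sum_{i\in T}r_i^\star+4\eps^2\sum_p r(E_p)$ is too lossy to feed into the block computation, since charging the $T_p$ balls at $1+\eps$ turns the slack in \eqref{eqn:rho_eqn} into $-\frac{\eps^2}{36}(1-\frac{\eps}{6})^{L-\ell-1}<0$; the extra $\eps$ must be charged only to $\bigcup_{j\in G}I(j)$, which is what your final paragraph in fact does.
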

\begin{proof}
Our proof will use induction on $|S| + L - \ell$ (analogous to~\Cref{lem:approx_ratio}). The lemma statement trivially holds if $|S| = 0$ since we return an optimal solution in this case, and also if $\ell = L$ since each defensive clustering is a $2$-approximation and $\rho(L) \ge 2$. 

Let us now consider $\ell < L, |S| > 0$. In this case, we obtain a set of defensive balls $\cB$ and a set of temporary subproblems $(S_1, k_1), (S_2, k_2) \ldots, (S_m, k_m)$, corresponding to expanded balls $E_1, E_2, \ldots, E_m$. Here, the set $\cB$ corresponds to indices $J = G \cup J'_1 \cup J'_2 \ldots \cup J'_m$, where $G$ denotes the set of good defensive balls and $J'_p$ denotes the set of boundary defensive balls of $E_p$ (that were available at the time of creation of $E_p$, that is, outside all of $E_1, E_2, \ldots,E_{p-1}$). We then expand the balls in $\cB$ as in~\Cref{lem:expansion_for_central_property} to get a set $\cB'$ of balls that satisfy the central property. Then, we update each $S_p$ by removing the points covered by $\cB'$. Recall from~\Cref{lem:S_p_properties} that $\OPT(S_{p_1})$ and $\OPT(S_{p_2})$ are disjoint for any $p_1 \neq p_2 \in [m]$, and $|\OPT(S_p)| \le k'_p$, $\opt(S_p) \le r(E_p)$ for all $p \in [m]$. Thus, each pair $(S_p, k'_p)$ forms a valid pair since $|\OPT(S_p)| \le k'_p$ and $\cB'$ satisfies the central property. 

To solve the subproblem $S_p$, we consider each tuple $\cB'' = (B''_1, \ldots, B''_{M'_p})$ where $M'_p = \min(k'_p, M)$ and compute ${\tt recMSR2}(S_p \setminus \bigcup_{i \in [M'_p]} B''_i, t - M'_p, \ell + 1)$ (line \ref{l:solution_p}). Let $I'_p$ denote the set of indices of the largest $\min(M'_p, |\OPT(S_p)|)$ optimal balls in $\OPT(S_p)$. Consider the tuple $\cB''$ that consists of all the optimal balls having indices in $I'_p$, along with $M'_p - |I'_p|$ additional balls each having radius $0$ and centered at the center of one of the optimal balls in $I'_p$. Since $(S_p, k'_p)$ is a valid pair, so is $(S_p \setminus \bigcup_{i \in I'_p} B_i^\star, k'_p - M'_p)$ as optimal balls do not contain the centers of other optimal balls and $\OPT(S_p \setminus \bigcup_{i \in I'_p} B_i^\star) \subseteq \OPT(S_p) \setminus \{B_i^\star: i\in I'_p\}$. Thus, by the induction hypothesis the cost of the output $\cB_p^\star$ for the subproblem satisfies:
\begin{equation}
    \label{eqn:B_p_cost}
    \cost(\cB_p^\star) \le \rho(\ell + 1) \opt(S_p \setminus \bigcup_{i \in I'_p} B_i^\star) + \sum_{i \in I'_p} r_i^\star.
\end{equation}

Let $I' = \bigcup_{p \in [m]} I'_p$. From~\Cref{lem:S_p_properties}, note that $I'_p \subseteq \OPT(S_p)$ are all disjoint. Also, let $T$ be the set of indices of optimal balls in $\OPT(S)$ that are fully covered by $\cB'$. Summing up~\Cref{eqn:B_p_cost} and adding $\cost(\cB')$, the final cost of our solution $\cB' \cup \cB_1^\star \cup \ldots \cup \cB_m^\star$ is at most:

\begin{equation}
\label{eqn:begin_analysis}
    \begin{aligned}
    &\cost(\cB') + \sum_{p = 1}^{m} \rho(\ell + 1) \opt(S_p \setminus \bigcup_{i \in I'_p}B_i^\star) + \sum_{i \in I'} r_i^\star\\
    &\leq \sum_{j \in J} r_{\kappa(j)}^\star + \sum_{i \in T} r_i^\star + \sum_{p = 1}^{m} \rho(\ell + 1) \opt(S_p  \setminus \bigcup_{i \in I'_p}B_i^\star) + \sum_{i\in I'}r_i^\star\\
    &\leq \sum_{j \in J} r_{\kappa(j)}^\star + \sum_{i \in T\cup I'} r_i^\star + \rho(\ell)\sum_{p=1}^{m}\opt(S_p \setminus \bigcup_{i \in I'_p}B_i^\star) + (\rho(\ell + 1) - \rho(\ell))\sum_{p=1}^{m} \opt(S_p)\\
    &\leq \sum_{j \in J} r_{\kappa(j)}^\star + \sum_{i \in T\cup I'} r_i^\star + \rho(\ell) \left(\opt(S) - \sum_{i \in T \cup I'} r_i^\star\right) + (\rho(\ell + 1) - \rho(\ell))\sum_{p=1}^{m} \opt(S_p)\\
    &\leq \rho(\ell) \opt(S) + \sum_{j \in J} r_{\kappa(j)}^\star + (\rho(\ell + 1) - \rho(\ell))\sum_{p=1}^{m} r(E_p) - (\rho(\ell) - 1)\sum_{i \in T\cup I'} r_i^\star.
    \end{aligned}
\end{equation}
where the first inequality follows from~\Cref{eqn:Bcost}. The second inequality follows from $\rho(\ell + 1) - \rho(\ell) > 0$ and $\opt(S_p \setminus \bigcup_{i \in I'_p} B_i^\star) \leq \opt(S_p)$. The third inequality follows from the fact that $\OPT(S_p \setminus \bigcup_{i \in I'_p} B_i^\star)$ are all disjoint and the optimal balls with indices in $\bigcup_{i \in T \cup I'}$ do not appear in $\OPT(S_p \setminus \bigcup_{i \in I'_p} B_i^\star)$ for any $p$ (since they were all fully covered and hence all the points within them were removed). The final inequality follows from $\opt(S_p) \le r(E_p)$.

Now, analogous to~\Cref{lem:constant_fraction_saved}, we claim that:
\begin{equation}
\label{eqn:good_coverage_large}
\sum_{i \in T \cup I'} r_i^\star - \sum_{j \in G}\sum_{i \in I(j)} r_i^\star \geq \frac{\eps}{6} \sum_{p=1}^{m} r(E_p).
\end{equation}

Indeed, breaking ties by their index in $\OPT(S)$, consider the $M'_p$ largest optimal balls (out of the $k'_p$ total) optimal balls defended by defensive balls with index in $J_p$. Any of these balls (say $B_i^\star$) that is not covered by $\cB'$ must be one of the $\min(M'_p, |\OPT(S_p)|)$ largest optimal balls in $\OPT(S_p)$ according to the same tie breaking rule. Thus, $i \in T \cup I'$. Also, note that $i \notin \bigcup_{j \in G} I(j) \subseteq T$, since $G$ and $J_p$ are disjoint. Then, applying~\Cref{lem:constant_frac_new} gives~\Cref{eqn:good_coverage_large}.

Also, for the set of boundary balls $J'_p$ of $E_p$ (that were available at the time of creation of $E_p$), we must have(analogous to~\Cref{lem:small_boundary_core}):
\begin{equation}
    \label{eqn:boundaries_small}
    \sum_{j \in J'_p} r_{\kappa(j)}^\star \leq 4\eps^2 r(E_p).
\end{equation}
Summing~\Cref{lem:small_boundary_core} over $p \in [m]$, we get:
\begin{equation}
\label{eqn:sum_cores_small}
    \sum_{j \in J \setminus G} r_{\kappa(j)}^\star = \sum_{p=1}^{m} \sum_{j \in J'_p} r_{\kappa(j)}^\star \leq 4\eps^2 \sum_{p=1}^{m} r(E_p).
\end{equation}
Also, for any good defensive ball $j \in G$, we have $r_{\kappa(j)}^\star \leq \eps \sum_{i \in I(j)}r_i^\star$. Thus, substituting~\Cref{eqn:good_coverage_large,eqn:sum_cores_small} in~\Cref{eqn:begin_analysis}, we get that the cost of our solution is at most:
\begin{equation*}
    \begin{aligned}
        &\rho(\ell) \opt(S) + \sum_{j \in G} r_{\kappa(j)}^\star + (4\eps^2 + \rho(\ell + 1) - \rho(\ell)) \sum_{p=1}^{m} r(E_p) - (\rho(\ell) - 1)\left(\frac{\eps}{6}\sum_{p=1}^{m} r(E_p) + \sum_{j \in G} \sum_{i \in I(j)} r_i^\star\right)\\
        &= \rho(\ell) \opt(S) + \sum_{j \in G} r_{\kappa(j)}^\star - \left(\frac{\eps}{6} (\rho(\ell) - 1) - (\rho(\ell + 1) - \rho(\ell)) - 4 \eps^2\right)\sum_{p=1}^{m}r(E_p) - (\rho(\ell) - 1)\sum_{j \in G}\sum_{i \in I(j)} r_i^\star\\
        &\leq \rho(\ell) \opt(S) + \sum_{j \in G} \left(r_{\kappa(j)}^\star - (\rho(\ell) - 1)\sum_{i \in I(j)} r_i^\star\right)\\
        &\leq \rho(\ell) \opt(S).
    \end{aligned}
\end{equation*}

where the first inequality follows from~\Cref{eqn:rho_eqn} and the second inequality follows from the fact that each defensive ball with index $j \in G$ has an insignificant core, that is $r_{\kappa(j)}^\star \leq \eps \sum_{i \in I(j)} r_i^\star$ and $\rho(\ell) - 1 \geq 25 \eps$ for all $\ell \in [0, L].$
\end{proof}
\subsection{Time complexity of {\tt recMSR2}}
In any call ${\tt recMSR2}(S, t, \ell)$, we consider $(\frac{1}{\eps})^{O(k)}$ candidate clusterings in the set $\cC$. For each of these clusterings, we iterate over $O(2^k)$ choices for the subset of \emph{good} defensive balls, and form a set of up to $t$ temporary subproblems. Then we iterate over all $(1/\eps)^{O(k)}$ possible expansions of the chosen balls, and then make a call for each subproblem to {\tt guessAndSolveMSR2} at level $\ell + 1$. Each invocation of {\tt guessAndSolveMSR2} makes $O(n^{2M})$ calls to {\tt recMSR2} at level $\ell + 1$ corresponding to the guessed optimal balls. Thus, any level $\ell$ call to ${\tt recMSR2}$ makes $(1/\eps)^{O(k)} \cdot 2^k \cdot (1/\eps)^{O(k)} \cdot t \cdot n^{2M} = (1/\eps)^{O(k)} n^{2M}$ calls to {\tt recMSR2} at level $\ell + 1$. Hence, the recursion tree of {\tt recMSR2} has a branching factor of $(1/\eps)^{O(k)} n^{2M}$ and a depth of $L$ as there are $L$ levels. Substituting $L = \ceil{\frac{6}{\eps} \ln \frac{1}{\eps}}$ and $M = \ceil{\frac{1}{\eps^2}}$, we get that the number of nodes in the recursion tree is bounded by
$$ \left(\left(\frac{1}{\eps}\right)^{O(k)} n^{2M}\right)^L = \left(\frac{1}{\eps}\right)^{O(\nicefrac{k}{\eps} \log \nicefrac{1}{\eps})}  n^{O(\nicefrac{1}{\eps^3} \log \nicefrac{1}{\eps})}.$$

Since we take $(1/\eps)^{O(k)} \cdot n^{2M}$ time per node of the recursion tree, the overall running time of {\tt recMSR2} is also $(\frac{1}{\eps})^{O(\nicefrac{k}{\eps} \log \nicefrac{1}{\eps})} \cdot n^{O(\nicefrac{1}{\eps^3} \log \nicefrac{1}{\eps})}$, thus proving~\Cref{thr:mainMSR}.

{\footnotesize \bibliography{references}}

\end{document}